\newcommand{\rI}{{\mathrm{I}}}
\newcommand{\rII}{{\mathrm{II}}}
\newcommand{\rIII}{{\mathrm{III}}}
\newcommand{\mH}{\ensuremath{\mathcal{H}}}
\newcommand{\mHc}{\ensuremath{\mathcal{H}_c}}
\newcommand{\td}{\ensuremath{\,\text{d}}}
\DeclareMathOperator{\WF}{WF}
\DeclareMathOperator{\supp}{supp}
\newcommand{\bR}{\ensuremath{\mathbb{R}}}
\newcommand{\bC}{\ensuremath{\mathbb{C}}}
\newcommand{\bD}{\ensuremath{\mathbb{D}}}
\newcommand{\bN}{\ensuremath{\mathbb{N}}}
\newcommand{\bP}{\ensuremath{\mathbb{P}}}
\newcommand{\bS}{\ensuremath{\mathbb{S}}}
\newcommand{\mC}{\ensuremath{\mathcal{C}}}
\newcommand{\mD}{\ensuremath{\mathcal{D}}}
\newcommand{\mN}{\ensuremath{\mathcal{N}}}
\newcommand{\norm}[1]{\ensuremath{\left\Vert #1 \right\Vert}}
\newtheorem{thm}{Theorem}[section]
\newtheorem{prop}[thm]{Proposition} 
\newtheorem{lem}[thm]{Lemma}
\theoremstyle{definition}
\newtheorem{defi}{Definition}[section]
\begin{document}

\title{Construction of the Unruh State for a Real Scalar Field on the Kerr-de Sitter Spacetime}
\author{Christiane K.M. Klein}
\email{klein@itp.uni-leipzig.de}
\affiliation{Institut f\"ur Theoretische Physik, Universit\"at Leipzig,\\ Br\"uderstra{\ss}e 16, 04103 Leipzig, Germany}

\begin{abstract}
The study of physical  effects of quatum fields in black hole spacetimes, which is related to questions such as the validity of the strong cosmic censorship conjecture, requires a Hadamard state describing the physical situation.
Here, we consider the theory of a free scalar field on a Kerr-de Sitter spacetime, focussing on spacetimes with sufficiently small angular momentum of the black hole and sufficiently small cosmological constant. We demonstrate that an extension of the Unruh state, which describes the expected late-time behaviour in spherically symmetric gravitational collapse, can be rigorously constructed for the free scalar field on such Kerr-de Sitter spacetimes. In addition, we show that this extension of the Unruh state is a Hadamard state in the black hole exterior and in the black hole interior up to the inner horizon. This provides a physically motivated Hadamard state for the study of free scalar fields in rotating black hole spacetimes.
\end{abstract}

\maketitle

\section{Introduction}\label{sec1}

Recently, there has been a renewed interest in the behaviour of quantum fields in black hole spacetimes of charged or rotating black holes \cite{Ottewill:2000, Levi:2016, Lanir:2017, Sela:2018, Hollands:2019, Hollands:2020, Zilberman:2019, Klein:2021, Zilberman:2021, Zilberman:2022a, Zilberman:2022b}. The behaviour of the field near the inner horizon is particularly interesting, because it is connected to the strong cosmic censorship conjecture \cite{Penrose:1974, Christodoulou:2008}, which holds in the linear regime classically for Kerr-de Sitter \cite{Dias:2018}, but is violated in Reissner-Nordström-de Sitter \cite{Cardoso:2017, Cardoso:2018, Dias:2018a}.

An important open question in theoretical physics today is the merger of general relativity and quantum field theory into a theory of quantum gravity. While no such theory is known as of yet, one step towards it is restricting its possible low-energy behaviour by studying quantum field theory on curved spacetimes. 

One feature of quantum field theory in curved spacetimes is that even for a free scalar field, there is no unique ground state on a generic curved spacetime (see \cite{Fewster:2019} for the definition of a ground state in the algebraic framework). Thus, there is also no preferred Fock space build from (finite) excitations of such a state. More generally, it is no longer clear which of the unitarily inequivalent Hilbert space representations one should choose for the quantum theory \cite{Wald:1995}.

When studying a physical effect of some quantum field in a curved spacetime, an important first step is the identification of a quantum state or a class of quantum states which adequately describes the given physical situation. This implies that the state should satisfy the Hadamard property. This property is a regularity requirement which is necessary to allow for the assignment of finite expectation values  with finite fluctuations to non-linear observables \cite{Hollands:2001,Hollands:2001b}, for example the stress-energy tensor of the quantum field. 

For scalar quantum fields on Schwarzschild spacetimes, there are two well-studied options: the Hartle-Hawking state \cite{Hartle:1976,Israel:1976,Sanders:2013,Gerard:2021} and the Unruh state \cite{Unruh:1976, Dappiaggi:2009}. The Hartle-Hawking state is a global thermal equilibrium state at the Hawking temperature of the black hole, which corresponds to the black hole's surface gravity divided by $2\pi$.
 
In contrast, the Unruh state is not a thermal equilibrium state. It is a stationary state that can be thought of as describing a hot body, namely the black hole, immersed in vacuum. In particular, it contains no particles coming from past null infinity, while at future null infinity, one finds an energy flux consistent with black-body radiation at the Hawking temperature. Due to this, it is generally considered to be the appropriate state for the description of spherically symmetric gravitational collapse \cite{Candelas:1980, Balbinot:1984, Balbinot:2000, Dappiaggi:2009}.

Physically, it is clear how to construct the Unruh state \cite{Unruh:1976} and its analogues on other black hole spacetimes. However, the rigorous proof of their existence and Hadamard property are quite difficult.

So far, analogues of the Unruh state have been costructed, includig a proof of their Hadamard property, on Schwarzschild de-Sitter \cite{Brum:2014} and Reissner-Nordström-de Sitter \cite{Hollands:2019} spacetimes, as well as for massless fermions on slowly rotating Kerr spacetimes \cite{Gerard:2020}. But to our knowledge, an analogue of the Unruh state on Kerr or Kerr-de Sitter spacetimes for scalar fields has not been rigorously constructed as of yet.

One of the main difficulties in extending the previous results for the scalar field to spacetimes with rotating black holes is that, due to the appearance of the ergosphere, the exterior region of the Kerr(-de Sitter) spacetime is not static. The static nature of the black hole exterior region is necessary for the proof of the Hadamard property of the Unruh state in the form given in \cite{Dappiaggi:2009, Brum:2014} for the Schwarzschild (-de Sitter) spacetime. Hence, a direct adaptation of this proof is not possible.  

In this paper, we will define the Unruh state on Kerr-de Sitter spacetimes with sufficiently slow rotation and sufficiently small cosmological constant and prove its Hadamard property. We will combine the techniques used in \cite{Dappiaggi:2009,Brum:2014, Hollands:2019}, with ideas developed in \cite{Gerard:2020}, which we generalize to the Kerr-de Sitter spacetime. These ideas enable us to prove the Hadamard condition in some subregion of the black hole exterior. A careful analysis of some arguments from \cite{Hollands:2019} then allow us to extend the proof to the whole spacetime.

The rest of the paper is organised as follows. In section \ref{sec:GemSet} we introduce the geometric setup of the spacetime.  Section \ref{sec:scalar field} introduces the scalar field. The Unruh state is defined in section \ref{sec:Unruh} and its Hadamard property is shown in section \ref{sec:Had}. We briefly summarize in section \ref{sec:sum}.
Throughout the paper we work in geometrical units $\hbar=c=G=k_B=1$.

\section{Geometric setup}
\label{sec:GemSet}
In this paper, we are considering an axisymmetric, rotating, non-charged black hole in the presence of a positive cosmological constant $\Lambda$. The cosmological constant $\Lambda$, as well as the black hole mass M and the angular momentum parameter $a$ should be chosen in such a way that the function
\begin{align}
\label{eq:Delta_r}
 \Delta_r&=(1-\lambda r^2)(r^2+a^2)-2Mr\, ,
\end{align}
$\lambda=\Lambda/3$, has three distinct real, positive roots $r_-<r_+<r_c$. In particular, we set $M=1$. The admissible parameter range in the $(a,\lambda)$-plane is depicted in Figure \ref{fig:parRegs}. Here, we consider $0<\lambda<1/27$ and $0<a$ sufficiently small. In this case, the Boyer-Lindquist blocks  $\rI=\bR_t\times(r_+,r_c)\times (\bS^2_{\theta,\varphi})$, $\rII=\bR_t\times(r_-,r_+)\times (\bS^2_{\theta,\varphi})$, and  $\rIII=\bR_t\times(r_c,\infty)\times (\bS^2_{\theta,\varphi})$ are all non-empty. The regions $\rI$ and $\rIII$ are the exterior of the black hole, with $\rIII$ being the region beyond the cosmological horizon. The region $\rII$ is the interior of the black hole up to its inner horizon.

The metric on these blocks in Boyer-Lindquist coordinates $(t,r,\theta,\varphi)$ is given by \footnote{This coordinate system does not cover the axis where $\sin \theta =0$. However, it can be shown that this metric can be extended to the axis \cite{Borthwick:2018}.}
\begin{align}
    g&=\frac{\Delta_\theta a^2\sin^2\theta-\Delta_r}{\rho^2\chi^2}\td t^2+\left[\Delta_\theta(r^2+a^2)^2-\Delta_ra^2\sin^2\theta\right]\frac{\sin^2\theta}{\rho^2\chi^2}\td\varphi^2 \\
\nonumber &+2\frac{a\sin^2\theta}{\rho^2\chi^2}[\Delta_r-\Delta_\theta(r^2+a^2)]\td t\td\varphi+\frac{\rho^2}{\Delta_r}\td r^2+\frac{\rho^2}{\Delta_\theta}\td \theta^2\, ,
\end{align}
where
\begin{align}	
\label{eq:D_t,rho,chi}
     \Delta_\theta &=  1 + a^2 \lambda \cos^2\theta &  \rho^2&=r^2+a^2\cos^2\theta &  \chi&=1+a^2\lambda\, .
\end{align}
We will chose $\partial_t$ to be future-pointing in the part of $\rI$ where it is timelike.

In order to join multiple of these blocks, one can introduce so-called $KdS*$- and $*KdS$-coordinates, which allow a continuation of the Boyer-Lindquist blocks through the ingoing or outgoing piece of the horizon $\{r=r_i\}$, $i\in\{-,+,c\}$ \cite{Borthwick:2018}. The spacetime we consider will then be the block $\rI$ joined in the $KdS*$-coordinates via $\mH^R\subset\{r=r_+\}$ to block $\rII$ and in the $*KdS$-coordinates via $\mHc^L \subset\{r=r_c\}$ to block $\rIII$. We will refer to this spacetime as $M$.  

The coordinates used for most of the computations in this paper are a combination of the $KdS*$- and $*KdS$-coordinates, tailored to one of the horizons $\{r=r_X\}$. Taking $X \in\{+,c\}$, they are defined by
\begin{align}
\td v_X&= \td t+\frac{\chi(r^2+a^2)}{\Delta_r}\td r\, , & \td u_X&= \td t-\frac{\chi(r^2+a^2)}{\Delta_r}\td r\, ,\\\nonumber
\td \theta_X&=\td\theta\, , & \td \varphi_X&=\td \varphi-\frac{a}{r_X^2+a^2}\td t\, . 
\end{align}
We will also call $r_*(r)$, defined by $\td r_* = \chi (r^2+a^2)/\Delta_r \td r$, the "tortoise coordinate".
The coordinates $u_X$ and $v_X$ range from $-\infty$ to $\infty$ in each of the Boyer-Lindquist blocks. In order to extend through the horizon at $r_X$, we define 
\begin{align}
\kappa_X=\frac{1}{2\chi(r_X^2+a^2)}\vert\partial_r\Delta_r\vert_{r=r_X}\, .
\end{align}
We can then construct Kruskal-type coordinates. On $\rI$, they are given by
\begin{align}
U_+&=-e^{-\kappa_+u_+}\, , & V_+&= e^{\kappa_+v_+}\, , & U_c&= e^{\kappa_c u_c}\, , &  V_c&=-e^{-\kappa_cv_c}\, .
\end{align}
As a result, $\rI$ corresponds to $\{U_+<0,V_+>0\}\times \bS^2_{\theta_+,\varphi_+}$ (or $\{U_c>0,V_c<0\}\times \bS^2_{\theta_c,\varphi_c}$). Since the metric remains finite and non-degenerate as $U_X,V_X\to 0$, one can extend the spacetime to the Kruskal block $M_X=\bR_{U_X}\times\bR_{V_X}\times\bS^2_{\theta_X,\varphi_X}$ \cite{Borthwick:2018}. We then have $M\cap M_+=\{V_+>0\}$ and $M\cap M_c=\{U_c>0\}$. The submanifolds $\mH=\{V_+=0\}\subset M_+$ and $\mHc=\{U_c=0\}\subset M_c$ will be used later to construct the Unruh state. $\mH$ consists of the three pieces $\mH^L=\{V_+=0, U_+>0\}$, $\mH^-=\{V_+=0, U_+<0\}$ and the bifurcation sphere $\mathcal{B_+}=\{U_+=V_+=0\}$, while $\mHc$ consists of $\mHc^R=\{U_c=0,V_c>0\}$, $\mHc^-=\{U_c=0,V_c<0\}$ and $\mathcal{B}_c=\{U_c=V_c=0\}$. Both $\mH$ and $\mHc$ are part of the manifold $\tilde M=M_+\cup M_c$, where the blocks $\rI$ in $M_+$ and $M_c$ are identified with each other.  Correspondingly, $M$ can be embedded into $\tilde M$.

\begin{figure}
\centering
\includegraphics[scale=1]{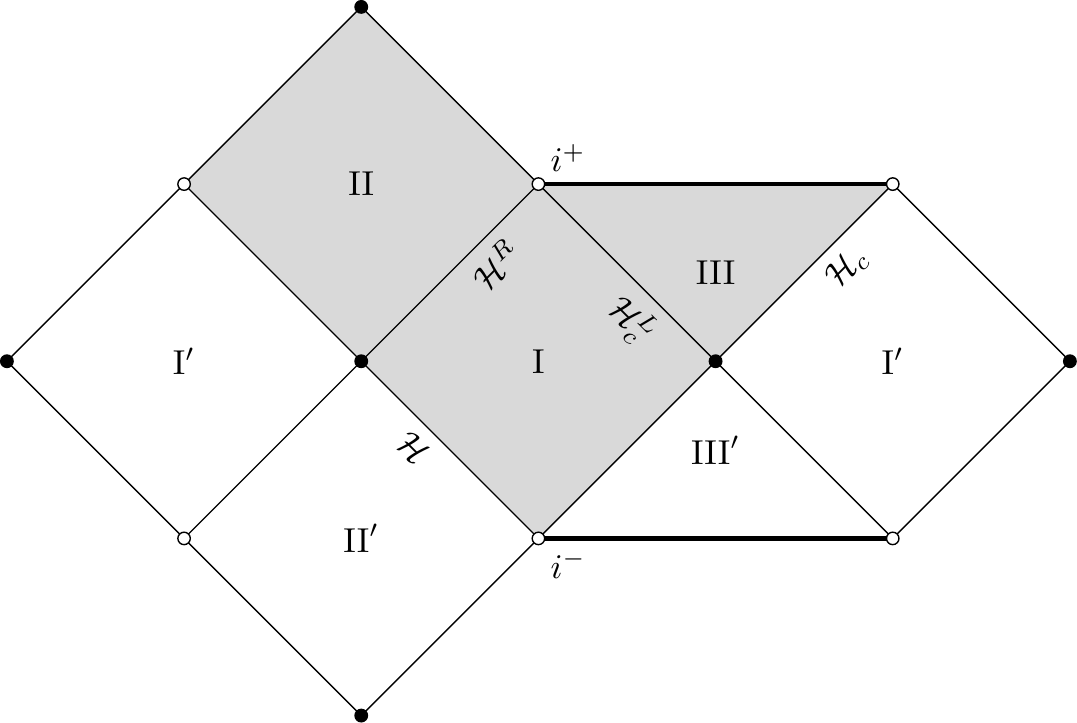}
\caption{Penrose diagram of the $(\theta, \varphi)=\text{const.}$-surface of the extended spacetime $\tilde M$. The gray area corresponds to $M$, the union of the blocks $\rI$, $\rII$ and $\rIII$. The prime indicates a reversal of the time orientation. The horizons $\mH^R$ and $\mHc^L$ are part of $M$, while the long horizons $\mH$ and $\mHc$ are the boundary of $M$ in $\tilde M$.}
\label{Fig:PD}
\end{figure}
The Penrose diagram for $M$ and $\tilde M$ is shown in Fig.~\ref{Fig:PD}.

Before moving on to the scalar field, let us show some results that will become important later on. The first one is a result on covectors on $\mH$, parametrized in Kruskal coordinates. Note that a covector $k\in T_x^* M$ will be called future directed, or future pointing, if $\langle k,v\rangle>0$ for any timelike vector $v$ in the future lightcone $V_x^+$.

\begin{lem}
\label{lem:Hgeo}
Denote by $\psi_+:M_+\to \bR^2\times \bS^2$ the coordinate map of the +-Kruskal coordinates.
If $(U_+,\theta,\varphi_+,\xi,\sigma_\theta,\sigma_\varphi)\in T^*(\bR\times\bS^2)$, then there is a unique $\eta(\xi,\sigma_\theta,\sigma_\varphi)\in\bR$ such that $\psi_+^*(U_+,0,\theta,\varphi_+,\xi,\eta,\sigma_\theta,\sigma_\varphi)$ is null and does not lie in the conormal space of $\mH$, $N^*(\mH)$, iff $\xi\neq 0$. In this case $\psi_+^*(U_+,0,\theta,\varphi_+,\xi,\eta,\sigma_\theta,\sigma_\varphi)$ is future pointing iff $\xi>0$. 
\end{lem}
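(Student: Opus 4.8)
The plan is to carry out the whole analysis in terms of the inverse metric $g^{-1}$ evaluated on $\mH$, since both nullity and the conormal condition are most transparent there. Writing a covector at $(U_+,0,\theta,\varphi_+)$ as $k=\xi\,\td U_++\eta\,\td V_++\sigma_\theta\,\td\theta+\sigma_\varphi\,\td\varphi_+$, the key observation is that the quadratic form $Q(k):=g^{-1}(k,k)=g^{\mu\nu}k_\mu k_\nu$ is only \emph{linear} in $\eta$ on $\mH$, so that the null equation $Q(k)=0$ becomes an affine equation for $\eta$ whose solvability is governed by a single coefficient.

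First I would record the structural properties of $g^{-1}$ at $V_+=0$. Since $\mH=\{V_+=0\}$ is a null hypersurface with conormal $\td V_+$, one has $g^{V_+V_+}=0$ there, and $N^*(\mH)=\bR\,\td V_+$, i.e. a covector is conormal iff $\xi=\sigma_\theta=\sigma_\varphi=0$. Next, because $\varphi_+$ is co-rotating with the horizon, the null generator of $\mH$ is proportional to $\partial_{U_+}$; equivalently $(\td V_+)^\sharp=g^{V_+\mu}\partial_\mu\propto\partial_{U_+}$ on $\mH$, which forces $g^{V_+\theta}=g^{V_+\varphi_+}=0$ and, since this generator is nonzero, $g^{U_+V_+}\neq0$. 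Finally, the induced metric on the sphere sections is Riemannian, so the angular block $(g^{AB})_{A,B\in\{\theta,\varphi_+\}}$ is positive definite. I expect this to be the main obstacle: verifying these statements requires writing the genuinely rotating Kerr--de Sitter metric in the $+$-Kruskal coordinates and taking the limit $V_+\to0$, and in particular checking that the shift to $\varphi_+$ really removes the $g^{V_+\theta},g^{V_+\varphi_+}$ terms. Here the construction of the Kruskal charts and the regularity of the metric across $\mH$ from the cited work of Borthwick, together with the smallness of $a$, do the work.

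Granting these facts, on $\mH$ the null equation collapses to $Q(k)=A(\xi,\sigma_\theta,\sigma_\varphi)+2\,g^{U_+V_+}\,\xi\,\eta=0$, where $A$ contains all the $\xi$- and $\sigma$-terms but no $\eta$. The coefficient of $\eta$ is $2g^{U_+V_+}\xi$, nonzero precisely when $\xi\neq0$. If $\xi\neq0$ there is the unique solution $\eta=-A/(2g^{U_+V_+}\xi)$, and the resulting $k$ has nonvanishing $\td U_+$-component, hence lies outside $N^*(\mH)$. If $\xi=0$ the equation loses its $\eta$-dependence and reduces to $A=g^{\theta\theta}\sigma_\theta^2+2g^{\theta\varphi}\sigma_\theta\sigma_\varphi+g^{\varphi\varphi}\sigma_\varphi^2=0$; positive-definiteness forces $\sigma_\theta=\sigma_\varphi=0$, whence $k=\eta\,\td V_+\in N^*(\mH)$ for every $\eta$, so no admissible (null and non-conormal) covector exists. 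This gives the ``unique $\eta$ iff $\xi\neq0$'' part.

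For the time orientation I would pass to the dual vector $k^\sharp=g^{-1}(k,\cdot)$, whose $\partial_{V_+}$-component on $\mH$ is $(k^\sharp)^{V_+}=g^{U_+V_+}\xi$, the $g^{V_+V_+},g^{V_+\theta},g^{V_+\varphi_+}$ contributions having dropped out. Thus $k^\sharp$ is transverse to $\mH$ exactly when $\xi\neq0$, consistent with the previous step, and which nappe of the null cone it lies on is fixed by the sign of $(k^\sharp)^{V_+}$, i.e. by the sign of $\xi$ once the sign of $g^{U_+V_+}$ is known. A short computation (e.g. from the $-f\,\td u_+\td v_+$ form of the two-dimensional block, with $f=\Delta_r/[\chi^2(r^2+a^2)]$, pushed to Kruskal coordinates) gives $g^{U_+V_+}<0$ on $\mH$. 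It then remains only to match against the convention that $\partial_t$ is future pointing in $\rI$: in the mostly-plus signature a null covector is future pointing iff its dual is past directed, and since future corresponds to increasing $V_+$ across $\mH$, $k$ is future pointing iff $(k^\sharp)^{V_+}<0$, i.e. iff $\xi>0$. Since $g^{U_+V_+}$ is continuous and nonvanishing and the time orientation is continuous on the connected set $\mH$, this sign is uniform, which completes the argument.
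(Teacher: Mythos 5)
Your proposal is correct and takes essentially the same approach as the paper: both rest on the block structure of the metric on $\mH$ (equivalently, $g^{V_+V_+}=g^{V_+\theta}=g^{V_+\varphi_+}=0$ and $g^{U_+V_+}=1/g_{U_+V_+}<0$), so that the null condition is affine in $\eta$ with coefficient proportional to $\xi$, while for $\xi=0$ the positive definite angular part of $g^{-1}$ forces $\sigma_\theta=\sigma_\varphi=0$ and hence conormality. The only cosmetic difference is the time-orientation step, where you read off the sign of the transverse component $(k^\sharp)^{V_+}=g^{U_+V_+}\xi$ relative to the future side $\{V_+>0\}$ of $\mH$, whereas the paper pairs $k$ with the future-pointing null generator $\partial_{U_+}$ and invokes normal coordinates --- equivalent arguments relying on the same geometric input from the Kruskal extension.
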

\begin{proof}
On $\mH$, the metric takes the form (see e.g. \cite{Borthwick:2018})
\begin{align*}
g=g_{VV} \td V_+^2 +2g_{UV}\td U_+ \td V_+ +2 g_{V\varphi} \td \varphi_+ \td V_+ +g_{\theta\theta} \td \theta^2 + g_{\varphi\varphi}\td\varphi_+^2
\end{align*}
for some smooth functions $g_{\mu\nu}$, of which $g_{UV}<0$, $g_{\theta\theta}>0$ and $g_{\varphi\varphi}>0$\footnote{Except for on the axis where $\sin^2\theta=0$. However, the metric remains invertible there, as can be seen by going to appropriate coordinates, compare \cite{Borthwick:2018} and \cite[Rem. 3.3]{Hintz:2015}.}. Thus
\begin{align*}
&g^{-1}((\xi,\eta,\sigma_\theta,\sigma_\varphi),(\xi,\eta,\sigma_\theta,\sigma_\varphi))\\
&=\frac{1}{g_{UV}^2g_{\varphi\varphi}}\left(g_{V\varphi}^2-g_{VV}g_{\varphi\varphi}\right)\xi^2+\frac{2\xi\eta}{g_{UV}}-\frac{2g_{V\varphi}\xi\sigma_{\varphi}}{g_{UV}g_{\varphi\varphi}}+\frac{\sigma_\theta^2}{g_{\theta\theta}}+\frac{\sigma_{\varphi}^2}{g_{\varphi\varphi}}\, .
\end{align*}
If $\xi=0$, then this can only be zero if also $\sigma_\theta=0$ and $\sigma_{\varphi}=0$. But then $(\xi, \eta,\sigma_\theta,\sigma_\varphi)=(0,\eta,0)\in N^*(\mH)$. Hence we must have $\xi\neq 0$. And in turn, if $\xi\neq 0$, then $(\xi, \eta,\sigma_\theta,\sigma_\varphi)$ cannot be in $N^*(\mH)$. Moreover, the null condition can be solved for $\eta(\xi,\sigma_\theta,\sigma_\varphi)$, and since it is linear in $\eta$, there will be a unique solution.
The rest follows from the fact that $\partial_{U_+}$ is a future-pointing null vector on $\mH$, and since $\xi\neq 0$ we have $\langle(\xi,\eta,\sigma_\theta,\sigma_\varphi),\partial_{U_+}\rangle =\xi\neq 0$. By introducing normal coordinates one can then show that $(\xi,\eta,\sigma_\theta,\sigma_\varphi)$ is future pointing iff $\xi>0$.
\end{proof}

The same proof with $U\leftrightarrow V$ and $+\leftrightarrow c$ shows the corresponding statement for covectors on $\mHc$.

Next, we show two results based on the behaviour of null geodesics in $M$. 
There are three constants of motion: The energy $E=-g( \gamma^\prime,\partial_t)$, the angular momentum in the direction of the rotation axis $L=g( \gamma^\prime,\partial_\varphi)$, and the Carter constant $K$ \cite{Carter:1968}. Here, $\gamma^\prime$ is the tangent vector of the geodesic $\gamma$.

With the help of these constants, the geodesic equation can be separated and written as \cite{Hackmann:2010, Salazar:2017, Borthwick:2018}
\begin{subequations}
\label{eq:geod eqn}
\begin{align}
    \rho^4\left(\frac{\td r}{\td \tau}\right)^2&=\chi^2\left[(r^2+a^2)E-aL\right]^2-K\Delta_r\equiv R(r)\\
    \rho^4\left(\frac{\td\theta}{\td\tau}\right)^2&=K\Delta_\theta-\chi^2\left[\frac{L}{\sin\theta}-aE\sin\theta\right]^2\equiv \Theta(\theta)\\
    \rho^2\frac{\td t}{\td\tau}&=\frac{\chi^2(r^2+a^2)\left[(r^2+a^2)E-aL\right]}{\Delta_r}+\frac{\chi^2a(L-Ea\sin^2\theta)}{\Delta_\theta}\\
    \rho^2\frac{\td \varphi}{\td\tau}&=\frac{\chi^2a\left[(r^2+a^2)E-aL\right]}{\Delta_r}+\frac{\chi^2a\left(E-\frac{L}{a\sin^2\theta}\right)}{\Delta_\theta}
\end{align}
\end{subequations}
for light-like geodesics, which entails $K\geq 0$. One can convince oneself that 
\begin{align*}
    \frac{\td r}{\td\tau}=0&\Leftrightarrow R(r)=0\, ,\\
    \frac{\td r}{\td \tau}=0\text{ and }\frac{\td^2 r}{\td \tau^2}=0&\Leftrightarrow R(r)=0 \text{ and }\partial_rR(r)=0\, .
\end{align*}

With this, we can show the following lemma:

\begin{lem}
\label{lem:region} There exists a $\lambda_0>0$ and an $a_0>0$, such that for all $0<\lambda<\lambda_0$ and any $0<a<a_0$, any inextendible null geodesic on $M$ that does not approach $\mH$ or $\mHc$ in the past must intersect the region in which the vector fields
 \begin{align}
\partial_{t_X}=\partial_t+\frac{a}{r_X^2+a^2}\partial_\phi=\partial_{u_X}+\partial_{v_X} \, ,\quad X\in\{+,c\}\, ,
\end{align} 
 are both timelike.
\end{lem}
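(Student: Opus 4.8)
The plan is to treat the statement as a perturbation of the Schwarzschild--de Sitter case $a=0$, splitting the argument into a description of the region where both vector fields are timelike and a classification of null geodesics by their radial motion.

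First I would compute the norm $g(\partial_{t_X},\partial_{t_X})=g_{tt}+2\Omega_X g_{t\varphi}+\Omega_X^2 g_{\varphi\varphi}$, $\Omega_X=a/(r_X^2+a^2)$. At $a=0$ one has $\Omega_X=0$, $\chi=1$, $\rho^2=r^2$, hence $g(\partial_{t_X},\partial_{t_X})=-\Delta_r/r^2$, which is negative precisely on $(r_+,r_c)$, independently of $\theta$. Writing $\mathcal T$ for the set where both $\partial_{t_+}$ and $\partial_{t_c}$ are timelike, continuity in $(a,\lambda)$ together with compactness of $\bS^2$ then yields, for $\lambda$ and $a$ small, a slab $\{r_A\le r\le r_B\}\times\bS^2$ contained in $\mathcal T$, with $r_+<r_A<r_B<r_c$ and $r_A\to r_+$, $r_B\to r_c$ as $a\to0$. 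In particular $[r_A,r_B]$ contains a fixed neighbourhood of the Schwarzschild--de Sitter photon sphere at $r=3$ (recall $M=1$); note that the stated bound $\lambda<1/27$ is exactly what guarantees $\Delta_r(3)=3-81\lambda>0$, i.e.\ $3\in(r_+,r_c)$.

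Next I would exploit the radial equation in \eqref{eq:geod eqn}, $\rho^4(\td r/\td\tau)^2=R(r)$. In the blocks $\rII$ and $\rIII$ one has $\Delta_r<0$, so $R(r)=\chi^2[(r^2+a^2)E-aL]^2-K\Delta_r>0$ since $K\ge0$; the radial motion is strictly monotonic, and as $\partial_r$ is timelike there every past-inextendible geodesic reaches $r_+$ resp.\ $r_c$, where it either approaches the long horizon $\mH$ resp.\ $\mHc$ or crosses $\mH^R$ resp.\ $\mHc^L$ into $\rI$. Thus, apart from the geodesics already approaching $\mH$ or $\mHc$, it is enough to treat geodesics whose past lies in $\rI$. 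On $\rI$, $\Delta_r>0$, and at $a=0$ a short computation gives $R(r)=Kr^4\big(E^2/K-W(r)\big)$ with $W(r)=\Delta_r/r^4$ and $W'(r)=2(3-r)/r^4$, so $W$ has a single maximum at $r=3$. Hence the allowed set $\{R\ge0\}$ is either all of $(r_+,r_c)$ (orbits monotonically crossing between the two horizons) or a union of two intervals each adjacent to one horizon; in neither case is there a bounded oscillating orbit, so the only geodesics avoiding both $r_+$ and $r_c$ are those asymptotic to the unstable photon orbit at $r=3$.

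Putting this together, a geodesic that does not approach $\mH$ or $\mHc$ in the past has past radial range bounded away from $r_+$ and $r_c$ and eventually contained in $\rI$; by the single-maximum structure it can only accumulate on the photon region, whose radii lie in a small interval about $r=3$, hence inside $[r_A,r_B]$. The geodesic therefore enters the slab, which is contained in $\mathcal T$, and the claim follows. The main obstacle is the passage from $a=0$ to $a>0$: I would need to show, uniformly for small $a$, that the single-maximum structure of the effective radial potential persists---so that no new trapped or bounded null geodesics appear away from the photon region---and that the now $\theta$- and $L$-dependent photon region stays inside $[r_A,r_B]$. This reduces to a quantitative perturbation estimate on $R$ and $\partial_r R$ subject to the coupled constraints $R(r)\ge0$ and $\Theta(\theta)\ge0$, and is the technical heart of the argument.
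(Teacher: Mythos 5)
Your skeleton is the same as the paper's: reduce to region $\rI$ (monotone radial motion in $\rII$, $\rIII$ forces past-inextendible geodesics there either to approach $\mH$, $\mHc$ or to enter $\rI$), observe that a geodesic avoiding both horizons in the past must sit at, or be asymptotic to, a double root $r_0$ of $R(r)$, locate those near $r=3$, and verify timelikeness of $\partial_{t_+}$, $\partial_{t_c}$ there. Your slab-plus-continuity argument for the timelikeness part is a harmless variant (the paper instead evaluates the norm at $r_0$ itself, via the estimate $\chi^2\rho^2(r_X^2+a^2)^2\, g(\partial_{t_X},\partial_{t_X})\leq a^2(r_X^2-r^2)^2-\Delta_r r_X^4$). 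But the step you defer as ``the technical heart'' is not a routine perturbation, and this is a genuine gap: a continuity argument in $a$ at fixed geodesic data is not uniform over the parameter space of null geodesics, which is non-compact. After rescaling $l=L/E$, $k=K/(\chi^2E^2)$, the admissible $(l,k)$ range over an unbounded set, and the degenerate family $E=0$ — which can occur in $\rI$ for $a>0$ (it requires $aL<0$, possible in the ergoregion) but has no counterpart at $a=0$ — is invisible to any perturbation off Schwarzschild--de Sitter. So ``the single-maximum structure persists for small $a$'' cannot be obtained by soft compactness; one needs the quantitative input you only gesture at.

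The paper closes this by inverting the problem: instead of perturbing the potential for fixed $(l,k)$, it solves $R(r_0)=\partial_rR(r_0)=0$ in closed form for $l(r_0)$, $k(r_0)$, and then imposes the constraint $q=k-(a-l)^2\geq 0$ forced by $\Theta(\theta)\geq 0$ (if $q<0$ one shows all coefficients of $R$ would have to be positive, excluding a positive double root). This yields the explicit formula
\begin{align*}
q=\frac{4r_0^3}{a^2\Delta_r^{\prime 2}}\left(4a^2-r_0(r_0-3)^2-a^2\lambda r_0^2\bigl(2(r_0+3)+a^2\lambda r_0\bigr)\right)\, ,
\end{align*}
whose positivity confines every admissible double root either to $[0,r_1]$ with $r_1<r_+$, or to $\vert r_0-3\vert\leq \tfrac{2\sqrt{1-27\lambda}}{\sqrt{3}}a+\mathcal{O}(a^2)$ — exactly the uniform localization your slab argument needs, with no uniformity issue because $r_0$ itself is the parameter. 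The $E=0$ family is then treated separately: there $\Theta\geq 0$ forces $L^2\chi^2/K\leq 1$, so a double root would require $\partial_r\Delta_r(r_0)=0$ together with $\Delta_r(r_0)\leq a^2$, which is impossible once $\lambda\lesssim 0.0332$ (note this is strictly stronger than your $\lambda<1/27$, which only guarantees $3\in(r_+,r_c)$; it is one reason the lemma carries its own $\lambda_0$). Without these two ingredients your argument does not close; with them it does, and becomes essentially the paper's proof.
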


\begin{proof}
First, let us note that many of the results of \cite{ONeill:1995, Gerard:2020} on the null geodesics on Kerr can be extended to Kerr-de Sitter, see for example the results in \cite{Hackmann:2010, Salazar:2017, Borthwick:2018}. The resulting description of the null geodesics on Kerr-de Sitter is relayed to App.~\ref{sec:A2}. With these results, one finds that it is sufficient to consider null geodesics in region $\rI$. For such geodesics, there are two possibilities not to approach $\mH$ or $\mHc$ in the past: One is that $R(r)$ has two distinct zeros in $r_+<r<r_c$, in between which it is positive. In this case $r(\tau)$ will oscillate between the two zeros. This cannot happen due to the form of $R$. The other is that $R(r)$ has a double root $r_0$. In this case $r(\tau)=r_0$ for all $\tau$, or $r_0$ is approached asymptotically.

Hence we look for double roots of $R(r)$. Let us first assume $E=0$. In this case, 
\begin{align*}
\Theta(\theta)=\frac{K}{\sin^2\theta}\left(-a^2\lambda \cos^4\theta-(1-a^2\lambda)\cos^2\theta+1-\frac{L^2\chi^2}{K}\right)
\end{align*}
and hence there exists no solution for the geodesic if $\tfrac{L^2\chi^2}{K}>1$, since $a^2\lambda<1$ in the whole parameter range. The condition for the double root of $R(r)$ can be written as $\partial_r\Delta_r(r_0)=0$ and $\Delta_r(r_0)=a^2\frac{L^2\chi^2}{K}$. By the above, this needs to be smaller than or equal to $a^2$. By choosing $\lambda$ smaller than $\sim 0.0332$, one can ensure that this condition is not met and that there are no double roots of $R(r)$ with $E=0$.

Hence, we can restrict ourselves to the case $E\neq 0$. Introducing the rescaled $l=L/E$ and $k=K/\chi^2E^2$,  one can then write
\begin{subequations}
\begin{align}
R(r)=\chi^2E^2\left(\beta r^4+\gamma r^2+2kr-a^2q\right)\\
\Theta(\theta)=\frac{\chi^2E^2}{\sin^2\theta}\left(-a^2\beta \cos^4\theta+\gamma \cos^2\theta+q\right)
\end{align}
\end{subequations}
with
\begin{align*}
\beta&=1+\lambda k\, ,   & \gamma&=2a(a-l)-k(1-a^2\lambda)\, , & q&=k-(a-l)^2\, .
\end{align*}
It is then easy to see that if $q<0$, $\gamma$ needs to be positive, since otherwise $\Theta(\theta)$ is negative for any $\theta$ and no solution for the geodesic exists. But in this case all coefficients in the polynomial $R(r)$ are positive, so $R(r)>0$ for all $r>0$. Hence, there cannot be a double zero of $R(r)$ in $r>0$. This implies that $q$ needs to be non-negative.

Next, one can take the conditions for the double zero of $R(r)$ and solve them for $l$ and $k$. One finds
\begin{align*}
l&=\frac{\Delta_r^\prime (r^2+a^2)-4r\Delta_r}{a\Delta_r^\prime}\vert_{r=r_0}\, , &
k&=\frac{16r^2\Delta_r}{\Delta_r^{\prime 2}}\vert_{r=r_0}\, ,
\end{align*}
where a prime denotes a derivative with respect to $r$.
From this, one then finds
\begin{align}
q&=\frac{r^2}{a^2\Delta_r^{\prime 2}}\left(16 \Delta_r(a^2-\Delta_r)+r\Delta_r^\prime(8\Delta_r-r\Delta_r^\prime)\right)\vert_{r=r_0}\\\nonumber
&=\frac{4r^3}{a^2\Delta^{\prime 2}_r}\left(4a^2-r(r-3)^2-a^2\lambda r^2(2(r+3)+a^2\lambda r)\right)\vert_{r=r_0}
\end{align}
This form for $q$ is very similar to the one found in \cite{Gerard:2020} for the Kerr spacetime. 
Notice that the terms proportional to $\lambda$ enter with a minus sign. Hence they reduce the range of $r$ for which the expression in brackets is positive. One finds that the double roots of $R(r)$ must either lie in $r\in[0,r_1]$ for some $r_1<r_+$, which is not of interest to us, or in $r\in\left[3-\frac{2\sqrt{1-27\lambda}}{\sqrt{3}}a+\mathcal{O}(a^2),3+\frac{2\sqrt{1-27\lambda}}{\sqrt{3}}a+\mathcal{O}(a^2)\right]$, compare \cite[Lemma C.1]{Gerard:2020}.

The vector fields $\partial_{t_X}$ satisfy
\begin{align*}
g( \partial_{t_X},\partial_{t_X})=\frac{a^2\sin^2\theta\Delta_\theta (r_X^2-r^2)^2-\Delta_r \rho_X^4}{\chi^2\rho^2(r_X^2+a^2)^2}\, ,
\end{align*}
where $\rho_X=\rho(r_X,\theta)=r_X^2+a^2\cos^2\theta$. The numerator is monotonously decreasing in $\cos^2\theta$, and the denominator is always positive. Hence we can estimate
\begin{align*}
\chi^2 \rho^2 (r_X^2+a^2)^2 g( \partial_{t_X},\partial_{t_X})&\leq  a^2(r_X^2-r^2)^2 -\Delta_r r_X^4 \\
&\leq (1-27\lambda)r_X^4\vert_{a=0}\left[-3+\frac{8\sqrt{1-27\lambda}}{\sqrt{3}}a\right]+\mathcal{O}(a^2)\, ,
\end{align*}
where we took into account that $(r-3)$ is of order $a$ for any possible value of $r_0$.

Hence, for $\lambda\lesssim 0.332$, by a continuity argument as in \cite{Gerard:2020}, there must be some $a_0>0$ such that $\langle \partial_{t_X},\partial_{t_X}\rangle\vert_{r=r_0}<0$ for all possible values of $r_0$ as long as $0 \leq a < a_0$.

 We have also tested this numerically by checking that $\chi^2 \rho^2 (r_X^2+a^2)^2 g( \partial_{t_X},\partial_{t_X})<0$ for both $X$ for all allowed values of $r_0$ for fixed $\lambda$, varying $\lambda$ over its allowed range. We find that for all allowed values of $\lambda$, $a_0\sim 0.7$, with only a percent-level variation of that value. 
\end{proof}

\begin{figure}
\centering
\includegraphics[scale=0.6]{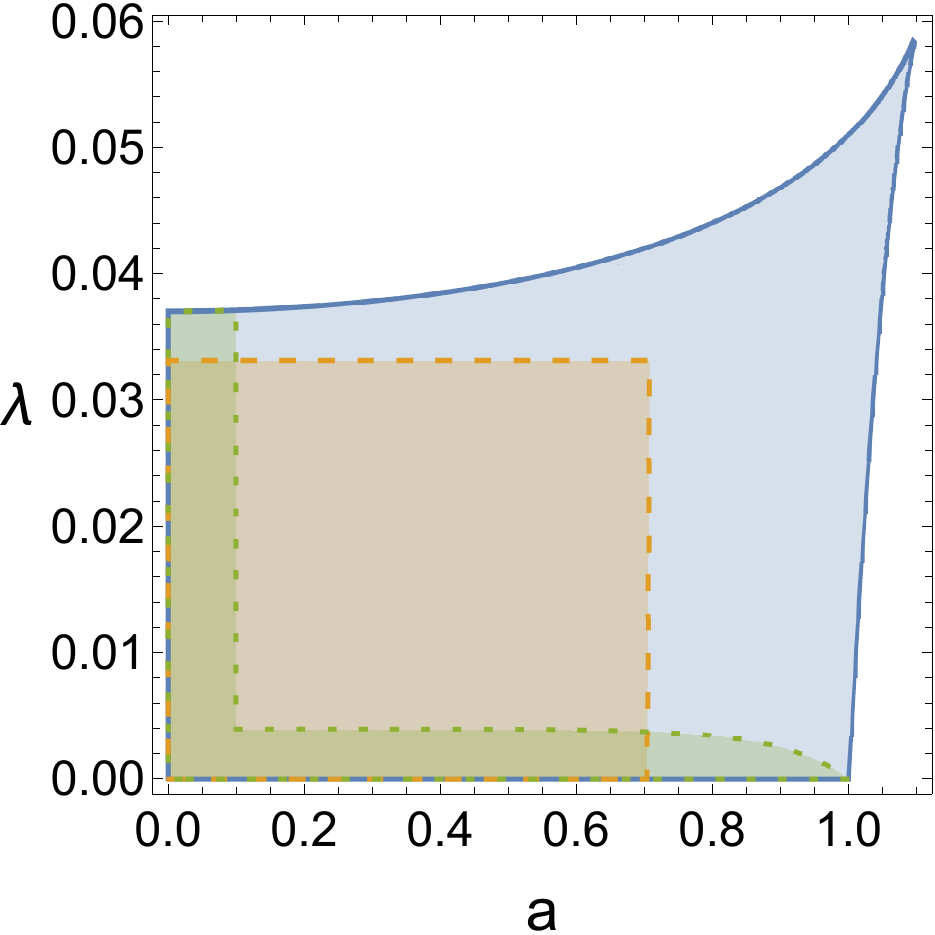}
\caption{The parameter region in the $(a,\lambda)$ plane. The region surrounded by the solid line is the subextremal range of the parameters. The region surrounded by the dashed line indicates approximately where Lemma \ref{lem:region} is valid. The region surrounded by the dotted line is an approximate indication for the parameter region in which mode stability for the scalar wave equation has been proven, see \cite{Dyatlov:2010, Hintz:2021}, and in particular Fig. 1.1 in \cite{Hintz:2021}.}
\label{fig:parRegs} 
\end{figure}

Figure \ref{fig:parRegs} depicts the parameter region allowed by the subextremality condition, as well as the approximate parameter regions in which the above Lemma and mode stability \cite{Dyatlov:2010, Hintz:2021} hold. The above Lemma is valid in a large portion of the parameter space. However, it cannot cover the case of rapidly rotating black holes with a small cosmological constant, which would be very interesting to study and for which mode stability results have been obtained recently \cite{Hintz:2021}. Hence, a different strategy would be necessary to prove the Hadamard property of the Unruh state in this regime.

In addition to the Lemma above, the analysis of the null geodesics on Kerr-de Sitter also allows us to show

\begin{prop}
$M$ and $\tilde M$ are globally hyperbolic.
\end{prop}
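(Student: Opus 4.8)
The plan is to prove global hyperbolicity by producing a smooth spacelike Cauchy hypersurface, relying on the standard equivalence that a spacetime is globally hyperbolic if and only if it admits such a surface (equivalently, if and only if it is causal and all causal diamonds $J^+(p)\cap J^-(q)$ are compact). I would argue first for $\tilde M=M_+\cup M_c$ and then descend to the open subregion $M$.

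First I would construct a global time function from the Kruskal coordinates. On $M_+$ the two-dimensional part of the metric has the form $2g_{UV}\,\td U_+\td V_+$ with $g_{UV}<0$, exactly as in the proof of Lemma~\ref{lem:Hgeo}, so the gradient of $T_+:=U_++V_+$ is timelike wherever those coordinates are regular, and with a consistent time orientation $T_+$ increases to the future; an identical construction gives $T_c:=U_c+V_c$ on $M_c$. On the overlap $\rI$ both restrict to increasing functions of the Boyer–Lindquist time $t$, so a partition-of-unity patching (or a direct check that a suitable convex combination retains a timelike gradient) yields a smooth $T$ on $\tilde M$ with everywhere past-directed timelike gradient. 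This already yields stable causality, hence strong causality, ruling out closed and imprisoned causal curves.

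The candidate Cauchy surface is the regular level set $\Sigma=\{T=0\}$, which is spacelike, achronal and edgeless. The substantive point is to upgrade $T$ from a time function to a \emph{Cauchy} time function, i.e.\ to show $T\to\pm\infty$ along every inextendible causal curve, so that each level set is met exactly once. By strong causality such a curve leaves every compact subset of $\tilde M$ in each causal direction, so it suffices to control $T$ near the ends of $\tilde M$. Here I would use the description of the null geodesics recalled in App.~\ref{sec:A2}: the interior horizons are regular null surfaces in the Kruskal charts, across which $T$ stays finite and curves simply continue, whereas the genuine boundaries of the maximal extension — the inner horizon $\{r=r_-\}$ bounding $\rII$, the spacelike infinity $r\to\infty$ of $\rIII$, and their time-reflected counterparts — are reached only as the Kruskal coordinates, and with them $T$, diverge. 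Confirming $T\to\pm\infty$ at exactly these ends, and at no interior point at finite $T$, shows $D(\Sigma)=\tilde M$, so $\Sigma$ is Cauchy and $\tilde M$ is globally hyperbolic.

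Finally, for $M$ I would check that it is causally convex in $\tilde M$: its boundary there consists of the achronal long horizons $\mH$ and $\mHc$, and one verifies that a causal curve with both endpoints in $M$ cannot cross out through them, so it stays in $M$. Since an open causally convex subset of a globally hyperbolic spacetime is again globally hyperbolic, $M$ follows, with $\Sigma\cap M$ a Cauchy surface. The main obstacle I anticipate is the third step — showing that $T$ is genuinely unbounded along \emph{every} inextendible causal curve and that none terminates at an interior point at finite $T$ — which is exactly what the global analysis of the null geodesic flow, together with the regularity of the Kruskal extensions across $\mH$ and $\mHc$, is there to control.
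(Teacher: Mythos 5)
Your overall architecture (time function $\Rightarrow$ Cauchy surface $\Rightarrow$ descend to $M$ by causal convexity) is reasonable, but its foundation has a genuine gap: the claim that $\td(U_++V_+)$ has timelike gradient on all of $M_+$, justified by saying the two-dimensional part of the metric is $2g_{UV}\,\td U_+\td V_+$ ``exactly as in the proof of Lemma~\ref{lem:Hgeo}.'' That lemma only gives the metric \emph{on} $\mH=\{V_+=0\}$, not on $M_+$, and even there it reads
\begin{align*}
g=g_{VV}\,\td V_+^2+2g_{UV}\,\td U_+\td V_+ +2g_{V\varphi}\,\td\varphi_+\td V_+ +g_{\theta\theta}\,\td\theta^2+g_{\varphi\varphi}\,\td\varphi_+^2\, ,
\end{align*}
i.e.\ it contains the terms $g_{VV}$ and, crucially, the rotational cross term $g_{V\varphi}$ that you drop. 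Using the inverse-metric formula from that same proof, one finds on $\mH$
\begin{align*}
g^{-1}\bigl(\td(U_++V_+),\td(U_++V_+)\bigr)=\frac{g_{V\varphi}^2-g_{VV}g_{\varphi\varphi}}{g_{UV}^2 g_{\varphi\varphi}}+\frac{2}{g_{UV}}\, ,
\end{align*}
whose sign is \emph{not} automatic: it requires the quantitative bound $g_{V\varphi}^2-g_{VV}g_{\varphi\varphi}<2\vert g_{UV}\vert g_{\varphi\varphi}$, and away from the horizon (throughout block $\rII$ and in $\rI'$) the Kruskal-coordinate metric has further $\td U_+^2$, $\td U_+\td\varphi_+$ terms you have not controlled at all. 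This is precisely where the rotation bites --- the paper emphasizes in the introduction that the non-static character of Kerr--de Sitter is the main obstruction to transplanting Schwarzschild-type arguments, and the naive ``Kruskal time'' $U+V$ being a time function is exactly such a Schwarzschild-type fact. Everything downstream (stable causality, the Cauchy property of $\{T=0\}$) rests on this unproven step.

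For comparison, the paper never constructs a global time function on $\tilde M$: it takes the hypersurface $\Sigma_M=\{U_+=-V_+\}\sqcup\{U_c=-V_c\}/\!\sim$, proves its \emph{achronality} directly (a much weaker statement than monotonicity of $U+V$ along all causal curves), shows via the null-geodesic analysis of App.~\ref{sec:A2} that every inextendible null geodesic enters $I^{\pm}(\Sigma_M)$, and invokes the criterion \cite[Thm.~C.6]{Gerard:2020}. For $M$ it builds a completely different time function $u_T$ by gluing $u_c+r+T$, $t$, and $v_++T-v(r)$ across three radial regions, with an explicit computation verifying $\nabla u_T$ is timelike for $r_-<r<\infty$, then applies \cite[Cor.~C.7]{Gerard:2020}. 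Your step for $M$ (causal convexity of $M$ in $\tilde M$, using that future-directed causal curves can cross the null hypersurfaces $\mH$, $\mHc$ in only one direction) is a sound alternative in principle --- though you assert rather than prove the one-way-crossing property --- but it cannot rescue the argument unless the timelikeness of $\td(U_X+V_X)$ is either established by genuine computation (plausibly needing smallness of $a$) or replaced by the achronality-plus-null-geodesic route the paper takes.
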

\begin{proof}
Thouroughly checking  the arguments made in \cite{ONeill:1995} for the case of a Kerr spacetime, we find that the results of \cite{ONeill:1995} and \cite[App. C]{Gerard:2020} on the behaviour of null geodesics in Kerr extend to Kerr- de Sitter with only minimal modifications; see also \cite{Borthwick:2018}. 
 
In addition, let us note that the function $x(r)=r_*(r)-r$ is strictly monotonic on $(r_+,r_c)$ and ranges from $-\infty$ at $r_+$ to $\infty$ at $r_c$. As a result, for any $T\gg 1$, there will be a unique solution $r_T$ of $x(r)=-T$ near $r_+$ and a unique solution $r_T^\prime$ of $x(r)=T$ near $r_c$. We may choose $T$ large enough such that $r_T<r_0<r_T^\prime$ for any double root $r_0$ of $R(r)$. We then set $u_T$ to be 
\begin{align*}
u_T=\begin{cases}u_c+r+T  & :\;r_T^\prime\leq r\\
t  & :\;r_T<r<r_T^\prime\\
v_++T-v(r)  & :\;r\leq r_T \end{cases}\, ,
\end{align*}
with 
\begin{align*}
v^\prime(r)&=1+\chi(r) \frac{1}{r-r_-}\,, & v(r_+)&=r_+\, ,
\end{align*}
and some $\chi\in C^\infty(\bR)$, $\chi=1$ on $(-\infty, r_-+\epsilon]$, and $\chi=0$ on $\left(1/2 (r_++r_-),\infty\right)$, see \cite[App. C.6.2]{Gerard:2020}. Then one can explicitly check that $\nabla u_T$ is timelike over the whole range of $r_-<r<\infty$:
on $\{r_T<r<r_T^\prime\}$, one finds 
\begin{align*}
g^{-1}(\td u_T, \td u_T)=g^{tt}\leq \frac{\chi^2}{\rho^2}\left(a^2-\frac{(r^2+a^2)^2}{\Delta_r}\right)=-\frac{\chi^2}{\rho^2}\frac{\chi r^4+\chi r^2a^2+2a^2 r}{\Delta_r}<0\, ,
\end{align*}
where the inequality follows from the fact that $\Delta_r>0$ in this region.

For $\{r_-\leq r\leq r_T\}$, we use the metric in $KdS*$-coordinates \cite{Borthwick:2018}, combined with the fact that $v^\prime(r)\geq 1$ and $v^\prime(r)=1$ when $\Delta_r \geq 0$. Similarly, on $\{r_T^\prime\leq r\}$, we combine the inverse metric in $*KdS$-coordinates \cite{Borthwick:2018} with the fact that $v(r)=r$. In both cases, we find
\begin{align*}
g^{-1}(\td u_T, \td u_T)\leq \frac{1}{\rho^2}\left(-\lambda r^4-(1+3\lambda a^2)r^2-2r+\lambda^2a^6\right)\, .
\end{align*}
The term in the brackets is a polynomial in $r$ with a single root $r_0$ in $r>0$. Moreover, the polynomial is negative for all $r>r_0$. One can use $\Delta_r(r_-)=0$ to reduce the terms in the bracket at $r=r_-$ to $\chi(\chi-2)a^2-2\chi r^2$, which is strictly negative in the whole range of spacetime parameters under consideration since $1<\chi<2$. Therefore, $\td u_T$ is time-like on $M$, compare also \cite[App. C.6.2]{Gerard:2020}. Moreover, for any inextendible future-directed null geodesic $\gamma$ one has $\sup_\gamma u_T=\infty$ and $\inf_\gamma u_T=-\infty$ by the extension of the results of \cite{ONeill:1995,Gerard:2020} to Kerr-de Sitter described in App.~\ref{sec:A2} and \cite{Borthwick:2018}.
Since $u_T$ thus satisfies the conditions of \cite[Cor. C.7]{Gerard:2020}, this shows that the spacetime $M$ is globally hyperbolic. We also notice that $\Sigma_{n,t_0}\equiv\{u_n=t_0\}$ is a family of Cauchy surfaces of $M$ converging to $\mH^L\cup \{t=t_0, r_+<r<r_c\}\cup \mHc^R$ as $n\to\infty$.

In addition, we may adapt \cite[Prop. C.12]{Gerard:2020}, by choosing 
\begin{align*}
\Sigma_M=\{U_+=-V_+\}\sqcup \{U_c=-V_c\}/\sim\, ,
\end{align*}
where $\sim$ is the identification of $\rI\subset M_+$ with $\rI\subset M_c$ in $\tilde M$. Then repeating the proof of \cite[Prop. C.12]{Gerard:2020} for this hypersurface, we see by direct inspection that it is achronal and, using the results collected in App.~\ref{sec:A2}, that any inextendible future-directed null geodesic must enter $I^+(\Sigma_M)$ and $I^-(\Sigma_M)$ \cite{ONeill:1995,Borthwick:2018,Gerard:2020}.
By \cite[Thm. C.6]{Gerard:2020}, $\tilde M$ is a globally hyperbolic manifold. 
\end{proof}

\section{The scalar field}
\label{sec:scalar field}
In this work, we consider the quantization of a real scalar field $\Phi$ satisfying the Klein-Gordon equation 
\begin{align}
\label{eq:KGE}
\mathcal{K}\Phi&=0\, , \quad \mathcal{K}=\nabla_a\nabla^a-m^2\, ,
\end{align}
where $m>0$ is a constant and $\nabla_\mu$ is the covariant derivative on $\tilde M$. Since $\tilde M$ is globally hyperbolic, there are unique retarded and advanced fundamental solutions $E^\pm:C_0^\infty(\tilde M)\to C^\infty(\tilde M)$ for the Klein-Gordon operator $\mathcal{K}$ on $\tilde M$. Here and in the following, $C^\infty_{(0)}(N)$ denotes the space of smooth, complex (and compactly supported) functions on $N$. The commutator function $E=E^+-E^-:C_0^\infty(\tilde M)\to S(\tilde M)$ maps compactly supported functions to the space of solution to the Klein-Gordon equation with compact support on spacelike hypersurfaces, which we denote $S(\tilde M)$. This space can be equipped with a symplectic form 
\begin{align}
\sigma(\phi,\psi)=\int\limits_\Sigma (\phi\nabla_a\psi-\psi\nabla_a\phi)n_\Sigma^a\td vol_\gamma\, ,
\end{align}
where $\Sigma$ is any piecewise smooth spacelike Cauchy surface, $n^a_\Sigma$ its future pointing normal vector and $\td vol_\gamma$ the volume element associated to the induced metric $\gamma$ on $\Sigma$. Note that $\sigma$ is independent of the choice of Cauchy surface by Gauß's law \cite{Dimock:1980}. It can be shown \cite{Dimock:1980} that 
\begin{align}
\label{eq:SymplMorph}
E(f,g)=\int\limits_{\tilde M} f(x) E(g)(x) \td vol_g(x)=\sigma(E(f),E(g))\, ,
\end{align}
and hence $E:C_0^\infty(\tilde M)/\mathcal{K}(C_0^\infty(\tilde M))\to S(\tilde M)$ is a symplectomorphism.
The same structure can be constructed for $M$ by restricting $E:C_0^\infty(M)\to S(M)\subset C^\infty(M)$.

We can then define the algebra of observables in the following way, see for example \cite{Hollands:2019,Fewster:2015}:

\begin{defi} The algebra of observables for the free scalar field, $\mathcal{A}$, is the free *-algebra generated by the unit element $\mathbf{1}$ and the elements $\Phi(f)$, $f\in C_0^\infty(M)$, subject to the relations
\begin{itemize}
\item {\bf Linearity} $\Phi(\alpha f+ g)=\alpha \Phi(f)+ \Phi(g)\quad \forall f,g\in C_0^\infty(M)$, $\alpha \in \bC$
\item {\bf Klein-Gordon equation} $\Phi(\mathcal{K}f)=0 \quad\forall f\in C_0^\infty(M)$
\item {\bf Hermiticiy} $(\Phi(f))^*=\Phi(\bar f)\quad \forall f\in C_0^\infty(M)$
\item {\bf Commutator property} $\left[\Phi(f),\Phi(g)\right]=iE(f,g)\mathbf{1} \quad\forall f,g\in C_0^\infty(M)$
\end{itemize}
\end{defi}

\begin{defi} A state on $\mathcal{A}$ is a linear map $\omega : \mathcal{A} \to \bC$, such that $\omega ( \mathbf{1})=1$ and\\ $\omega(A^*A)\geq 0$ $ \forall A\in \mathcal{A}$.
\end{defi}

Any state will be determined by its n-point functions
\begin{align*}
W_n^\omega(f_1,\dots,f_n)=\omega(\Phi(f_1)\dots\Phi(f_n))\, .
\end{align*}
A particular class of states are the so-called quasi-free or Gaussian states. They have the property that $W_n^\omega=0$ for $n$ odd and $W_n^\omega$ for $n$ even can be expressed in terms of the two-point function $W_2^\omega$ with the help of Wick's formula. Hence, a quasi-free state is completely determined by its two-point function. Turning the argument around, for a bi-distribution $w\in \mathcal{D}^\prime(M\times M)$ to be the two-point function of a quasi-free state on the algebra $\mathcal{A}$, it must satisfy
\begin{itemize}
\item {\bf Weak bi-solution} $w( \mathcal{K} (f) \otimes  g) = w(f \otimes \mathcal{K} (g) )=0 \quad \forall  f,g\in C_0^\infty(M)$
\item {\bf Positivity} $w(\bar f \otimes f)\geq 0\quad \forall f \in C_0^\infty(M)$
\item {\bf Commutator property} $w(f \otimes g)-w(g \otimes f)=i E(f,g) \quad \forall f,g\in C_0^\infty(M)$.
\end{itemize}

\subsection{The wavefront set and the Hadamard property}
For a state to be considered physically reasonable, one usually also demands it to be of Hadamard type. Radzikowski \cite{Radzikowski:1996} showed that in the case of a quasi-free state, the original formulation \cite{Kay:1988} of this condition is equivalent to a condition on the wavefront set of the two-point function, $W_2^\omega(x,y)$. So let us first introduce the wavefront set.

We will denote by
\begin{align}
\widehat{f}(k)=(2\pi)^{-n/2}\int\limits_{\bR^n}e^{ik\cdot x}f(x)\td^nx
\end{align}
the Fourier-Plancherel transform of $f\in\mathcal{E}^\prime(\bR^n)$.

\begin{defi}
\label{def.:WFS}
 Let $u\in \mD^\prime(\bR^n)$ a distribution, i.e. $u:C_0^\infty(\bR^n)\to \bC$ is a linear map which is continuous in the inductive limit topology on the test functions $C_0^\infty(\bR^n)$. Let $(x,k)\in \bR^n\times(\bR^n \backslash \{ 0 \} )$. Then $(x,k)$ is a direction of rapid decrease for $u$ if there exists a function $\chi\in C_0^\infty(\bR^n)$, $\chi(x)\neq 0$ and an open conic neighbourhood of $k$, $V_k\subset \bR^n \backslash \{ 0 \}$, i.e. if $k^\prime\in V_k$, then $\lambda k^\prime\in V_k$ for all $\lambda>0$, so that for any $N\in \bN$ there is a $C_N>0$ with \cite[Sec. 8.1]{Hoermander}
\begin{align}
\vert\widehat{\chi u}\vert(\xi)\leq C_N (1+\vert\xi\vert)^{-N} \quad \forall \xi\in V_k\, ,
\end{align}
i.e. the function $\widehat{\chi u}$ is rapidly decreasing in $\xi\in V_k$.
The wavefront set of $u$ is the set of all $(x,k)\in \bR^n\times(\bR^n \backslash \{ 0 \} )$ which are not of rapid decrease for $u$.
\end{defi}

A different characterization of the wavefront set due to \cite[Prop.2.1]{Verch:1998}, which we will use later, is
\begin{prop}[\cite{Verch:1998}]
\label{prop:V,Prop2.1}
Let $(x,k)\in \bR^n\times(\bR^n\backslash\{0\})$, $u\in \mD^\prime(\bR^n)$. Then $(x,k)\notin \WF(u)$ iff there exist an open neighbourhood $V\subset(\bR^n\backslash\{0\})$ of $k$,  some $h\in C_0^\infty(\bR^n)$ with $h(0)=1$ and some $g\in C_0^\infty(\bR^n)$: $\hat g(0)=1$ such that $\forall p\geq 1$, $\forall N\in \bN$, $\exists C_N>0$, $\lambda_N>0$ such that
\begin{align}
\sup\limits_{k^\prime\in V}\left\vert \int e^{i\lambda^{-1}k^\prime\cdot y}h(y)u\left(g(\lambda^{-p}(\cdot-x-y))\right)\td^ny\right\vert<C_N\lambda^N\quad \forall 0<\lambda<\lambda_N\, .
\end{align}
\end{prop}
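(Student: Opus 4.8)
The plan is to reduce the statement to a single identity in Fourier space and then analyse the resulting integral separately for the two implications.

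First I would localise. For $\lambda\le 1$ and $y\in\supp h$ the tested function $z\mapsto g(\lambda^{-p}(z-x-y))$ is supported in a fixed compact neighbourhood $K$ of $x$, so the integral in the statement, which I abbreviate by $I(\lambda,k')$, is unchanged when $u$ is replaced by $\chi_0 u$ for any $\chi_0\in C_0^\infty(\bR^n)$ with $\chi_0\equiv 1$ on $K$ (a neighbourhood of $x$). This leaves $\WF(u)$ over $x$ unchanged, so I may assume $u\in\mathcal E'(\bR^n)$; then $\widehat u$ is smooth with polynomial growth, $|\widehat u(\xi)|\le C(1+|\xi|)^{N_0}$, by Paley--Wiener--Schwartz. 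Expanding $g(\lambda^{-p}(\cdot-x-y))$ through its inverse Fourier transform, applying $u$, and performing the $y$-integration against $h$ yields, with the transform conventions fixed above,
\begin{align*}
I(\lambda,k')=(2\pi)^{n/2}\lambda^{pn}\int e^{-i\xi\cdot x}\,\widehat g(-\lambda^{p}\xi)\,\widehat u(\xi)\,\widehat h(\lambda^{-1}k'-\xi)\,\td^n\xi .
\end{align*}
Here $\widehat h(\lambda^{-1}k'-\xi)$ is a Schwartz bump concentrated at the high frequency $\xi=\lambda^{-1}k'$, and on that peak $\widehat g(-\lambda^p\xi)\to\widehat g(0)=1$ because $p\ge 1$; the whole argument is a quantitative version of these two observations.

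For the forward implication, assume $(x,k)\notin\WF(u)$, so $|\widehat u(\xi)|\le C_M(1+|\xi|)^{-M}$ for every $M$ in some open cone $\Gamma\ni k$. I would choose $V\ni k$ and a splitting radius $\sim\epsilon\lambda^{-1}$ small enough that the peak region $\{|\xi-\lambda^{-1}k'|\le\epsilon\lambda^{-1}\}$ lies in $\Gamma$ and satisfies $|\xi|\ge c\lambda^{-1}$, uniformly for $k'\in V$. On the peak the rapid decay of $\widehat u$ supplies a factor $\lambda^{M}$; off the peak the factor $(1+\epsilon\lambda^{-1})^{-M}$ from $\widehat h$ does, while the remaining Schwartz tail absorbs the polynomial growth of $\widehat u$. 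Both contributions are $O(\lambda^{pn+M-N_0})$ uniformly in $k'\in V$, which gives the required bound for every $p\ge 1$ and every $N$.

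The converse is the main obstacle, and it is where the quantification over \emph{all} $p\ge 1$ is used. Setting $\chi(z):=h(z-x)$, so that $\chi(x)=h(0)=1$, one has the exact identity
\begin{align*}
I(\lambda,k')-(2\pi)^{n}\lambda^{pn}e^{-i\lambda^{-1}k'\cdot x}\widehat{\chi u}(\lambda^{-1}k')=(2\pi)^{n/2}\lambda^{pn}\int e^{-i\xi\cdot x}\bigl[\widehat g(-\lambda^{p}\xi)-1\bigr]\widehat u(\xi)\,\widehat h(\lambda^{-1}k'-\xi)\,\td^n\xi .
\end{align*}
I would bound this error: off the peak it is rapidly decaying as before, while on the peak $|\widehat g(-\lambda^p\xi)-1|\le C\lambda^{p}|\xi|\lesssim\lambda^{p-1}$ (using $\widehat g(0)=1$ and smoothness), so the error is $O(\lambda^{pn+(p-1-N_0)})$. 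Combining with the hypothesis $|I(\lambda,k')|<C_N\lambda^{N}$ gives $|\widehat{\chi u}(\lambda^{-1}k')|\lesssim\lambda^{N-pn}+\lambda^{p-1-N_0}$ uniformly in $k'\in V$. For any prescribed decay rate I now fix $p$ large enough that $p-1-N_0$ exceeds it and choose $N$ accordingly; letting $\lambda^{-1}k'$ sweep the large-radius part of a conic neighbourhood of $k$ as $\lambda\to 0$ and $k'$ ranges over $V$, this shows $\widehat{\chi u}$ decays faster than any power in that cone, i.e. $(x,k)\notin\WF(u)$. The delicate point throughout is precisely that the peak error from $\widehat g(-\lambda^p\xi)-1$ carries only a fixed power of $\lambda$ and could a priori dominate the unknown decay of $\widehat{\chi u}$; this is defeated by the freedom to take $p\to\infty$.
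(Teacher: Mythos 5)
The paper does not actually prove this proposition --- it is imported verbatim from Verch's work (cited as [Prop.~2.1, Verch 1998]) --- so your proposal can only be judged on its own merits. Your Fourier-space identity is correct in the paper's conventions, and your treatment of the converse direction is sound: the exact identity relating $I(\lambda,k')$ to $\widehat{\chi u}(\lambda^{-1}k')$, the error bound $O(\lambda^{pn+p-1-N_0})$ coming from $\widehat g(-\lambda^p\xi)-1$ on the peak, and the use of arbitrarily large $p$ to beat the fixed loss of $N_0+1$ powers are exactly what makes that (harder) half of the equivalence work.

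The forward direction, however, has a genuine gap. After your preliminary localization you work with $\chi_0 u$, where $\chi_0\equiv 1$ on a neighbourhood of the \emph{fixed} compact set $K\supset x+\supp h+\supp g$, and you then assert that $(x,k)\notin\WF(u)$ forces $\vert\widehat{\chi_0 u}(\xi)\vert\le C_M(1+\vert\xi\vert)^{-M}$ in a cone around $k$. That implication is false: the wavefront condition at the single base point $x$ yields rapid directional decay of $\widehat{\phi u}$ only for cutoffs $\phi$ supported sufficiently close to $x$, not for a cutoff equal to $1$ on all of $K$. Concretely, take $u=\delta_{x'}$ with $x'=x+y_0$, $y_0\in\supp h$, $h(y_0)\neq 0$. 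Then $(x,k)\notin\WF(u)$ for every $k\neq 0$, yet $\widehat{\chi_0 u}$ has no decay in any direction; worse, a direct computation (say for $p=2$) gives $I(\lambda,k')=\lambda^{2n}e^{i\lambda^{-1}k'\cdot y_0}\bigl((2\pi)^{n/2}h(y_0)+o(1)\bigr)$ as $\lambda\to 0$, so the estimate you are trying to establish genuinely fails for $N>2n$. This shows the forward implication cannot be proved for an arbitrary admissible $h$: you must use the existential quantifier on $h$, choosing $\supp h$ so small (and the thresholds $\lambda_N$ so small) that every rescaled test function $g(\lambda^{-p}(\cdot-x-y))$, $y\in\supp h$, is supported inside the set where the cutoff $\phi$ witnessing $(x,k)\notin\WF(u)$ --- i.e.\ with $\widehat{\phi u}$ rapidly decreasing in a conic neighbourhood of $k$ --- is identically $1$; only then may $\widehat u$ in your identity be replaced by a transform possessing the directional decay you invoke. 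With that choice inserted, the rest of your peak/off-peak estimate goes through unchanged.
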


If $u\in \mD^\prime(N)$, where $N$ is an arbitrary smooth manifold, we can define its wavefront set $\WF(u)\subset T^*N\backslash o$, where $o$ is the zero section, such that its restriction (in the base variable) to a coordinate patch $N_\psi\subset N$ with the coordinate map $\psi: N_\psi\to \mathcal{U}_\psi\subset\bR^n$ is \cite[Thm. 8.2.4]{Hoermander}
\begin{align}
\WF(u)\vert_{N_\psi}=\psi^*\WF(u\circ \psi^{-1})=\{(x,{}^t\td \psi(x)k):(\psi(x),k)\in\WF(u\circ \psi^{-1})\}\, .
\end{align}
For a distribution $u\in \mD^\prime(N\times N)$, we will also define the primed wavefront set
\begin{align}
\WF^\prime(u)=\{(x_1,k_1;x_2,k_2)\in T^*(N\times N)\backslash o: (x_1,k_1;x_2,-k_2)\in \WF(u)\}\, .
\end{align}

Let us now come back to the Hadamard property.
\begin{defi}
A quasi-free state $\omega$ on $\mathcal{A}$ has the Hadamard property if it satisfies the microlocal spectrum condition \cite{Radzikowski:1996}
\begin{subequations}
\begin{align}
&\WF^\prime(W^\omega_2)=\mC^+\\
&\mC^{\pm}=\left\{(x_1,k_1;x_2,k_2)\in T^*(M\times M)\setminus o:(x_1,k_1)\sim (x_2,k_2),\pm k_1 \text{ f.-dir.}\right\}\,. 
\end{align}
\end{subequations}
Here, $(x_1,k_1)\sim (x_2,k_2)$ means that $x_1$ and $x_2$ can be connected by a null geodesic, to which $k_1$ is cotangent at $x_1$ and $k_2$ is the same as $k_1$ parallel transported to $x_2$ along the geodesic. Recall that a covector $\xi\in T^*_xM$ is future-directed, if $\langle \xi,v\rangle>0$ for all timelike  $v\in V^+_x$.
\end{defi}

\section{The Unruh state on slowly-rotating Kerr-de Sitter}
\label{sec:Unruh}
In this section, we will specify the two-point function of the Unruh state on the Kerr-de Sitter spacetime $M$ and show that it indeed satisfies the conditions for being the two-point function of a state on $\mathcal{A}$. The two-point function of the state will be a combination of the Kay-Wald two-point function \cite{Kay:1988} on the past event horizon $\mH$ and the past cosmological horizon $\mHc$.

We will use the notation $\mH_+=\mH$, $L_+=U_+$, $l_+=u_+$, $L_c=V_c$, $l_c=v_c$ and $\Omega_X=(\theta,\varphi_X)$. We will denote by $\td^2 \Omega_X$ the volume element of $\bS^2_{\theta,\varphi_X}$, and we will identify $\mH_X=\bR_{L_X}\times \bS^2_{\theta,\varphi_X}$ and $\mH_X^-=\bR_{l_X}\times\bS^2_{\theta,\varphi_X}$ unless specified otherwise.

\begin{defi}
\label{def:2ptfct}
For $\phi,\psi\in C^\infty_0(\mH_X)$, we define 
\begin{align}
A_X(\phi,\psi)=-\lim\limits_{\epsilon\to 0}\frac{r_X^2+a^2}{\chi \pi}\int \frac{\phi(L_X,\Omega_X)\psi(L_X^\prime,\Omega_X)}{(L_X-L_X^ \prime-i\epsilon)^2}\td L_X\td L_X^\prime\td^2\Omega_X\, ,
\end{align}
with $\chi$ as in \eqref{eq:D_t,rho,chi}.
The two-point function of the Unruh state is then defined as
\begin{align}
\label{eq:2ptfct}
    w(f,h)=& w_+(f,h)+w_c(f,h)\\\nonumber
     =& A_+(E(f)\vert_{\mH},E(h)\vert_{\mH})+A_c(E(f)\vert_{\mHc},E(h)\vert_{\mHc})\\\nonumber
   =& -\lim\limits_{\epsilon\to 0^+}\frac{r_+^2+a^2}{\chi\pi}\int\frac{E(f)\vert_{\mH}(U_{+},\Omega_+)E(h)\vert_{\mH}(U_+^\prime,\Omega_+)}{(U_+-U_+^\prime-i\epsilon)^2}\td U_+\td U_+^\prime \td^2 \Omega_+\\\nonumber
    &-\lim\limits_{\epsilon\to 0^+}\frac{r_c^2+a^2}{\chi\pi}\int\frac{E(f)\vert_{\mHc}(V_c,\Omega_c)E(h)\vert_{\mHc}(V_c^\prime,\Omega_c)}{(V_c-V_c^\prime-i\epsilon)^2}\td V_c\td V_c^\prime \td^2 \Omega_c\, 
\end{align} 
for any two test functions $f,h\in C_0^\infty(M)$.
\end{defi}

\subsection{Well-definedness of the Unruh two-point function}
While $E(f)$ is compactly supported when restricted to any spacelike Cauchy surface of $M$ for any $f\in C_0^\infty(M)$, it is not compactly supported on the light-like hypersurfaces $\mH$ and $\mHc$. Hence the convergence of the integrals in \eqref{eq:2ptfct} is not automatic.
Thus, before we can show that \eqref{eq:2ptfct} is the two-point function of a state on $\mathcal{A}$, we need to demonstrate that it is indeed well-defined in the sense that the integrals converge. 

For the proof we will make use of the estimates in \cite{Hintz:2015}. However, their results only hold for $\vert a\vert\ll 1$ or $\lambda\ll 1/27$ and $\vert a\vert<1$, so that from now on we restrict ourselves to this parameter region \footnote{The reason is that the necessary mode stability results, in particular the presence of a spectral gap $\alpha>0$ for quasi-normal mode solutions of the massive wave equation, have only been proven by perturbation of the results on Schwarzschild-de Sitter ($a=0$) \cite{Dyatlov:2010} or Kerr ($\lambda=0$)\cite{Hintz:2021}. One would expect that mode stability holds in the whole subextremal regime, but this remains to be shown, see also \cite[Rem.3.6]{Hintz:2015}.}.

\begin{prop}
\label{prop:wellDef}
If $0<a\ll 1$ or $\lambda\ll 1/27$ and $0<a<1$, then $w(f,h)$ as defined in \eqref{eq:2ptfct} is a well-defined bi-distribution $w\in \mD^\prime(M\times M)$.
\end{prop}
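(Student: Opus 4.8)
The plan is to reduce the statement to decay and regularity estimates for the smooth functions $E(f)|_{\mH}$ and $E(f)|_{\mHc}$ along the non-compact Kruskal directions of the two horizons, and then to supply those estimates from the resolvent/decay results of \cite{Hintz:2015}. First I would record that no \emph{smoothness} issue arises: since $f\in C_0^\infty(M)\subset C_0^\infty(\tilde M)$ and the fundamental solutions $E^\pm$ map $C_0^\infty(\tilde M)$ into $C^\infty(\tilde M)$, the causal propagator $E(f)=(E^+-E^-)(f)$ is smooth on $\tilde M$, and hence its restrictions to the embedded hypersurfaces $\mH$ and $\mHc$ are smooth functions of $(U_+,\Omega_+)$ and $(V_c,\Omega_c)$ respectively. (This is why the conormal subtlety recorded in Lemma~\ref{lem:Hgeo} plays no role here; it enters only for the wavefront set of $w$ later.) The sole obstruction to well-definedness is thus the convergence of the integrals over $U_+\in\bR$ and $V_c\in\bR$, along which $E(f)$ is not compactly supported.

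Next I would recast each $A_X$ in its positive-frequency form. Using $\frac{1}{(L-L'-i\epsilon)^2}=-\int_0^\infty k\,e^{-ik(L-L')}e^{-\epsilon k}\,\td k$ and performing the $L_X,L_X'$ integrations produces Fourier transforms in the Kruskal variable, yielding, for $\phi=E(f)|_{\mH_X}$ and $\psi=E(h)|_{\mH_X}$, an expression of the schematic form $A_X(\phi,\psi)\sim\int_{\bS^2}\int_0^\infty k\,\widehat\phi(-k,\Omega_X)\,\widehat\psi(k,\Omega_X)\,\td k\,\td^2\Omega_X$, in which the regulator $\epsilon$ has disappeared. By Cauchy--Schwarz, convergence of this expression (and existence of the $\epsilon\to0$ limit, by dominated convergence) follows once I control the seminorm $\norm{\phi}^2:=\int_{\bS^2}\int_0^\infty k\,\vert\widehat\phi(k,\Omega_X)\vert^2\,\td k\,\td^2\Omega_X$, a homogeneous $H^{1/2}$-type norm in $L_X$ and $L^2$ on the sphere. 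This norm is finite provided $\phi$ and $\partial_{L_X}\phi$ lie in $L^2$ and $\phi\in L^1$ in the Kruskal variable, locally uniformly on $\bS^2$, so everything reduces to decay estimates on $E(f)|_{\mH}$ and $E(f)|_{\mHc}$.

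The decay estimates are the technical core. Parametrizing $\mH$ by $U_+\in\bR$ with the bifurcation sphere at $U_+=0$, I would control three features of $\phi=E(f)|_{\mH}$: its behaviour as $U_+\to\pm\infty$ and its regularity across the bifurcation sphere. On $\mH^-$ the relation $U_+=-e^{-\kappa_+u_+}$ maps late Killing time $u_+\to+\infty$ to $U_+\to 0^-$, so the approach to the bifurcation sphere is governed by the late-time behaviour of $E(f)$ in the exterior region $\rI$. The decisive input is therefore the decay of solutions of the Klein--Gordon equation in $\rI$: using the resolvent estimates and mode-stability (spectral-gap) results of \cite{Hintz:2015}, building on \cite{Dyatlov:2010} at $a=0$ and on the Kerr analysis behind \cite{Hintz:2021} at $\lambda=0$, one obtains exponential decay in $u_+$ of $E(f)$ in $\rI$. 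Through the exponential change of variables this becomes rapid decay of $\phi$ and all its $U_+$-derivatives as $U_+\to 0^-$, which is exactly what is needed for $\phi$ to extend smoothly across the bifurcation sphere and to be integrable there; in the remaining regime $\vert U_+\vert\to\infty$ the causal support of $E(f)$, together with compactness of $\supp f$, confines or suppresses $\phi$, and combined with the exterior decay gives an $L^1\cap H^1$ rate. The analogous analysis on $\mHc$ in the variable $V_c$, using $\kappa_c$ and the cosmological horizon, runs in parallel. Securing these uniform bounds in the available parameter range $0<a\ll1$ or $\lambda\ll 1/27,\ 0<a<1$ is the main obstacle, and is precisely where the hypotheses of the proposition enter, since that is the regime in which the estimates of \cite{Hintz:2015} are known to hold.

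Finally I would assemble the pieces. Combining the bifurcation-sphere regularity with the two asymptotic bounds shows $\phi=E(f)|_{\mH}\in L^1\cap H^1$ in $U_+$ with $L^2$-control on $\bS^2$, so $\norm{\phi}<\infty$ and $A_+$ converges; the same argument on $\mHc$ handles $A_c$, and hence $w(f,h)=w_+(f,h)+w_c(f,h)$ is finite for all $f,h\in C_0^\infty(M)$. To upgrade this to a genuine element of $\mD^\prime(M\times M)$, I would note that the seminorms controlling $\phi$ are dominated by finitely many Sobolev norms of $f$ over a fixed compact set, using continuity of $E:C_0^\infty(M)\to C^\infty(\tilde M)$, continuity of the restriction map, and uniformity of the decay bounds. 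This yields joint, hence separate, continuity of $(f,h)\mapsto w(f,h)$, and the Schwartz kernel theorem then gives $w\in\mD^\prime(M\times M)$, as claimed.
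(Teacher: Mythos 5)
Your overall strategy (reduce everything to decay of $E(f)$ on the horizons and then invoke \cite{Hintz:2015}) points in the right direction, but the way you distribute the difficulty over $\mH$ is geometrically inverted, and this hides a quantitative gap your argument cannot close. No decay input is needed at the bifurcation sphere: $\mathcal{B}_+$ is an interior point of $\tilde M$, so $E(f)\in C^\infty(\tilde M)$ is automatically smooth across $U_+=0$, and it generically does \emph{not} vanish there (for $f$ supported in $\rII$, the past light cone of $\supp (f)$ meets $\mathcal{B}_+$). Your claim that late-time decay ($u_+\to+\infty$) produces rapid decay of $\phi$ as $U_+\to 0^-$ is therefore false as well as unnecessary; together with the support property $\supp(E(f)\vert_{\mH})\subset\{U_+\le U_f\}$, everything in any region $\{U_+\ge -U_0\}$ is handled by compactness and smoothness alone. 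The genuinely non-compact direction is $U_+\to-\infty$, i.e. $u_+\to-\infty$ along $\mH^-$ toward $i^-$, and the estimate available there is decay toward the \emph{past}, which follows from \cite{Hintz:2015} only after the $t\to-t$, $\varphi\to-\varphi$ reflection (the results of \cite{Hintz:2015} themselves concern the future); it is not the "late-time behaviour in $\rI$" you invoke.

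Second, and decisively: in that regime the decay is only polynomial in the Kruskal variable, $\vert E(f)\vert_{\mH}\vert\lesssim\vert U_+\vert^{-\alpha/\kappa_+}$ (cf.\ \eqref{eq:estU}), where $\alpha$ is the quasi-normal-mode spectral gap, a small perturbative constant. Nothing guarantees $\alpha/\kappa_+>1/2$, let alone $\alpha/\kappa_+>1$, so your claimed $L^1\cap H^1$ control in $U_+$ --- and with it the finiteness of the homogeneous $H^{1/2}$-type norm to which you reduce the problem --- can simply fail. The paper's proof is built precisely around this obstruction: it integrates by parts twice, trading the kernel $(U_+-U_+^\prime-i\epsilon)^{-2}$ for $\log(U_+-U_+^\prime-i\epsilon)$ and the factors for $\partial_{U_+}E(f)$, which decay like $\vert U_+\vert^{-1-\alpha/\kappa_+}$ and hence are integrable for \emph{any} $\alpha>0$; it then splits the $(U_+,U_+^\prime)$-plane into a compact block, mixed blocks, and a doubly-unbounded block, the last being controlled by the bilinear logarithmic estimate of \cite[Lemma 6.3]{Hollands:2000}. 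If you want to keep your Fourier-analytic route, you must first split $E(f)\vert_{\mH}$ by a cutoff into a compactly supported piece plus a piece supported on $\mH^-$, and transform the latter in the Killing coordinate $u_+$, where the decay is exponential, $e^{-\alpha\vert u_+\vert}$; this is exactly what the paper does in Proposition~\ref{prop:K_X}, but there it is used to prove positivity, after well-definedness has already been established by the direct estimate.
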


\begin{proof}
 In \cite{Hintz:2015}, as also analysed in \cite[Thm. 4.4]{Hollands:2019}, the authors prove, after an application of the $t\to-t$, $\varphi\to-\varphi$ symmetry and Sobolev embedding, the estimate
\begin{align}
\label{eq:estGEn}
 \vert\partial^N E^-(f)\vert(t_*,r,\theta,\varphi_*)\leq C e^{\alpha t_*}\, , \quad \partial\in\{\partial_{t_*},\partial_r,\partial_\theta,\partial_{\varphi_*}\}
\end{align}  
for points sufficiently close to $i^-$. $\varphi_*$ corresponds to the $\varphi$-coordinate in the $KdS*$- ($*KdS$-) coordinates near $r_+$ $(r_c)$ \cite{Hintz:2015}. The coordinate $t_*$ corresponds to $t$ on  $(r_++\delta,r_c-\delta)$ for some small $\delta>0$ and approaches $u_+$ near $\mH^-$ and $v_c$ near $\mHc^-$ up to finite terms. This allows the estimates
\begin{align}
e^{\alpha t_*}\leq\begin{cases}
    \tilde{C}(\delta,\delta^\prime)e^{\alpha t} & r\in (r_++\delta^\prime,r_c-\delta^\prime)\\
    \tilde{C}(\delta,\delta^\prime)e^{\alpha u_+} & r\in [r_+,r_++\delta^\prime]\\
    \tilde{C}(\delta, \delta^\prime)e^{\alpha v_-} & r\in [r_c-\delta^\prime,r_c]
    \end{cases}\, ,
\end{align}
for points sufficiently close to $i^-$ for some $0<\delta^\prime<\delta$. The constants depend on the concrete implementation of $t_*$. $\alpha$ is the spectral gap of the Klein-Gordon operator on this spacetime.

As described in \cite{Hollands:2019}, the constants $C$ can be estimated by $C^\prime \norm{f}_{C^{m(N)}}$ using the Fredholm property of the Klein-Gordon operator derived in \cite{Hintz:2015}. Assuming that $\supp(f)\subset K$ for some compact region $K\subset M$, and that $V_i$, $i=1,\dots 4$ are linearly independent smooth vector fields on $K$,
\begin{align}
\norm{f}_{C^m}=\max\limits_{\vert\beta\vert\leq m}\sup\limits_{x\in K} \vert V^\beta f(x)\vert\, ,
\end{align}
where $\beta\in \bN^4$, $\vert\beta\vert=\beta_1+\beta_2+\beta_3+\beta_4$, and $V^\beta=\prod_iV_i^{\beta_i}$. The constant $C^\prime$ will depend on $K$.

Noting that $\partial_{t_*}\to \partial_{u_+}$ for $r\to r_+$ and $\partial_{t_*}\to\partial_{v_c}$ for $r\to r_c$ together with the relation between $L_X$ and $l_X$ then yield for $n=0,1$ and $N\in \bN$
\begin{subequations}
\begin{align}
\label{eq:estt}
   & \vert\partial^N E(f)\vert \leq C^\prime \norm{f}_{C^m} e^{\alpha t} \text{ on }  \{r_++\delta^\prime< r<    r_c-\delta^\prime\} \text{ with } \partial\in\{\partial_t, \partial_r,\partial_\theta,\partial_{\varphi_*}\}\\
\label{eq:estU}
    &\vert \partial_{U_+}^nE(f)\vert \leq C^\prime \norm{f}_{C^m} \vert U_+\vert^{-(n+\alpha/\kappa_+)} \text{ on } \{r_+\leq r\leq r_++\delta^\prime\}\\
\label{eq:estV}
   & \vert\partial_{V_c}^nE(f)\vert\leq C^\prime \norm{f}_{C^m} \vert V_c\vert^{-(n+\alpha/\kappa_c)} \text{ on }  \{r_c-\delta^\prime \leq r\leq r_c\}
\end{align}
\end{subequations}
for any $f\in C_0^\infty(M)$, sufficiently close to $i^-$, where $m\in \bN$ depends on $N$ or $n$ respectively.

In addition, by the support properties of $E$, there are constants $U_f$ and $V_f$ such that $\supp(E(f)\vert_{\mH})\subset\{U_+\leq U_f\}$ and  $\supp(E(f)\vert_{\mHc})\subset\{V_c\leq V_f\}$, and $U_f$, $V_f$ only depend on the support of $f$.

Now, let us consider the first part of \eqref{eq:2ptfct}, 
$w_+(f,h)$.
Utilizing the estimate \eqref{eq:estU}, we can integrate by parts twice to get
\begin{align*}
\vert w_{+}(f,h)\vert=& \lim\limits_{\epsilon\to 0}\left\vert \frac{r_+^2+a^2}{\chi} \int\limits_{\mathclap{\bR\times\bR\times \bS^2}} \partial_{U_+}E(f)\vert_{\mH}(U_+,\Omega_+) \partial_{U_+^\prime} E(h)\vert_{\mH}(U_+^\prime,\Omega_+) \right. \\
&\left.\times\vphantom{\frac{r_+^2+a^2}{\chi} \int\limits_{\bR\times\bR\times \bS^2}}\log(U_+-U_+^\prime-i\epsilon)\td U_+\td U_+^\prime\td^2\Omega_+\right\vert\, .
\end{align*}

Let us keep $\epsilon$ fixed for the moment, and let $U_0>0$ be a constant such that the estimate \eqref{eq:estU} holds for $U_+\leq -U_0$ for both $f$ and $h$. We define $I=[-U_0,\infty)$ and $I^\mathrm{c}=\bR\backslash I$ and split the integral into integrals  $A_j$, $j\in\{1,2,3,4\}$, over the regions $D_j$\footnote{If $U_f\leq-U_0$ or $U_h\leq-U_0$, the the corresponding parts of the integral just drop out.},
\begin{align*}
D_1&=I \times I\times \bS^2\, , & D_2&=I \times I^\mathrm{c}\times \bS^2\, , & D_3&=I \times I^\mathrm{c}\times \bS^2\, , & D_4&=I^\mathrm{c} \times I^\mathrm{c} \times \bS^2\, .
\end{align*}

\begin{figure}
\centering
\includegraphics[scale=1]{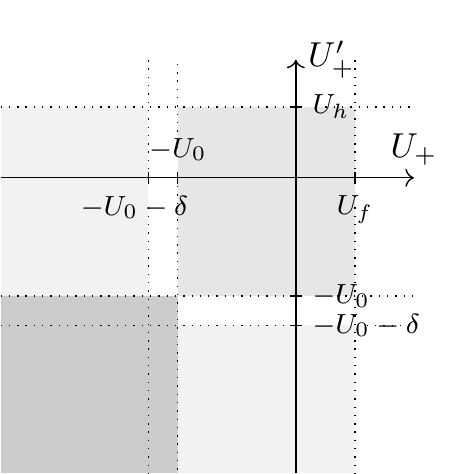}
\caption{The integration regions in the $U_+-U_+^\prime$-plane. The upper-right corner shows the support of the integrand in $D_1$. The lower-left corner indicates $D_4$. The light-gray region and the white stripe above and to the left of $D_4$ are $D_2$ and $D_3$.}
\label{fig:DOI} 
\end{figure}

The integration regions are indicated in figure~\ref{fig:DOI}.

On $D_1$, the integrand is supported on the compact subset $[-U_0,U_f]\times [-U_0,U_h]\times\bS^2$. We thus find
\begin{align*}
\vert A_1\vert\leq& C_1 \sup\limits_{I\times \bS^2} \vert\partial_{U_+} E(f)\vert_{\mH}\vert\sup\limits_{I\times \bS^2}\vert\partial_{U_+}E(h)\vert_{\mH}\vert\, \vert[-2U_0,U_f+U_h]\vert\\
&\times\norm{\log (y-i\epsilon)}_{L^1([-U_0-U_h,U_0+U_f])}
\end{align*} 
for some $C_1>0$. Note that $\log(\cdot-i\epsilon)\in L^1_{loc}(\bR)$, and that it converges for $\epsilon\to 0$ in $L^1_{loc}(\bR)$ to some $l\in L^1_{loc}(\bR)$. In addition, the suprema can be estimated by some $C^k$-norm of $f$ and $h$ due to the continuity of the causal propagator.

To estimate $\vert A_2\vert$, we further split $D_2$ into $D_2^a=I\times [-U_0-\delta, -U_0)\times \bS^2$ and\\ $D_2^b=I\times (-\infty, -U_0-\delta)\times \bS^2$, where $\delta>0$ is some constant. Then the term $A_2^a$ can be estimated similar to $A_1$ by
\begin{align*}
\vert A_2^a\vert\leq& C_2^a \sup\limits_{I\times \bS^2} \vert\partial_{U_+} E(f)\vert_{\mH}\vert\sup\limits_{[-U_0-\delta, -U_0)\times \bS^2}\vert\partial_{U_+}E(h)\vert_{\mH}\vert\,\vert[-2U_0-\delta,U_f-U_0]\vert\\
&\times\norm{\log (y-i\epsilon)}_{L^1([0, U_f+U_0+\delta])}\, .
\end{align*}
For $A_2^b$, we utilize that for any $c>0$, $\beta>0$, there is a constant $C_{c,\beta}>0$ such that\\ $\vert\log (y-i\epsilon)\vert\leq C_{c,\beta}\vert y\vert^\beta$ for all $\vert y\vert>c$. Together with the estimate \eqref{eq:estU} and the coordinate change $U_+^\prime\to -U_+^\prime$, we find
\begin{align*}
\vert A_2^b\vert \leq \tilde{C_2^b} \norm{h}_{C^{m(1)}} \sup\limits_{I\times \bS^2} \vert\partial_{U_+} E(f)\vert_{\mH}\vert \int\limits_{\mathclap{[-U_0,U_f]\times (U_0+\delta,\infty)}}\vert U_+^\prime\vert^{-1-\alpha/\kappa_+}\vert U_++U_+^\prime\vert^\beta\td U_+\td U_+^\prime\, .
\end{align*}
We can now choose $\beta=\alpha/2\kappa_+$ and estimate $\vert U_++U_+^\prime\vert\leq \vert U_+^\prime\vert\left(1+\tfrac{\vert U_f\vert}{U_0}\right)$ to get 
\begin{align*}
\vert A_2^b\vert &\leq \tilde{ \tilde{C_2}}^b \norm{h}_{C^{m(1)}} \sup\limits_{I\times \bS^2} \vert\partial_{U_+} E(f)\vert_{\mH}\vert\, \vert[-U_0,U_f]\vert \left(1+\frac{\vert U_f\vert}{U_0}\right)^{\tfrac{\alpha}{2\kappa_+}} \int\limits_{\mathclap{U_0+\delta}}^\infty \vert U_+^\prime\vert^{-1-\tfrac{\alpha}{2\kappa_+}} \td U^\prime_+\\
 &\leq C_2^b \norm{h}_{C^{m(1)}} \sup\limits_{I\times \bS^2} \vert\partial_{U_+} E(f)\vert_{\mH}\vert\, \vert[-U_0,U_f]\vert\left(1+\frac{\vert U_f\vert}{U_0}\right)^{\tfrac{\alpha}{2\kappa_+}}
\end{align*}

The term $A_3$ can be estimated in the same way as $A_2$.

Finally, we have for $A_4$ by a sign flip in both variables
\begin{align*}
\vert A_4\vert \leq \tilde C_4 \norm{f}_{C^{m(1)}} \norm{h}_{C^{m(1)}} \int\limits_{\mathclap{(U_0,\infty)\times(U_0,\infty)}} (U_+U_+^\prime)^{-1-\alpha/\kappa_+}\vert\log(U_+^\prime-U_+-i\epsilon)\vert\td U_+\td U_+^\prime\, .
\end{align*}
By \cite[Lemma 6.3]{Hollands:2000}, this integral is finite and converges for $\epsilon\to 0$ to some finite number. 

As a result, we find in the limit $\epsilon\to 0$
\begin{align}
\vert w_+(f,h)\vert\leq C(K)\norm{f}_{C^m}\norm{h}_{C^m}\, ,
\end{align}
where $K\subset M$ is a compact subset such that $\supp(f)\subset K$ and $\supp(h)\subset K$ and $m$ is chosen as the maximum of the different values for $m$ appearing in the estimates above.

By interchanging $U\leftrightarrow V$ and $+\leftrightarrow c$, the same estimates can be obtained for $w_c(f,h)$.
Hence for any $f,h\in C_0^\infty(K)$, where $K\subset M$ is some compact set, there is a $m\in \bN$, such that
\begin{align}
\label{eq:estiW}
\vert w(f,h)\vert\leq C(K) \norm{f}_{C^m}\norm{h}_{C^m}\, .
\end{align}
Thus $w(f,h)$ is a well-defined bi-distribution and by the Schwartz kernel theorem, its kernel $w(x,y)$ is in  $\mD^\prime(M\times M)$.
\end{proof}

As a result of the estimates \eqref{eq:estU} and \eqref{eq:estV} and their coordinate transform, we get that for any $f\in C_0^\infty(M)$, 
\begin{align}
&E(f)\vert_{\mH_X}\in S(\mH_X)\equiv \left\{\vphantom{ C_{\phi,N}(1+\vert L_X\vert)^{-(\alpha/\kappa_X)-N}}\phi \in C^\infty(\mH_X) : \exists L_\phi, C_{\phi,N}, N=0,1:\right.\\\nonumber
    & \phi(L_X,\Omega_X)=0\,\forall L_X\geq L_\phi\left.\text{ and } \vert\partial_{L_X}^N\phi(L_X,\Omega_X)\vert\leq C_{\phi,N}(1+\vert L_X\vert)^{-\frac{\alpha}{\kappa_X}-N}\right\}
\end{align}
and for any $f\in C_0^\infty(\rI)$
\begin{align}
\label{eq:S(H-)}
&E(f)\vert_{\mH_X}\in S(\mH_X^-)\equiv \left\{\vphantom{ C_{\phi,N}e^{-\alpha \vert l_X\vert}}\phi \in C^\infty(\mH_X^-): \exists l_{\phi}, C_{\phi,N}, N= 0,1:\right.\\\nonumber
    &\left. \phi(l_X,\Omega_X)=0\, \forall l_X\geq l_\phi \text{ and }\vert\partial^N_{l_X}\phi(l_X,\Omega_X)\vert\leq C_{\phi,N}e^{-\alpha \vert l_X\vert}\right\}\, .
\end{align}

Next, we can prove the necessary properties for $w(f,h)$ to define a two-point function of a state on the algebra $\mathcal{A}$. First of all, we notice that $E(\mathcal{K}(f))=0$, hence $w(f,h)$ is a weak bi-solution to the Klein-Gordon equation \eqref{eq:KGE}. 

To prove positivity, let us note that the results of \cite[sec. 3]{Dappiaggi:2009} can be translated to the present case by a careful adaptation of the appearing constants, see also \cite{Kay:1988}. In particular, the Hilbert space isomorphisms provided by \cite[Prop. 3.2 a)]{Dappiaggi:2009} and \cite[Prop. 3.3 a)]{Dappiaggi:2009} still hold if the constant $r_S^2$ in front of the integrals in $A_X$ (or $\lambda_{KW}$ in the notation of \cite{Dappiaggi:2009}) is replaced by $(r_X^2+a^2)/\chi$, and if, in \cite[Prop. 3.3 a)]{Dappiaggi:2009}, $(2r_S)^{-1}$ is replaced by $\kappa_X$ to accommodate for the different connection between $l_X$ and $L_X$:

\begin{prop}
\label{prop:FT version}
\begin{enumerate}
\item Equipping $C_0^\infty(\mH_X)$ with the hermitian sesquilinear form $A_X(\bar{\cdot},\cdot)$, the map
\begin{subequations}
\begin{align} 
F:C_0^\infty(\mH_X)&\to L^2(\bR_+\times\bS^2; \nu_X(\eta) \td\eta\td^2\Omega_X)\\
\phi&\mapsto F(\phi)=\left.(2\pi)^{-\tfrac{1}{2}}\int e^{i\eta L_X}\phi(L_X,\theta,\varphi_X)\td L_X\right\vert_{\{\eta\geq 0\}} \, ,
\end{align}
\end{subequations}
with $\nu_X(\eta)=2\eta (r_X^2+a^2)\chi^{-1}$, is an isometry and by continuity and linearity extends to a Hilbert space isomorphism mapping $\overline{(C_0^\infty(\mH_X),A_X(\bar{\cdot},\cdot))}$, the Hilbert completion of $(C_0^\infty(\mH_X),A_X(\bar{\cdot},\cdot))$, onto $L^2(\bR_+\times\bS^2; \nu_X(\eta)\td\eta\td^2\Omega_X)$ \cite[Prop. 3.2 a)]{Dappiaggi:2009}.
\item The map
\begin{subequations}
\begin{align}
\tilde{F}:C_0^\infty(\mH_X^-)&\to L^2(\bR\times\bS^2; \mu_X(\omega)\td\omega \td^2\Omega_X)\\
\phi&\mapsto \tilde F(\phi)=(2\pi)^{-\tfrac{1}{2}}\int e^{i\omega l_X}\phi(l_X,\theta,\varphi_X)\td l_X \, ,\\
\mu_X(\omega)&=\frac{r_X^2+a^2}{\chi}\frac{\omega e^{\pi\omega/\kappa_X}}{\sinh\left(\pi\omega/\kappa_X\right)}\, ,
\end{align}
\end{subequations}
is an isometry when $C_0^\infty(\mH_X^-)$ is equipped with the hermitian sesquilinear form $A_X(\bar{\cdot},\cdot)$. $\tilde F$ uniquely extends to a Hilbert space isomorphism from  $\overline{C_0^\infty(\mH_X^-)}$, as a Hilbert subspace of $\overline{(C_0^\infty(\mH_X), A_X(\bar{\cdot},\cdot))}$,  to\\ $ L^2(\bR\times\bS^2; \mu_X(\omega)\td\omega \td^2\Omega_X)$  \cite[Prop. 3.3 a)]{Dappiaggi:2009}.
\item Any $\phi\in S(\mH_X^-)$ can be identified with an element in $\overline{(C_0^\infty(\mH_X),A_X(\bar{\cdot},\cdot))}$ as described in \cite[Prop. 3.3 b)]{Dappiaggi:2009}.  Moreover, this identification is such that $\tilde F$ agrees with the Fourier-Plancherel transform in $l_X$.
\end{enumerate}
\end{prop}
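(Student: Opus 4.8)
My plan is to follow the strategy of \cite[Sec.~3]{Dappiaggi:2009} and reduce each of the three claims to an explicit computation of $A_X(\bar{\cdot},\cdot)$ in Fourier space, tracking how the geometric prefactor $(r_X^2+a^2)/\chi$ and the surface gravity $\kappa_X$ replace the Schwarzschild constants $\lambda_{KW}$ and $(2r_S)^{-1}$. For part (1) I would first record the distributional identity
\[
-\frac{1}{\pi}\lim_{\epsilon\to0}\frac{1}{(L_X-L_X'-i\epsilon)^2}=\frac{1}{2\pi}\int_0^\infty 2\eta\,e^{-i\eta(L_X-L_X')}\,\td\eta\, ,
\]
valid in $\mD^\prime(\bR^2)$ (obtained by writing the kernel as $-\partial_{L_X}(L_X-L_X'-i\epsilon)^{-1}$ and closing the contour in the upper half-plane). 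Substituting this into $A_X$ and using the Plancherel theorem in $L_X$ together with Fubini in the compact directions $\Omega_X$ turns the sesquilinear form into
\[
A_X(\bar\phi,\psi)=\frac{r_X^2+a^2}{\chi}\int_{\bR_+\times\bS^2}2\eta\,\overline{F(\phi)}\,F(\psi)\,\td\eta\,\td^2\Omega_X=\langle F\phi,F\psi\rangle_{L^2(\nu_X)}\, ,
\]
which is exactly the isometry statement with $\nu_X(\eta)=2\eta(r_X^2+a^2)/\chi$. Surjectivity onto $L^2(\bR_+\times\bS^2;\nu_X\td\eta\td^2\Omega_X)$ then follows because the positive-frequency parts of the Fourier transforms of $C_0^\infty$ functions (tensored with spherical harmonics) are dense there, so $F$ extends by continuity to the claimed Hilbert-space isomorphism of the completion.

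For part (2) I would pass to the coordinate $l_X$ through the defining relation $L_X=-e^{-\kappa_X l_X}$, which holds uniformly for both horizons on $\mH_X^-$. Inserting $\td L_X=\kappa_X e^{-\kappa_X l_X}\td l_X$ into $A_X$, the two Jacobian factors combine with the denominator so that all dependence on $l_X+l_X'$ cancels and the kernel collapses to the stationary, thermal form
\[
\frac{\kappa_X^2}{4\sinh^2\!\big(\tfrac{\kappa_X}{2}(l_X-l_X')-i\epsilon\big)}\, ,
\]
which is where $\kappa_X$ enters in place of $(2r_S)^{-1}$. Computing its Fourier transform in the difference variable by a residue argument (closing the contour and summing the double poles of $\sinh^{-2}$ at $l_X-l_X'\in i\tfrac{2\pi}{\kappa_X}\bN$) produces the weight $\propto\omega\,e^{\pi\omega/\kappa_X}/\sinh(\pi\omega/\kappa_X)$, and multiplying by $(r_X^2+a^2)/\chi$ reproduces $\mu_X$. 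Plancherel then gives $A_X(\bar\phi,\psi)=\langle\tilde F\phi,\tilde F\psi\rangle_{L^2(\mu_X)}$, so $\tilde F$ is an isometry; identifying $\overline{C_0^\infty(\mH_X^-)}$ as a closed subspace of $\overline{(C_0^\infty(\mH_X),A_X(\bar{\cdot},\cdot))}$ amounts to viewing $C_0^\infty(\mH_X^-)$ as those elements of $C_0^\infty(\mH_X)$ supported in $\{L_X<0\}$ and matching the two Fourier pictures on that subspace.

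Finally, for part (3) I would identify a general $\phi\in S(\mH_X^-)$ with an element of the completion by truncation and mollification: choosing smooth cutoffs $\zeta_n$ I would show that $\zeta_n\phi\in C_0^\infty(\mH_X^-)$ is Cauchy in the $A_X$-norm, so by part (2) its limit is the desired element, and continuity of $\tilde F$ forces it to coincide with the ordinary Fourier--Plancherel transform of $\phi$. The step I expect to be the main obstacle is precisely the finiteness/Cauchy estimate at the ultraviolet end: since $\mu_X(\omega)$ grows linearly as $\omega\to+\infty$, square-integrability of $\tilde F\phi$ against $\mu_X$ is borderline given only the decay of $\phi$ and $\partial_{l_X}\phi$ assumed in \eqref{eq:S(H-)}, and one must exploit the smoothness of $\phi$ together with the analyticity of $\tilde F\phi$ in the strip $\lvert\mathrm{Im}\,\omega\rvert<\alpha$ (coming from the exponential decay) to obtain the extra decay of $\tilde F\phi$ that controls both this ultraviolet tail and the infrared point $\omega=0$, where $\mu_X$ is regular but the analogous weight $\nu_X$ degenerates. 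Keeping the $i\epsilon$ prescription consistent through the nonlinear substitution $L_X\mapsto l_X$ in part (2) is the other delicate point, as its sign fixes the exponent in $\mu_X$ and hence the thermal (KMS) character of the state at temperature $\kappa_X/2\pi$.
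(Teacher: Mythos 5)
Your proposal is correct and follows essentially the paper's own route: the paper proves this proposition simply by importing \cite[Sec.~3]{Dappiaggi:2009} (Props.~3.2\,a), 3.3\,a) and App.~C) with the constant $\lambda_{KW}$ replaced by $(r_X^2+a^2)/\chi$ and $(2r_S)^{-1}$ replaced by $\kappa_X$, and your three steps --- the Fourier representation of the kernel $(L_X-L_X'-i\epsilon)^{-2}$ plus Plancherel for part (1), the substitution $L_X=-e^{-\kappa_X l_X}$ collapsing the kernel to $\kappa_X^2/\bigl(4\sinh^2(\tfrac{\kappa_X}{2}(l_X-l_X')-i\epsilon)\bigr)$ and producing the thermal weight $\mu_X$ for part (2), and approximation of $S(\mH_X^-)$ by compactly supported functions for part (3) --- are exactly the computations behind those cited results.

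The one place you diverge is part (3), where you flag a ``borderline'' ultraviolet problem and propose to cure it with analyticity of $\tilde F\phi$ in the strip $\vert\mathrm{Im}\,\omega\vert<\alpha$. That difficulty does not exist, and strip analyticity would not address it anyway: exponential decay of $\phi$ in $l_X$ governs the regularity of $\tilde F\phi$, not its decay as $\omega\to+\infty$, which comes from smoothness of $\phi$. The short observation, which is precisely the content of the paper's sketch following \cite[App.~C]{Dappiaggi:2009}, is that $\mu_X(\omega)$ is exponentially small as $\omega\to-\infty$, regular at $\omega=0$, and only linearly growing as $\omega\to+\infty$, so $\mu_X(\omega)\leq C(1+\omega^2)$; combined with the isometry of part (2), this bounds the $A_X$-norm on $C_0^\infty(\mH_X^-)$ by the $H^1$-norm in $l_X$. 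Since $\phi\in S(\mH_X^-)$ satisfies $\phi,\partial_{l_X}\phi\in L^2(\bR\times\bS^2,\td l_X\td^2\Omega_X)$ by definition (the cases $N=0,1$ in \eqref{eq:S(H-)}), any sequence in $C_0^\infty(\mH_X^-)$ converging to $\phi$ in $H^1$ --- your truncations $\zeta_n\phi$ do, with the exponential decay controlling the $\zeta_n'\phi$ terms --- is automatically Cauchy in the $A_X$-norm, and the difference of any two such sequences tends to zero; this yields existence and well-definedness of the identification and the agreement of $\tilde F$ with the Fourier--Plancherel transform in one stroke.
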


Let us provide a brief sketch for the proof of the third point in Proposition~\ref{prop:FT version} as given in \cite[App. C]{Dappiaggi:2009}. The starting point for the proof is that for all $\phi\in S(\mH_X^-)$, $\phi$ and $\partial_{l_X}\phi$ lie in $L^2(\bR\times\bS^2, \td l_X\td \Omega_X)$. Therefore, $\phi$ lies in the Sobolev space $H^1(\bR\times\bS^2)_{l_X}$ of functions which are square integrable and have a square integrable $l_X$-derivative. It remains to show that iff $(\phi_n)_{n\in\bN},\, (\phi^\prime_n)_{n\in\bN}\subset C_0^\infty(\mH_X^-)$ are two sequences converging to $\phi\in S(\mH^-_X)$ in $H^1(\mH_X^-)$, then in $\overline{(C_0^\infty(\mH_X), A_X(\bar{\cdot},\cdot))}$ they are of Cauchy type and their difference converges to zero. The claim follows from the density of $C_0^\infty(\bR\times\bS^2)$ in $H^1(\bR\times\bS^2)_{l_X}$, an application of the Fourier-Plancherel transform and the isometry property of $\tilde F$.

We may now define the maps
\begin{subequations}
\begin{align}
&K_X: C_0^\infty(M)\to L^2(\bR_+\times\bS^2;\nu_X(\eta)\td\eta\td^2\Omega_X), \\\nonumber
& K_X(f) = F(\xi E(f)\vert_{\mH_X})+F((1-\xi)E(f)\vert_{\mH_X})\, ;\\
&K_X^\rI: C_0^\infty(\rI)\to L^2(\bR\times\bS^2;\mu_X(\omega)\td\omega\td^2\Omega_X)\, ,\\\nonumber &K_X^\rI(f)=\tilde{F}(E(f)\vert_{\mH_X})\, ,
\end{align}
\end{subequations}
where $\xi\in C^\infty(\bR_{L_X})$ is a real cutoff function such that $\xi(x)=1$ for $x>x_0$ and $\xi(x)=0$ for $x<x_1$ for some $x_1<x_0<0$. This is well-defined since $\xi E(f)\vert_{\mH_X}$ is compactly supported for any such $\xi$ and any $f\in C_0^\infty(M)$, while the second term can be understood by using the third part of the above proposition, see \cite{Dappiaggi:2009}. They satisfy

\begin{prop}
\label{prop:K_X}
The maps $K_X$ are independent of $\xi$, linear, and we can write
\begin{align}
w(f,h)=&\left\langle K_+(\overline{f}),K_+(h)\right\rangle_{L^2(\bR_+\times\bS^2;\nu_+(\eta)\td\eta\td^2\Omega_+)}\\\nonumber
&+\left\langle K_c(\overline{f}),K_c(h)\right\rangle_{L^2(\bR_+\times\bS^2;\nu_c(\eta)\td\eta\td^2\Omega_c)}
\end{align}
\end{prop}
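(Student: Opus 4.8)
The plan is to recognize that $K_X$ is simply the isometry $F$ of Proposition~\ref{prop:FT version} applied to the full boundary datum $E(f)\vert_{\mH_X}$, once the latter has been identified with an element of the Hilbert completion $\overline{(C_0^\infty(\mH_X),A_X(\bar\cdot,\cdot))}$; everything then reduces to the defining isometry relation together with the reality of the causal propagator $E$. Throughout I write $\langle\cdot,\cdot\rangle_{\nu_X}$ for the inner product of $L^2(\bR_+\times\bS^2;\nu_X(\eta)\td\eta\td^2\Omega_X)$ and $\phi:=E(f)\vert_{\mH_X}\in S(\mH_X)$. The first step is to embed $S(\mH_X)$ into the completion via the splitting $\phi=\xi\phi+(1-\xi)\phi$. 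The summand $\xi\phi$ lies in $C_0^\infty(\mH_X)$, since $\phi$ is supported in $\{L_X\le L_\phi\}$ while $\xi$ vanishes for $L_X<x_1$. The summand $(1-\xi)\phi$ is supported in $\{L_X\le x_0<0\}$; converting the polynomial $S(\mH_X)$-bounds to the coordinate $l_X$ through $L_X=-e^{-\kappa_X l_X}$, so that $\partial_{l_X}=-\kappa_X L_X\partial_{L_X}$, turns them into the exponential decay $e^{-\alpha\vert l_X\vert}$ as $l_X\to-\infty$, i.e.\ $(1-\xi)\phi\in S(\mH_X^-)$. By part~3 of Proposition~\ref{prop:FT version} this second piece is identified with an element of the completion, so $\phi$ itself defines such an element and $K_X(f)=F(\xi\phi)+F((1-\xi)\phi)=F(\phi)$, with $F$ the continuous isometric extension furnished by part~1.

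For the $\xi$-independence, note that for two admissible cutoffs $\xi,\xi'$ the difference $\xi-\xi'$ is compactly supported, so $(\xi-\xi')\phi\in C_0^\infty(\mH_X)$. Linearity of the extended $F$ on the completion then gives $K_X^{\xi}(f)-K_X^{\xi'}(f)=F((\xi-\xi')\phi)+F((\xi'-\xi)\phi)=F(0)=0$, which makes the embedding $S(\mH_X)\hookrightarrow\overline{(C_0^\infty(\mH_X),A_X(\bar\cdot,\cdot))}$ canonical. Linearity of $f\mapsto K_X(f)$ is then immediate, since the restriction $f\mapsto E(f)\vert_{\mH_X}$, the multiplication by $\xi$, and the extended $F$ are all linear.

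For the inner-product formula I use that $\mathcal K$ has real coefficients, so $E^\pm$ and hence $E$ are real operators; thus $E(\bar f)\vert_{\mH_X}=\overline{E(f)\vert_{\mH_X}}$ and $K_X(\bar f)=F(\overline{E(f)\vert_{\mH_X}})$. Applying the isometry relation $\langle F(\phi),F(\psi)\rangle_{\nu_X}=A_X(\bar\phi,\psi)$ with $\phi=\overline{E(f)\vert_{\mH_X}}$ and $\psi=E(h)\vert_{\mH_X}$ gives
\begin{align*}
\langle K_X(\bar f),K_X(h)\rangle_{\nu_X}=A_X\!\left(E(f)\vert_{\mH_X},E(h)\vert_{\mH_X}\right)=w_X(f,h)\, ,
\end{align*}
where the conjugations cancel precisely because $\bar f$ occupies the first slot and $E$ is real. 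Summing over $X\in\{+,c\}$ reproduces $w(f,h)=w_+(f,h)+w_c(f,h)$.

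The main obstacle is the first step: one must rigorously justify that the isometry relation of Proposition~\ref{prop:FT version}, which is guaranteed a priori only on $C_0^\infty(\mH_X)$ and (via part~3) on the $S(\mH_X^-)$-piece, continues to hold for the mixed, non-compactly-supported element $E(f)\vert_{\mH_X}$, and that the value it produces agrees with $A_X$ evaluated directly from the $i\epsilon$-regularized kernel of Definition~\ref{def:2ptfct}. Concretely one needs both that $E(f)\vert_{\mH_X}$ genuinely lies in the completion and that $A_X$ is continuous in the completion norm; the latter is exactly what the convergence estimates of Proposition~\ref{prop:wellDef}, built on \eqref{eq:estU} and \eqref{eq:estV}, supply. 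A density argument approximating $E(f)\vert_{\mH_X}$ by compactly supported functions in the $H^1$-type graph norm used in the sketch of part~3 then closes the gap and simultaneously confirms that $\langle K_X(\bar f),K_X(h)\rangle_{\nu_X}$ is independent of all choices.
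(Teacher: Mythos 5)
Your proposal is correct and takes essentially the same route as the paper: both rest on the $\xi$-splitting of $E(f)\vert_{\mH_X}$, the identification of the $(1-\xi)$-piece with an element of the Hilbert completion via part 3 of Proposition~\ref{prop:FT version}, the isometry property of $F$, and the reality of $E$ (so that $\overline{E(f)}=E(\bar f)$). The only difference is presentational: the paper expands $A_X\left(E(f)\vert_{\mH_X},E(h)\vert_{\mH_X}\right)$ into four terms and applies the isometry to each, whereas you treat $E(f)\vert_{\mH_X}$ as a single completion element; the technical point you flag at the end (agreement of the abstract completion inner product with the $i\epsilon$-regularized integral) is precisely what that four-term expansion, together with the cited results of Dappiaggi--Moretti--Pinamonti underlying Proposition~\ref{prop:FT version}, is meant to settle.
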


\begin{proof}
Let  $\xi$ and $\xi^\prime$ be two functions satisfying the above conditions, and $f\in C_0^\infty(M)$. Then
\begin{align*}
 &F(\xi E(f)\vert_{\mH_X})+F((1-\xi)E(f)\vert_{\mH_X})- F(\xi^\prime E(f)\vert_{\mH_X})-F((1-\xi^\prime)E(f)\vert_{\mH_X})\\
& = F((\xi-\xi^\prime) E(f)\vert_{\mH_X})-F((\xi-\xi^\prime)E(f)\vert_{\mH_X})=0\, .
\end{align*}
Hence, the map is independent of the choice of $\xi$. The linearity follows from the fact that $E$, $F$, and multiplication by a bounded smooth functions are all linear maps.
Moreover, we have by the isometry property of $F$, for $f,h\in C_0^\infty(M)$,
\begin{align*}
w_X(f,h) =& A_X(E(f)\vert_{\mH_X},E(h)\vert_{\mH_X}) \\
=&A_X(\xi E(f)\vert_{\mH_X},\xi E(h)\vert_{\mH_X})+A_X((1-\xi)E(f)\vert_{\mH_X},\xi E(h)\vert_{\mH_X})\\
&+A_X(\xi E(f)\vert_{\mH_X},(1-\xi)E(h)\vert_{\mH_X})+A_X((1-\xi)E(f)\vert_{\mH_X},(1-\xi) E(h)\vert_{\mH_X})\\
=&\left\langle F(\overline{\xi E(f)\vert_{\mH_X}}),F(\xi E(h)\vert_{\mH_X})\right\rangle_{L^2}+\left\langle F(\overline{(1-\xi) E(f)\vert_{\mH_X}}),F(\xi E(h)\vert_{\mH_X})\right\rangle_{L^2}\\
&+\left\langle F(\overline{\xi E(f)\vert_{\mH_X}}),F((1-\xi) E(h)\vert_{\mH_X})\right\rangle_{L^2}\\\nonumber
&+\left\langle F(\overline{(1-\xi) E(f)\vert_{\mH_X}}),F((1-\xi) E(h)\vert_{\mH_X})\right\rangle_{L^2}\\
=&\left\langle K_X(\overline{f}),K_X(h)\right\rangle_{L^2}\, ,
\end{align*}
where we used $L^2$ as a short-hand notation for $L^2(\bR_+\times\bS^2;\nu_X(\eta)\td\eta\td^2\Omega_X)$. In the last step, we used that $\xi$ is real and that $\overline{E(f)}=E(\bar f)$. Combining the results for the two horizons gives the desired identity.
\end{proof}

As an immediate consequence of this result, the two-point function satisfies positivity. It remains to show the commutator property.

\subsection{The commutator property}
In this section, we show

\begin{prop}
For $0<a\ll 1$ or $\lambda\ll 1/27$ and $0<a<1$, $w(f,h)$ satisfies the commutator property, i.e.
\begin{align}
w(f,h)-w(h,f)=iE(f,h)\quad \forall f,h\in C_0^\infty(M)\, .
\end{align}
\end{prop}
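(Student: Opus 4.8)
The plan is to establish the commutator property by reducing the identity $w(f,h)-w(h,f)=iE(f,h)$ to a corresponding identity for the boundary data on the horizons $\mH$ and $\mHc$, and then to verify that identity directly from the explicit form of $A_X$. Concretely, writing $w=w_++w_c$ with $w_X(f,h)=A_X(E(f)|_{\mH_X},E(h)|_{\mH_X})$, I would first compute the antisymmetric part of each $A_X$. Since
\begin{align*}
A_X(\phi,\psi)-A_X(\psi,\phi)=-\lim_{\epsilon\to 0}\frac{r_X^2+a^2}{\chi\pi}\int\phi(L_X,\Omega_X)\psi(L_X',\Omega_X)\left[\frac{1}{(L_X-L_X'-i\epsilon)^2}-\frac{1}{(L_X'-L_X-i\epsilon)^2}\right]\td L_X\td L_X'\td^2\Omega_X\, ,
\end{align*}
the key input is the distributional identity $\tfrac{1}{(x-i\epsilon)^2}-\tfrac{1}{(x+i\epsilon)^2}\to 2\pi i\,\delta'(x)$ as $\epsilon\to 0^+$ (obtained from $\tfrac{1}{x\mp i\epsilon}\to \mathrm{P}\tfrac1x\pm i\pi\delta(x)$ by differentiation). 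After symmetrizing the $\epsilon$-prescription using the evenness of the second kernel under $L_X\leftrightarrow L_X'$, this collapses the double integral to a single boundary integral, yielding
\begin{align*}
A_X(\phi,\psi)-A_X(\psi,\phi)=\frac{2(r_X^2+a^2)}{\chi}\int\left(\phi\,\partial_{L_X}\psi-\psi\,\partial_{L_X}\phi\right)(L_X,\Omega_X)\,\td L_X\,\td^2\Omega_X\, ,
\end{align*}
i.e. a multiple of the symplectic form of the characteristic initial data on $\mH_X$.

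The second, geometric step is to identify this boundary expression with $iE(f,h)$. The strategy is to use that $E(f,h)=\sigma(E(f),E(h))$ is independent of the Cauchy surface (by Gauß's law, as stated after \eqref{eq:SymplMorph}), and then to deform the Cauchy surface onto the characteristic surface $\mH^L\cup\{t=t_0\}\cup\mHc^R$, using exactly the family $\Sigma_{n,t_0}=\{u_n=t_0\}$ of Cauchy surfaces that was shown to converge to this characteristic hypersurface in the proof of global hyperbolicity. On the null pieces $\mH$ and $\mHc$ the induced symplectic form degenerates to the characteristic symplectic form along the null generators, whose affine parameters are precisely the Kruskal coordinates $U_+$ and $V_c$; matching the volume element and the factor coming from $n^a_\Sigma$ produces the geometric prefactor $(r_X^2+a^2)/\chi$ appearing in Definition~\ref{def:2ptfct}. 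This is the step where the specific normalization of $A_X$ was engineered so that the coefficients agree, so I expect the bookkeeping of the constant $(r_X^2+a^2)/\chi$ and the factor $2$ to be the delicate part.

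The main obstacle, and the point requiring genuine care rather than routine calculation, is the justification of the limiting/deformation argument: $E(f)|_{\mH_X}$ is not compactly supported on the null horizons, so one cannot naively apply Gauß's law on the noncompact characteristic surface. The plan is to control this using the decay established above — namely $E(f)|_{\mH_X}\in S(\mH_X)$, with the bounds \eqref{eq:estU} and \eqref{eq:estV} guaranteeing integrability of the boundary flux and vanishing of the contributions from the region near the bifurcation spheres and near $i^-$ as the Cauchy surfaces $\Sigma_{n,t_0}$ are pushed to the horizons. Thus the convergence of the surface integrals, already used to prove well-definedness in Proposition~\ref{prop:wellDef}, feeds directly into the validity of exchanging the limit with the symplectic-flux integral. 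Once this interchange is justified, summing the contributions from $\mH$ and $\mHc$ reproduces $iE(f,h)$ on the full Cauchy surface, completing the proof.
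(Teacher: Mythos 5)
Your proposal follows essentially the same route as the paper: the paper likewise antisymmetrizes the horizon bilinear forms via the distributional identity $\Im (x-i0^+)^{-2}=-\pi\delta^{(1)}(x)$ to obtain the characteristic symplectic integrals $i\tfrac{r_X^2+a^2}{\chi}\int(\phi\,\partial_{L_X}\psi-\psi\,\partial_{L_X}\phi)$ over $\mH$ and $\mHc$, then evaluates $E(f,h)=\sigma(E(f),E(h))$ on the mixed Cauchy surface $\mH^L\cup\mathcal{B}_+\cup\{t=t_0\}\cup\mathcal{B}_c\cup\mHc^R$ and pushes $t_0\to-\infty$, using the decay estimates \eqref{eq:estt}, \eqref{eq:estU}, \eqref{eq:estV} together with Stokes' theorem near the bifurcation sphere and dominated convergence — exactly the deformation-plus-decay argument you outline. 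The only corrections needed are the bookkeeping items you yourself flagged: the limit is $-2\pi i\,\delta'(x)$ rather than $+2\pi i\,\delta'(x)$, and the resulting prefactor is $i(r_X^2+a^2)/\chi$ (the factor $2$ is halved upon antisymmetrizing the single-derivative term), which is what makes the boundary expression equal $iE(f,h)$ rather than a real multiple of the characteristic symplectic form.
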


\begin{proof}
The proof follows closely that of \cite[Thm. 2.1]{Dappiaggi:2009}: first, we notice that by using the identity
$\Im (x-i0^+)^{-2}=-\pi\delta^{(1)}(x)$ and partial integration, one finds \cite{Hollands:2019}
\begin{align*}
&w(f,h)-w(h,f)\\\nonumber
&=i \frac{r_+^2+a^2}{\chi}\int\left[E(f)\vert_{\mH} \partial_{U_+} E(h)\vert_{\mH}-E(h)\vert_{\mH} \partial_{U_+} E(f)\vert_{\mH}\right](U_+,\Omega_+)\td U_+\td^2\Omega_+\\\nonumber
&+i \frac{r_c^2+a^2}{\chi}\int\left[E(f)\vert_{\mHc} \partial_{V_c} E(h)\vert_{\mHc}-E(h)\vert_{\mHc} \partial_{V_c} E(f)\vert_{\mHc}\right](V_c,\Omega_c)\td V_c\td^2\Omega_c\, .
\end{align*}

Second, by \eqref{eq:SymplMorph}, $E(f,h)=\sigma(E(f),E(h))$ and $\sigma$ is independent of the Cauchy surface $\Sigma$ it is computed on. So let us take as a Cauchy surface \cite{Dappiaggi:2009} $\mH^L\cup \mathcal{B}_+\cup \Sigma_{t_0}\cup\mathcal{B}_c\cup \mHc^R$, where\\ $\Sigma_{t_0}=\{t=t_0, r_+<r<r_c\}$.
Defining
\begin{align}
J_a[f,h]=E(f)\nabla_aE(h)-E(h)\nabla_aE(f)\, ,
\end{align}
one can then write
\begin{align*}
E(f,h)=\sigma(E(f),E(h))
=&\frac{r_+^2+a^2}{\chi}\int\limits_{\mH^L}J_{U_+}[f,h]\vert_{\mH}\td U_+\td^2\Omega_+\\\nonumber
&+\int\limits_{\Sigma_{t_0}}J_a[f,h]n^a\td vol_\gamma + \frac{r_c^2+a^2}{\chi}\int\limits_{\mHc^R}J_{V_c}[f,h]\vert_{\mHc}\td V_c\td^2\Omega_c\, .
\end{align*}
We then focus on the integral over $\Sigma_{t_0}$, and aim to take the limit $t_0\to -\infty$. 

As a first step, we will split the integral further into integrals over
\begin{align*}
\Sigma_+=\Sigma_{t_0}\cap\{r_+<r\leq r_++\delta^\prime\}\, , \; \Sigma_c=\Sigma_{t_0}\cap\{r_c-\delta^\prime\leq r<r_c\}\text{ and }\Sigma_0=\Sigma_{t_0}\backslash(\Sigma_+\cup\Sigma_c)\, , 
\end{align*}
 where $\delta^\prime$ is the same constant that appears in the estimate \eqref{eq:estt}.

Focussing first on $\Sigma_0$, we use Boyer-Lindquist coordinates to compute the normal vector
\begin{align*}
n^a_{\Sigma_0}=(g^{tt})^{\frac{1}{2}}\left((\partial_t)^a-\frac{g_{t\varphi}}{g_{\varphi\varphi}}(\partial_\varphi)^a\right)
\end{align*}
and the determinant $\vert\gamma\vert=\vert g_{rr}g_{\theta\theta}g_{\varphi\varphi}\vert$. Here we denote by $g_{\mu\nu}$ the elements of the metric in the Boyer-Lindquist coordinates, and by $g^{\mu\nu}$ the element of the inverse metric, which can be found in \cite{Salazar:2017}. Written out explicitly, the integral then reads
\begin{align}
\int\limits_{\Sigma_0} J_a[f,h] n^a_{\Sigma_0}\td vol \gamma = \int\limits_{r_++\delta^\prime}^{r_c-\delta^\prime} \int\limits_{\bS^2}&\left[ \left( \frac{(r^2+a^2)^2}{\Delta_r} -\frac{a^2\sin^2\theta}{\Delta_\theta} \right) J_t[f,h]\right.\\*\nonumber
 &+ \left. a \left( \frac{r^2+a^2}{\Delta_r} - \frac{1}{\Delta_\theta} \right) J_\varphi[f,h] \right] \td^2\Omega\td r
\end{align}
The integration region is compact, and the factors appearing in front of $J_a[f,h]$ can be bounded by constants of order $(\delta^\prime)^{-1}$. Together with the estimate \eqref{eq:estt}, this means
\begin{align}
\left\vert\, \int\limits_{\Sigma_0} J_a[f,h] n^a\td vol \gamma \,\right\vert\leq C(\delta^\prime,f,h)e^{2\alpha t_0}\, ,
\end{align}
and the contribution of this term vanishes as $t_0\to -\infty$.

Next, let us analyse the integral over $\Sigma_+$. The integral over $\Sigma_c$ then works analogously. To do so, we change to Kruskal-type coordinates and note that 
\begin{align*}
\Sigma_+=\{V_+=-e^{-2\kappa_+t_0}U_+\}\cap \{U_+(t_0,r_++\delta^\prime)\leq U_+\leq 0\}\, .
\end{align*}
 Starting from a fixed $t_0$, we choose a $U_1< 0$ such that $U_+(t,r_++\delta^\prime)\leq U_1\forall t\leq t_0$. 
We then note that for any $t_0$, $\Sigma_+\cap\{U_1\leq U_+\}$ is one smooth piece of the boundary of a compact region in $\tilde M$, with the two other pieces given as $\mH^-\cap\{U_1\leq U_+\}$ and
\begin{align*}
 S_{t_0}\equiv \{0\leq V_+\leq e^{2\kappa_+t_0}U_1\}\cap \{U_+=U_1\}\, .
\end{align*} 
Since $E(f)$ and $E(h)$ satisfy \eqref{eq:KGE} on $\tilde M$, $J_a[f,h]$ is a conserved current and 
we find by an application of Stoke's theorem
\begin{align}
\int\limits_{\mathclap{\Sigma_+\cap\{U_1\leq U_+\}}} J_a[f,h]n^a\td vol\gamma=\int\limits_{\mathclap{\mH^-\cap\{U_1\leq U_+\}}}J_a[f,h]n^a\td vol_\gamma+\int\limits_{S_{t_0}}J_a[f,h]n^a\td vol_\gamma\\\nonumber
=\frac{r_+^2+a^2}{\chi}\int\limits_{\mathclap{[U_1,0]\times\bS^2}}J_{U_+}[f,h]\vert_{\mH}\td U_+\td^2\Omega_+ + \int\limits_{S_{t_0}}J_a[f,h]n^a\td vol_\gamma\, .
\end{align}
The surface $S_{t_0}$, given in Kruskal coordinates, corresponds to the interval $[0,e^{2\kappa_+t_0}U_1]$ times $\bS^2$. Since $J_a[f,h]n^a$ is a smooth function on $S_{t_0}$, the second term vanishes as $t_0\to -\infty$ and one finds
\begin{align}
\lim\limits_{t_0\to-\infty}\int\limits_{\mathclap{\Sigma_+\cap\{U_1\leq U_+\}}} J_a[f,h]n^a\td vol\gamma=\frac{r_+^2+a^2}{\chi}\int\limits_{[U_1,0]\times\bS^2}J_{U_+}[f,h]\vert_{\mH}\td U_+\td^2\Omega_+\, ,
\end{align}
compare also the proof of \cite[Thm. 2.1]{Dappiaggi:2009}.

The integral over $\Sigma_+\backslash \{U_1 \leq U_+ \}$ will now be performed in the variables $(u_+,v_+,\theta,\varphi_+)$. In these coordinates,
\begin{align*}
\Sigma_+\backslash \{U_1\leq U_+\}=\{v_+=2t_0-u_+\}\cap\{u_+(t_0,r_++\delta^\prime)\leq u_+\leq u_1\}\, ,
\end{align*} 
where $u_1=-\kappa_+^{-1}\ln \vert U_1\vert$. With this, one can derive by a direct computation the explicit formula
\begin{align}
\label{eq:langesIntegral}
\int\limits_{\mathclap{\Sigma_+\backslash\{U_1\leq U_+\}}} J_a[f,h]n^a \td vol\gamma &= \int\limits_{\mathclap{(-\infty,u_1)\times \bS^2}} \mathbb{1}_{(u_+(t_0,r_++\delta^\prime),\infty)}\left(\left[\frac{r^2+a^2}{\chi}-\frac{a^2\sin^2\theta\Delta_r}{\chi(r^2+a^2)\Delta_\theta}\right]\right. \\\nonumber
&\times\left.\left.\vphantom{\left[\frac{r^2}{\chi}\right]}\left[J_{u_+}[f,h]+J_{v_+}[f,h]\right]+G_+ J_{\varphi_+}[f,h] \right)\right\vert_{\mathrlap{v_+=2t_0-u_+}}\td u_+\td \Omega_+\, .
\end{align}
Here, 
\begin{align*}
G_+=\frac{a(r_+^2-r^2)}{\chi(r_+^2+a^2)}-\frac{a\rho_+^2\Delta_r}{\chi(r^2+a^2)(r_+^2+a^2)\Delta_\theta}
\end{align*}
is a function of $r$ and $\theta$ that vanishes as $r\to r_+$. Due to the estimate \eqref{eq:estU}, the currents $J_{u_+}[f,h]$, $J_{v_+}[f,h]$ and $J_{\varphi_+}[f,h]$ can be bounded by $C e^{2\alpha u_+}$. Note that by the construction of $\Sigma_+$, $r(u_+)$ ranges at most from $r_+$ to $r_++\delta^\prime$, independent of $t_0$. Thus, the additional factors in the integral can be estimates by 
\begin{align*}
\left\vert\frac{r^2+a^2}{\chi}-\frac{a^2\sin^2\theta\Delta_r}{\chi(r^2+a^2)\Delta_\theta}\right\vert&\leq \frac{(r_++\delta^\prime)^2+a^2}{\chi}+\frac{a^2\Delta_r\vert_{r_++\delta^\prime}}{\chi(r_+^2+a^2)}\\
\vert G_+(r,\theta)\vert&\leq \frac{\vert a\vert ((r_++\delta^\prime)^2-r_+^2)}{\chi(r_+^2+a^2)}+\frac{\vert a\vert\Delta_r\vert_{r_++\delta^\prime}}{\chi(r_+^2+a^2)}
\end{align*} 
while the cutoff function can be estimated by $1$. Combining these estimates, the integrand on the right hand side can be bounded by $C(f,h,\delta^\prime)e^{2\alpha u_+}$, which is independent of $t_0$ and integrable on $(-\infty,u_1)\times \bS^2$. By the dominated convergence theorem, we may thus take the limit $t_0\to -\infty$ under the integral sign. This corresponds to taking $r_*=t_0-u_+\to -\infty$ and hence $r$ to $r_+$. One finds
\begin{align*}
&\lim\limits_{t_0\to-\infty}\int\limits_{\mathclap{\Sigma_+\backslash\{U_1\leq U_+\}}} J_a[f,h]n^a \td vol\gamma = \int\limits_{\mathclap{(-\infty,u_1)\times \bS^2}} \left.\left(J_{u_+}[f,h]+J_{v_+}[f,h]\right)\right\vert_{v_+\to-\infty}
\frac{r_+^2+a^2}{\chi}\td u_+\td \Omega_+\, .
\end{align*}
Changing back to Kruskal-type coordinates one notes that 
\begin{align*}
J_{v_+}[f,h]=\kappa_+V_+J_{V_+}[f,h]\, ,
\end{align*}
which thus vanishes in the limit $t_0\to -\infty$. The final result is
\begin{align}
\lim\limits_{t_0\to-\infty}&\int\limits_{\mathclap{\Sigma_+\backslash\{U_1\leq U_+\}}} J_a[f,h]n^a \td vol\gamma =\frac{r_+^2+a^2}{\chi}\int\limits_{\mathclap{(-\infty,U_1)\times\bS^2}}J_{U_+}[f,h]\vert_{\mH}\td U_+\td^2\Omega_+\, .
\end{align}
Collecting the pieces, we thus find 
\begin{align}
\lim\limits_{t_0\to-\infty}&\int\limits_{\Sigma_+} J_a[f,h]n^a \td vol\gamma =\frac{r_+^2+a^2}{\chi}\int\limits_{\mH^-}J_{U_+}[f,h]\vert_{\mH}\td U_+\td^2\Omega_+\, .
\end{align}
This finishes the proof.
\end{proof}

Summarizing, we have shown in this section that $w(f,h)$ indeed defines the two-point function for some state on $\mathcal{A}$.

\section{The Hadamard property}
\label{sec:Had}
Finally, we would like to demonstrate that the quasi-free state defined by the two-point function $w(f,h)$ is a Hadamard state. The strategy of the proof will be as follows: To show that the condition on the wavefront set of $w$ is satisfied, we start by demonstrating that instead of considering all points in $T^*(M\times M)$, it is sufficient to focus on the primed wavefront set restricted to the diagonal 
\begin{align}
\Delta_{T^*(M\times M)}=\{(x,k;x,k)\in T^*(M\times M)\}\, ,
\end{align}
and in addition it suffices to consider $k$ null. Moreover, instead of considering points $(x,k)\in T^*M$, one can work with bicharacteristic strips
\begin{align}
 B(x,k)&=\left\{(x^\prime,k^\prime)\in T^*M:(x^\prime,k^\prime)\sim (x, k)\right\}\, ,\quad B(x,0)=\{(x,0)\} \,.
 \end{align}

After this, we proceed similar to \cite{Dappiaggi:2009} and show the Hadamard property in a subregion $\mathcal{O}$ of $M$ first. As in \cite{Dappiaggi:2009}, in this subregion, the Hadamard property follows from a slight modification of the proof of \cite[Thm. 5.1]{Sahlmann:2000}. However, in contrast to \cite{Dappiaggi:2009}, one cannot take this subregion to be the whole region $\rI$. Nonetheless, by Lemma~\ref{lem:region}, $\mathcal{O}$ will still be sufficiently large to cover all cases $B(x,k)$, where the null geodesic corresponding to the projection of $B(x,k)$ to the manifold, which we will call bicharacteristic and denote $B_M(x,k)$, does not end at one of the horizons $\mH$ or $\mHc$.

It then remains to consider cases $B(x,k)$ where the corresponding null geodesic ends at one of the horizons. To handle these cases, we will use a number of cutoff-functions to split the two-point function into a piece whose wavefront set may contain $(x,k;x,-k)$ and a remainder. We will compute the wavefront set of the first piece explicitly, and then show that $(x,k;x,-k)$ is a direction of rapid decrease for the remainder. 

This part of the proof is similar in idea to the proof in \cite{Hollands:2000, Gerard:2014}, though some aspects of it are more complicated. For example, the splitting of the two-point function depends on both $x$ and $k$. This idea was also applied in \cite{Hollands:2019}, but was not made very explicit in that paper.

Before we start the proof, let us  define the forward and backward null cones
\begin{align}
\mathcal{N}^\pm&=\{(x,k)\in T^*M\backslash o:g^{-1}(x)(k,k)=0, \pm k \text { future-directed}\}\, .
\end{align} 

Then, as a first step, we note that by the Propagation of Singularities theorem \cite[Lemma 6.5.5]{Duistermaat:1972}, if $(x_1,k_1;x_2,k_2)\in T^*(M\times M)\backslash o$ is in $\WF^\prime (w)$, then $k_1$ and $k_2$ are null covectors (or zero) and $B(x_1,k_1)\times B(x_2,k_2)\subset\WF^\prime(w)$. Hence, instead of considering points $(x,k;y,l)\in T^*(M\times M)$, we can consider any pair of bicharacteristic strips $B(x,k)\times B(y,l)$, freely choosing any representative.

Additionally, as a consequence of \cite[Thm.6.5.3]{Duistermaat:1972} and as noted in the proof thereof, denoting the integral kernel of $E$ also by $E$,
\begin{align}
\label{eq:WF(E)}
\WF^\prime (E)=\mC^+\cup \mC^-\, .
\end{align}

Next, let us demonstrate the following Lemma, which is related to \cite[Prop. 6.1]{Strohmaier:2002}:
\begin{lem}
\label{lem:Diag}
If the two-point function $w$ satisfies
\begin{align}
\label{eq:altWFcon}
\WF^\prime (w)\cap \Delta_{T^*(M\times M)}\subset \mN^+\times\mN^+\, ,
\end{align}
 where $\Delta_{T^*(M\times M)}$ is the diagonal in $T^*(M\times M)$, then the corresponding quasi-free state on $\mathcal{A}$ has the Hadamard property.
 \end{lem}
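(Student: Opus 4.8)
The goal is to upgrade the diagonal hypothesis \eqref{eq:altWFcon} to the full microlocal spectrum condition $\WF^\prime(w)=\mC^+$, which by definition is the Hadamard property. My plan is to first establish the inclusion $\WF^\prime(w)\subseteq\mN^+\times\mN^+$ and then to combine it with the commutator property and the known wavefront set of $E$ to obtain the exact equality. Throughout I would use three facts already available: the propagation of singularities statement, which guarantees that every element of $\WF^\prime(w)$ has null (or zero) covectors in both slots and that $\WF^\prime(w)$ is invariant under the bicharacteristic flow in each variable; the positivity of $w$ established after Proposition \ref{prop:K_X}; and the commutator property $w(f,h)-w(h,f)=iE(f,h)$ together with \eqref{eq:WF(E)}.

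The central step, and the one I expect to be the main obstacle, is to promote the diagonal hypothesis to a statement about all of $\WF^\prime(w)$. For this I would use that $w$ is of positive type, so that it obeys a Cauchy--Schwarz estimate $|w(\bar f\otimes g)|^2\le w(\bar f\otimes f)\,w(\bar g\otimes g)$. Implementing this inequality microlocally through the characterization of the wavefront set in Proposition \ref{prop:V,Prop2.1} --- whose scaling limits are tailored precisely to such positive-type estimates --- one shows that if $(x_1,k_1;x_1,k_1)\notin\WF^\prime(w)$, then $(x_1,k_1;x_2,k_2)\notin\WF^\prime(w)$ for every $(x_2,k_2)$, and symmetrically in the second slot. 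Contrapositively, $(x_1,k_1;x_2,k_2)\in\WF^\prime(w)$ forces both diagonal elements $(x_1,k_1;x_1,k_1)$ and $(x_2,k_2;x_2,k_2)$ into $\WF^\prime(w)$. By the hypothesis \eqref{eq:altWFcon} these lie in $\mN^+\times\mN^+$, so $k_1$ and $k_2$ are future-directed null covectors; in particular zero covectors are excluded. Hence $\WF^\prime(w)\subseteq\mN^+\times\mN^+$.

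To close the argument I pass to the transpose $w^t(f,h)=w(h,f)$, for which a direct computation of primed wavefront sets gives that $(x_1,k_1;x_2,k_2)\in\WF^\prime(w^t)$ iff $(x_2,-k_2;x_1,-k_1)\in\WF^\prime(w)$; the inclusion just obtained therefore yields $\WF^\prime(w^t)\subseteq\mN^-\times\mN^-$. Since $\mN^+\times\mN^+$ and $\mN^-\times\mN^-$ are disjoint, there is no microlocal cancellation in the difference, so by the commutator property and \eqref{eq:WF(E)} one has $\mC^+\cup\mC^-=\WF^\prime(iE)=\WF^\prime(w-w^t)=\WF^\prime(w)\cup\WF^\prime(w^t)$, with the two sets on the right supported in disjoint regions. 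Intersecting this identity with $\mN^+\times\mN^+$ and using $\mC^+\subseteq\mN^+\times\mN^+$ and $\mC^-\subseteq\mN^-\times\mN^-$ gives $\WF^\prime(w)=\mC^+$, the microlocal spectrum condition, so the state is Hadamard. The delicate part is entirely the microlocal Cauchy--Schwarz step, where one must control the scaling limits of Proposition \ref{prop:V,Prop2.1} uniformly over a conic neighbourhood; the remaining bookkeeping with $\mN^\pm$, the transpose, and the commutator is purely formal.
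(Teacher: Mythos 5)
Your overall route is the same as the paper's: use positivity (Cauchy--Schwarz) to lift the diagonal hypothesis to an inclusion for all of $\WF^\prime(w)$, then combine the transpose, the commutator property and $\WF^\prime(E)=\mC^+\cup\mC^-$ with a disjointness/no-cancellation argument to land on $\WF^\prime(w)=\mC^+$. The second half of your argument is sound. The gap sits in the first half, at points with a zero covector in one slot. Your key implication --- ``if $(x_2,k_2;x_2,k_2)\notin\WF^\prime(w)$ then $(x_1,k_1;x_2,k_2)\notin\WF^\prime(w)$ for every $(x_1,k_1)$'' --- comes from the Cauchy--Schwarz bound $\vert w(f_k,h_l)\vert^2\le\vert w(f_k,f_{-k})\vert\,\vert w(h_{-l},h_l)\vert$ by letting the factor attached to the slot with the absent diagonal point supply the rapid decay, while the other factor is merely polynomially bounded. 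This only works if the frequency in the decaying slot grows proportionally to $\vert(k,l)\vert$ throughout the conic neighbourhood, i.e. only if the covector in that slot is nonzero. For a candidate point $(x_1,k_1;x_2,0)$ with $(x_1,k_1)\in\mN^+$, the hypothesis ``$(x_2,0;x_2,0)\notin\WF^\prime(w)$'' is vacuous (the zero section is excluded from wavefront sets by definition) and carries no decay information: in a conic neighbourhood of the direction $(k_1,0)$ the second frequency may remain bounded, so neither factor decays --- the first because $(x_1,k_1;x_1,k_1)$ \emph{is} in $\WF^\prime(w)$ by assumption, the second because nothing constrains $w(h_{-l},h_l)$ at bounded $l$. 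Hence your step 1 does not exclude zero covectors, contrary to what you assert, and the inclusion $\WF^\prime(w)\subseteq\mN^+\times\mN^+$ is not established.

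This is exactly the case the paper treats separately. For points $(x,0;y,k)$ it invokes the commutator property a second time: either through spacelike commutativity $w(f,h_k)=w(h_k,f)$ when the bicharacteristic through $(y,k)$ misses $x$, or, when it does not, through the observation that $\WF(E)$ contains no points with a zero covector, so such a singularity of $w$ would have to be matched by one of $\tilde w(f,h)=w(h,f)$; this forces both orientations $(y,k;y,k)$ and $(y,-k;y,-k)$ into the diagonal of $\WF^\prime(w)$, contradicting \eqref{eq:altWFcon}. Alternatively, your own final step can be strengthened to absorb the problem: if you prove only $\WF^\prime(w)\subseteq S^+$, where $S^+$ permits one of the two covectors to vanish but requires any nonzero one to lie in $\mN^+$, then $S^+$ and the corresponding set $S^-\supseteq\WF^\prime(w^t)$ are still disjoint in $T^*(M\times M)\backslash o$, the no-cancellation identity still yields $\WF^\prime(w)\cup\WF^\prime(w^t)=\mC^+\cup\mC^-$, and since $\mC^\pm$ contain no zero covectors, the spurious points are excluded a posteriori and $\WF^\prime(w)=\mC^+$ follows. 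Either repair must actually be carried out; as written, the claimed exclusion of zero covectors is unsupported.
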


\begin{proof}
 Assume \eqref{eq:altWFcon} holds. Then, if $(x,k;x,k)\in \WF^\prime(w)$, $(x,-k;x,-k)$ cannot be in $\WF^\prime(w)$.

Let us first assume $B_M(x_0,k_x)\neq B_M(y_0,k_y)$ for $k_x$, $k_y$  both non-zero, or one of them, say $k_x$, is zero but $B_M(y_0,k_y)$ does not intersect $\{x_0\}$. Then we can find some spacelike Cauchy surface $\Sigma$ of $M$, which is intersected by the corresponding bicharacteristics in two distinct points $x_1$ and $y_1$. Let $f,h\in C_0^\infty(M;\bR)$ be supported in spacelike separated neighbourhoods of $x_1$ and $y_1$. 
By Proposition~\ref{prop:K_X}, we can write 
\begin{align*}
w(f,h)= \sum\limits_X \left\langle K_X(\overline{f}),K_X(h)\right\rangle_{L^2(\bR_+\times\bS^2;\nu_X(\eta)\td\eta\td^2\Omega_X)}\,.
\end{align*}
After fixing some coordinates on a neighbourhood of $x_1$ and $y_1$, we write $f_{k}(x)=(2\pi)^{-2}e^{i k\cdot x} f(x)$, where $k\cdot x$ should be understood as the usual product in $\bR^4$. We can then use the Cauchy-Schwarz inequality to deduce that \cite{Hollands:2019}
\begin{align*}
\vert w(f_k, h_l)\vert^2\leq \vert w(f_k,f_{-k})\vert\vert w(h_{-l},h_l)\vert\, .
\end{align*}
Since the supports of $f$ and $h$ are spacelike separated, we have by the commutator property at spacelike separation
\begin{align*}
\vert w(f_{k}, h_{l})\vert^2=\vert w(h_{l}, f_{k})\vert^2\leq \vert w(f_{-k},f_{k})\vert\vert w(h_{l},h_{-l})\vert\, .
\end{align*}
If $\WF^\prime(w)\cap \Delta_{T^*(M\times M)}\subset \mN^+\times\mN^+$, then one can choose $f$ and $h$ so that at least one of the two estimates for $\vert w(f_k, h_l)\vert^2$ is rapidly decreasing in $(k,l)$ in a small conic neighbourhood of $(k_x,k_y)$ parallel transported to $(x_1,y_1)$, and hence such points are not in the wavefront set of $w$.

In the case where $k_x=0$, but $B_M(y_0,k_y)$ contains $x_0$, the above argument does not hold, since no points of the bicharacteristics of $(x_0,0)$ and $(y_0,k_y)$ are spacelike separated. However, we may use that $\WF(E)$, which by the commutator property is the same as $WF(w-\tilde w)$, $\tilde w (f,h)=w(h,f)$, does not contain points of the form $(x,0;y,k)$. Hence, if such a point is in $\WF(w)$, it must also be in $\WF(\tilde w)$, so that the two singular contributions can cancel out. This entails that $(y,k;x,0)$ must be in $\WF(w)$ if $(x,0;y,k)$ is \cite{Dappiaggi:2009}. Let $f,h\in C_0^\infty(M;\bR)$ supported in small neighbourhoods of $x_0$. Then if
\begin{align*}
\vert w(f, h_{k})\vert^2\leq \vert w(f,f)\vert\vert w(h_{-k},h_{k})\vert
\end{align*}
is not rapidly decreasing in $\vert k\vert$, then
\begin{align*}
\vert w(h_k, f)\vert^2\leq \vert w(f,f)\vert\vert w(h_{k},h_{-k})\vert\, 
\end{align*}
must not be rapidly decreasing in $\vert k\vert$ either.
Again, if $\WF^\prime(w)\cap \Delta_{T^*(M\times M)}\subset \mN^+\times\mN^+$, one can find $f$, $h$ so that at least one of the two estimates decreases rapidly, excluding  any points of the form $(x,0;y,k)$ from $\WF(w)$.

It remains to consider bicharacteristics with identical projections to the manifold. They can be represented by points in $T^*(M\times M)$ of the form $(x_0,k_x;x_0,b k_x)$ with $b\neq 0$, $b\in\bR$. Then, as above, taking some $f,h\in C_0^\infty(M;\bR)$ supported in small neighbourhoods of $x_0$, we get
\begin{align*}
\vert w(f_k,h_{l})\vert^2\leq \vert w(f_k,f_{-k})\vert\vert w(h_{-l},h_{l})\vert\, .
\end{align*}
By our assumption, this is rapidly decreasing for some choice of $f$ and $h$ in a small conic neighbourhood of $(k_x,b k_x)$ unless $(x_0,k_x)\in \mN^+$ and $(x_0,b k_x)\in\mN^-$, or in other words $(x_0,k_x)\in \mN^+$ and $b<0$. 
Combining this with the other cases above, we have shown that our assumption entails $\WF^\prime (w)\subset \mN^+\times\mN^+$. This also means that $\WF^\prime (\tilde w)\subset \mN^-\times\mN^-$, so the two wavefront sets do not overlap. Then, similar to the case above, we can use that $E=w-\tilde w$ and thus
\begin{align*}
\mC^+\cup\mC^-=\WF^\prime (E)=\WF^\prime(w-\tilde w) = \WF^\prime(w)\cup\WF^\prime(\tilde w)\subset \mN^+\cup\mN^-\, ,
\end{align*}
where the third equal sign is due to the fact that the wavefront sets of $w$ and $\tilde w$ do not overlap, see also the proof of \cite[Prop. 6.1]{Strohmaier:2002}. $\WF^\prime(w)$ only intersects $\mC^+$, while $\WF^\prime(\tilde w)$ only intersects $\mC^-$. So the above equation can only be satisfied if
\begin{align*}
\WF^\prime(w)=\mC^+\, .
\end{align*}

Hence, we see that it is actually sufficient to show that $\WF^\prime(w)\cap \Delta_{T^*(M\times M)}$ is contained in $\mN^+\times\mN^+$ \cite{Gerard:2020}.  
\end{proof}

After this consideration, we now want to prove the Hadamard condition in a subregion $\mathcal{O}$ of $M$. We choose $\mathcal{O}\subset \rI$ to be the open region in $\rI$ where the Killing vector fields $\partial_{t_X}$
are timelike for both $X=+$ and $X=c$. As was demonstrated in Lemma~\ref{lem:region}, any inextendible null geodesic not ending at either $\mH$ or $\mHc$ in the past must pass through $\mathcal{O}$ as long as $a$, $\lambda$ are sufficiently small.  Applying also the Propagation of Singularity theorem, we can even consider all points in the set
\begin{align}
B(\mathcal{O})\times B(\mathcal{O})=\{(x_1,k_1;x_2,k_2)\in T^*(M\times M)\backslash o: B_M(x_i,k_i)\cap\mathcal{O}\neq\emptyset\, , \, \, i=1,2\}\, .
\end{align}  

Thus, our goal is to show

\begin{prop}
\label{prop:HadO}
Let $w(f,h)$ as defined in \eqref{eq:2ptfct}. Then 
\begin{align}
\label{eq:HadCondO}
\WF^\prime (w)\cap(B(\mathcal{O})\times B(\mathcal{O}))=\mC^+\cap(B(\mathcal{O})\times B(\mathcal{O}))\, .
\end{align}
\end{prop}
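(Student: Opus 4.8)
The plan is to prove the two inclusions in \eqref{eq:HadCondO} separately, reducing the nontrivial one to a frequency-sign estimate in the stationary region $\mathcal{O}$ in the spirit of \cite[Thm.~5.1]{Sahlmann:2000} and \cite[Sec.~5]{Dappiaggi:2009}. By the Propagation of Singularities theorem \cite[Lemma~6.5.5]{Duistermaat:1972}, any element of $\WF'(w)\cap(B(\mathcal{O})\times B(\mathcal{O}))$ is a pair of full bicharacteristic strips with null (or zero) covectors, and each strip meets $\mathcal{O}$, so I may choose representatives with base points in $\mathcal{O}$. Repeating the Cauchy--Schwarz argument of Lemma~\ref{lem:Diag} with all test functions localized near $\mathcal{O}$, it then suffices to establish the diagonal inclusion
\[
\WF'(w)\cap\Delta_{T^*(M\times M)}\cap(B(\mathcal{O})\times B(\mathcal{O}))\subset\mN^+\times\mN^+\, .
\]
Granting this, the inclusion $\mC^+\cap(B(\mathcal{O})\times B(\mathcal{O}))\subset\WF'(w)$ is automatic: once $\WF'(w)\subset\mC^+$ on this set, the symmetry $\tilde w(f,h)=w(h,f)$ forces $\WF'(\tilde w)\subset\mC^-$, the two are disjoint, and decomposing $iE=w-\tilde w$ together with $\WF'(E)=\mC^+\cup\mC^-$ pins down $\WF'(w)=\mC^+$ there, exactly as in the last part of Lemma~\ref{lem:Diag}.

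\emph{Structure of the two pieces.} I would treat $w_+$ and $w_c$ separately, using $\WF(w)\subset\WF(w_+)\cup\WF(w_c)$. Each $w_X$ is a positive sesquilinear form, $w_X(\bar f,f)=\norm{K_X(f)}^2\ge0$ by Proposition~\ref{prop:K_X}, and it is invariant under the flow of the Killing field $\partial_{t_X}$: the latter acts on $\mH_X$ as an affine dilation $L_X\mapsto e^{\pm\kappa_X s}L_X$, under which the kernel $(L_X-L_X')^{-2}\,\td L_X\,\td L_X'$ of $A_X$ is manifestly invariant. Moreover, the weight $\mu_X$ of Proposition~\ref{prop:FT version} is exactly a KMS factor at inverse temperature $2\pi/\kappa_X$ in the frequency conjugate to the Killing parameter $l_X$, so $w_X$ is a passive (KMS) form with respect to the $\partial_{t_X}$-flow. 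This is the structural input replacing the hypotheses of \cite[Thm.~5.1]{Sahlmann:2000}, which cannot be quoted verbatim because $w_X$ is not by itself a two-point function and $\partial_{t_X}$ is not globally timelike.

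\emph{The frequency-sign estimate.} Fix $x\in\mathcal{O}$, a null $k$, and a real $f\in C_0^\infty(M)$ supported near $x$; writing $f_k(y)=(2\pi)^{-2}e^{ik\cdot y}f(y)$, the Cauchy--Schwarz step reduces matters to showing that $w_X(f_k,f_{-k})=\norm{K_X(f_{-k})}^2$ is rapidly decreasing under the rescaling $k\mapsto\lambda^{-1}k$ of Proposition~\ref{prop:V,Prop2.1} whenever $k$ is not future-directed. Since $K_X$ is, by Proposition~\ref{prop:FT version}, the projection onto non-negative frequencies (conjugate to $L_X$) of the boundary data of the rescaled wave packet, I would analyse the high-frequency (geometric-optics) asymptotics of its restriction to $\mH_X$: the oscillations propagate along $B_M(x,k)$, and their $L_X$-frequency is the $L_X$-component of $k$ parallel-transported to $\mH_X$. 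By Lemma~\ref{lem:Hgeo} this transported covector is future-pointing on $\mH_X$ iff its $L_X$-component is positive; because null geodesics preserve time orientation and, crucially, the conserved energy $\langle k,\partial_{t_X}\rangle$ has a definite sign on $\mathcal{O}$ (this is precisely where timelikeness of $\partial_{t_X}$ enters, eliminating the superradiant ambiguity that $\partial_t$ would suffer in the ergoregion), future-directedness of $k$ in the bulk is equivalent to a positive $L_X$-frequency. Hence, if $k$ is past-directed, the data is concentrated at negative $L_X$-frequency, its positive-frequency projection in $K_X$ is rapidly decreasing, and the required bound follows by adapting the estimate of \cite[Thm.~5.1]{Sahlmann:2000}, using the decay of elements of $S(\mH_X)$ recorded after Proposition~\ref{prop:wellDef}. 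Running this for both $X\in\{+,c\}$, which is legitimate because both $\partial_{t_+}$ and $\partial_{t_c}$ are timelike on $\mathcal{O}$, yields the diagonal inclusion.

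\emph{Main obstacle.} I expect the hard part to be this frequency-sign estimate, namely turning the heuristic correspondence ``time orientation of the bulk covector $\leftrightarrow$ sign of the $L_X$-frequency on $\mH_X$'' into a uniform quantitative bound feeding into Proposition~\ref{prop:V,Prop2.1}, while simultaneously coping with the non-compact support of $E(f)|_{\mH_X}$ and with the fact that each $w_X$ is only a positive form rather than a genuine state, so that \cite[Thm.~5.1]{Sahlmann:2000} must be reworked at the level of the maps $K_X$ instead of invoked directly.
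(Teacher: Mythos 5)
Your reduction to the diagonal via Lemma~\ref{lem:Diag} and the propagation-of-singularities theorem, and your recovery of $\mC^+$ from $iE=w-\tilde w$ at the end, both match the paper. Your observation that $\mu_X$ is a KMS weight at inverse temperature $2\pi/\kappa_X$ is also exactly the structural input the paper uses (Lemma~\ref{lem:KMSlike}). But your core frequency-sign estimate takes a different route from the paper's, and that route has a gap which is fatal for precisely the directions this proposition exists to handle. The paper never propagates anything to the horizons in this proof: it runs the Sahlmann--Verch machinery (Lemma~\ref{lem:SV,Prop2.1}, fed into the characterization of the wavefront set in Proposition~\ref{prop:V,Prop2.1} via rescaled families of test functions) entirely inside $\mathcal{O}$, using the KMS-like property of $w_X$ under the flow of $\partial_{t_X}$ together with the fact that on $\mathcal{O}$ both Killing fields are timelike and future-pointing, so that ``positive frequency with respect to Killing time'' is equivalent, locally in $\mathcal{O}$, to ``future-directed covector''. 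You instead transport the covector along $B_M(x,k)$ to $\mH_X$ and invoke Lemma~\ref{lem:Hgeo} to read off the sign of the $L_X$-frequency there.

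The gap: not every bicharacteristic through $\mathcal{O}$ reaches $\mH\cup\mHc$. Null geodesics that asymptote in the past to a double root $r_0$ of $R(r)$ (the cases $r_+\to r_0$ read backwards, i.e.\ $r_0\to r_+$, $r_0\to r_c$ and $r_0\to r_0$ in Appendix~\ref{sec:A2}) pass through $\mathcal{O}$ --- that is the whole content of Lemma~\ref{lem:region} and the reason $\mathcal{O}$ was introduced --- but they never intersect the past horizons $\mH$, $\mHc$ on which the state data lives (in the future they can only exit through $\mH^R$ or $\mHc^L$, which are disjoint from $\mH\cup\mHc$). For such $(x,k)$ there is no transported covector on $\mH_X$, Lemma~\ref{lem:Hgeo} gives nothing, and your claimed equivalence ``future-directed in the bulk $\Leftrightarrow$ positive $L_X$-frequency on $\mH_X$'' is vacuous. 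To salvage your approach for these trapped directions you would need to show that $\norm{K_X(f_{k'})}$ is rapidly decreasing for \emph{all} $k'$ in a conic neighbourhood (both signs), uniformly over the non-compact horizon. Wavefront calculus (Lemma~\ref{lem:rapDec}) gives such decay only against compactly supported cutoffs on $\mH_X$; on the tail $U_+\to-\infty$ the available bounds \eqref{eq:estU}--\eqref{eq:estV} carry constants proportional to $\norm{f_{k'}}_{C^m}\sim\vert k'\vert^m$ and are merely polynomial. Converting this into rapid decay is exactly what the splitting-and-commutator construction in the proof of Proposition~\ref{prop:HadM/O} (the cutoffs $\tilde h$, $h^\prime$, $\chi$, $\eta$ plus Lemma~\ref{lem:rapDec}) accomplishes, but that construction is built on the hypothesis that the bicharacteristics in question \emph{do} hit $\mH$ --- it covers the complementary case, not yours. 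This is why the paper divides the proof the way it does: trapped directions are handled by stationarity in $\mathcal{O}$ (this proposition), horizon-bound directions by frequency analysis on $\mH_X$ (Proposition~\ref{prop:HadM/O}). Your proposal collapses the two and thereby loses exactly the trapped case.
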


\begin{proof}
In the following, we will show that
\begin{align}
\label{eq:ClaimO}
\WF^\prime(w)\cap T^*_{(x_0,x_0)}(M\times M)\cap \Delta_{T^*(M\times M)}\subset \mN^+\times\mN^+\, 
\end{align}
for any $x_0\in \mathcal{O}$.
By Lemma \ref{lem:Diag}, and the Propagation of Singularities theorem, this result will then imply \eqref{eq:HadCondO}.

The proof, similar to the one of \cite[Prop. 4.3]{Dappiaggi:2009}, follows largely part 1) and 2) of the proof of \cite[Thm. 5.1]{Sahlmann:2000}, which is based on the characterization of the wavefront set by \cite[Prop. 2.1]{Verch:1998}. Parts 3)-6) of the proof of \cite[Thm. 5.1]{Sahlmann:2000} are already covered by Lemma \ref{lem:Diag}.

The first step in our adaptation of the proof is to show that the pieces $w_X$ of our two-point function are "KMS like" \cite{Dappiaggi:2009} at inverse temperature $2\pi\kappa_X^{-1}$ with respect to the isometries induced by $\partial_{t_X}$, which is weaker than the passivity condition used in \cite{Sahlmann:2000}, but still sufficient:
\begin{lem}
\label{lem:KMSlike}
Denote by $\phi^X_b$ the flow generated by the Killing field $\partial_{t_X}$ acting on $f\in C_0^\infty(\rI)$ by $\phi_b^Xf(u_X,v_X,\theta,\varphi_X)=f(u_X-b,v_X-b,\theta,\varphi_X)$. Then
\begin{align}
K_X^\rI(\phi_b^Xf)(\omega,\theta,\varphi_X)=e^{ib\omega}K_X^\rI(f)(\omega,\theta,\varphi_X)\, .
\end{align}
Moreover, $w_X$ is "KMS like" \cite{Dappiaggi:2009} at inverse temperature $2\pi\kappa_X^{-1}$, i.e. for any $h\in C_0^\infty(\bR;\bR)$ and any pair $f_{1/2}\in C_0^\infty(\rI;\bR)$ of real test functions
\begin{align}
\int\limits_\bR \hat h(t)\left\langle K_X^\rI(f_1),K_X^\rI(\phi_t^X f_2)\right\rangle_{L^2}\td t=\int\limits_\bR \hat h\left(t+\frac{2\pi i}{\kappa_X}\right)\left\langle K_X^\rI(\phi_t^X f_2),K_X^\rI(f_1)\right\rangle_{L^2}\td t
\end{align}
\end{lem}
\begin{proof}
Since $\partial_{t_X}$ are Killing fields, the commutator function $E$ satisfies 
\begin{align*}
E(\phi_b^X f)(u_X,v_X,\theta,\varphi_X)=E(f)(u_X-b,v_X-b,\theta,\varphi_X)\,.
\end{align*}
The first claim thus follows immediately from the definition of $K_X^\rI$. Next, let $h \in C^\infty_0(\bR;\bR)$ and $f_{1/2}\in C_0^\infty(\rI;\bR)$. Then
\begin{align}
&\int\limits_\bR \hat h(t)\left\langle K_X^\rI(f_1),K_X^\rI(\phi_t^X f_2)\right\rangle_{L^2}\td t\\\nonumber
&=\int\limits_\bR \hat h(t) \int\limits_{\bR\times\bS^2}\mu_X(\omega)\overline{K_X^\rI(f_1)}(\omega,\theta,\varphi_X)e^{i\omega t}K_X^\rI(f_2)(\omega,\theta,\varphi_X)\td \omega\td \Omega_X\td t\, .
\end{align}
By the definition of $K_X^\rI$, if $f$ is a real function, then $\overline{K_X^\rI(f)}(\omega, \theta,\varphi_X)=K_X^\rI(f)(-\omega,\theta,\varphi_X)$. Combining this with the fact that $\mu_X(\omega)=e^{2\pi\omega/\kappa_X}\mu_X(-\omega)$, we can write the above as
\begin{align*}
&\int\limits_\bR \hat h(t)\int\limits_{\bR\times\bS^2}\mu(\tilde\omega) e^{-2\pi\tfrac{\tilde\omega}{\kappa_X}} e^{-i\tilde\omega t} K_X^\rI(f_1)(\tilde\omega,\theta,\varphi_X) K_X^\rI(f_2)(-\tilde\omega,\theta,\varphi_X)\td\tilde\omega\td^2\Omega_X\td t\\
=&\int\limits_{\bR-i\tfrac{2\pi}{\kappa_X}}\hat h\left(t+i\frac{2\pi}{\kappa_X}\right)\int\limits_{\bR\times\bS^2}\mu_X(\tilde\omega)e^{-i\tilde\omega t}\overline{K_X^\rI(f_2)}(\tilde\omega,\theta,\varphi_X)K_X^\rI(f_1)(\tilde\omega,\theta,\varphi_X)\td\tilde\omega\td^2\Omega_X\td t\\
=&\int\limits_\bR \hat h \left(t+i\frac{2\pi}{\kappa_X}\right)\left\langle K_X^I(\phi_t^X f_2),K_X^\rI(f_1)\right\rangle_{L^2} \td t\,.
\end{align*}
The last step works since $\hat h(t)$, the Fourier transform of a compactly supported function, is entire analytic and vanishes for $\Re t\to\pm\infty$ as long as $\Im t$ remains finite. In addition, for any $f_{1/2}\in C_0^\infty(\rI;\bR)$, the function 
\begin{align*}
t\mapsto& \left\langle K_X^\rI(f_1), K_X^\rI(\phi_t^X f_2)\right\rangle_{L^2(\bR\times\bS^2;\mu_X(\omega)\td\omega\td^2\Omega_X)}\\
&=\left\langle K_X^\rI(f_1), e^{i\omega t}K_X^\rI(f_2)\right\rangle_{L^2(\bR\times\bS^2;\mu_X(\omega)\td\omega\td^2\Omega_X)}
\end{align*}
has an analytic continuation to $\Im t\in [0,2\pi/\kappa_X)$. This can be seen by an explicit calculation using the form of $\mu(\omega)$, as well as the estimate \eqref{eq:estU} or \eqref{eq:estV}.
\end{proof}

This Lemma can be applied to show the parts of \cite[Prop. 2.1]{Sahlmann:2000} which are relevant for the present case, by following the proof of \cite[Prop. 2.1]{Sahlmann:2000} step by step\footnote{Note that the convention for the Fourier transform in \cite{Verch:1998,Sahlmann:2000} differs from ours.}:
\begin{lem}
\label{lem:SV,Prop2.1}
 For any $(g_1^\lambda)_{\lambda>0}$, $(g_2^\lambda)_{\lambda>0}\subset C_0^\infty(\mathcal{O};\bR)$ such that $w_X(g_i^\lambda,g_i^\lambda)\leq c\left(1+\lambda^{-1}\right)^{s}$ for some $c>0$ and $s>0$ , there exist $h\in C_0^\infty(\bR^2):\hat h(0)=1$ and for any $(k_0,k_0^\prime)\in \bR^2\backslash\{0\}$, such that $k_0^\prime>0$, there is an open neighbourhood $V_\epsilon$ in $\bR^2\backslash\{0\}$ of $(k_0,k_0^\prime)$ such that $k_2>\epsilon>0\, \forall (k_1,k_2)\in V_\epsilon$  and such that $\forall N\in\bN$ $\exists C_N>0$, $\lambda_N>0$:
 \begin{align}
 \sup\limits_{k\in V_\epsilon}\left\vert \int e^{i\lambda^{-1}k\cdot t}\hat h(t)w_X(\phi^X_{t_1}g_1^\lambda,\phi^X_{t_2}g_2^\lambda)\td^2 t\right\vert <C_N\lambda^N\quad \forall 0<\lambda<\lambda_N\, 
 \end{align}
 \end{lem}
As mentioned in the proof of \cite[Prop. 2.1]{Sahlmann:2000}, by an application of \cite[Lemma 2.2 b)]{Verch:1998}, this continues to hold when $\hat h$ is replaced by $\phi\cdot\hat h$ for some $\phi\in C_0^\infty(\bR^2)$ after potentially shrinking $V_\epsilon$. It also continues to hold if the functions $g_i^\lambda$ depend on additional parameters, see \cite[Rem. 2.2]{Sahlmann:2000}.

Following the proof of \cite[Thm. 5.1]{Sahlmann:2000}, let us now consider any fixed point $x_0\in \mathcal{O}$. In a neighbourhood $\mathcal{U}_{x_0}$ of $x_0$, we then define a coordinate chart 
\begin{align*}
\psi:\mathcal{U}_{x_0}&\to\psi(\mathcal{U}_{x_0})\subset\bR^4\, , & x&\to (t_X(x)=\tfrac{1}{2}(u_X+v_X)(x)-t_{X,0}, \vec{x}(x))
\end{align*}
where $t_{X,0}$ and $\vec{x}$ are chosen such that $\psi(x_0)=0$, for example by taking $\vec{x}(x)$ to be the Cartesian coordinates corresponding to $(r(x), \theta(x), \varphi_X(x))$, and then shifting the origin of the coordinates to $\vec{x}(x_0)$.

The coordinate chart should be built in such a way that there is some constant $c>0$ so that for $\vert t\vert<c$, the diffeomorphism $\rho^X_t$ induced by Killing vector field $\partial_{t_X}$ can be written as 
\begin{align*}
\psi\circ\rho^X_t(x)=(t_X(x)+t,\vec{x}(x))\, 
\end{align*}
on a sufficiently small neighbourhood $K\subset \mathcal{U}_{x_0}$ of $x_0$.
In addition, spatial translations $\rho_{\vec{y}}$, for $\vec{y}$ in a sufficiently small neighbourhood $B$ of $0$ in $\bR^{3}$, are defined  on $K$ by 
\begin{align*}
\rho_{\vec{y}}(x)&=\psi^{-1}\circ \tilde{\rho}_{\vec{y}}\circ\psi(x) & \tilde{\rho}_{\vec{y}}(t_X,\vec{x})=(t_X,\vec{x}+\vec{y})\, 
\end{align*}
The corresponding pullbacks acting on functions on $K$ are then written as $\phi^X_t$, which was already used earlier, and $R_{\vec{y}}$.

By our assumption, $\partial_{t_X}$ is timelike and future-pointing on $\mathcal{O}$. As a consequence, a null-covector $(x,k)\in T^*_{\mathcal{O}}M$ is future-pointing iff $k^0=\langle k,\partial_{t_X} \rangle >0$.

After the construction of the coordinate chart, let us consider $(x_0,k_x)\in \mN^-$, with $x_0$ in $\mathcal{O}$. We will use the description of the wavefront set given in Proposition~\ref{prop:V,Prop2.1}, see also \cite[Lemma 3.1]{Sahlmann:2000}, to show that $(x_0,k_x; x_0,-k_x)$ is not in $\WF(w)$. 
 
To this end, let us define $H\in C_0^\infty(\psi(\mathcal{U}_{x_0})\times \psi(\mathcal{U}_{x_0}))$ as $H(t_X,\vec{x},t_X^\prime,\vec{x}^\prime)=\phi(t_X,t_X^\prime)\hat h(t_X,t_X^\prime)\zeta(\vec x,\vec{x}^\prime)$, where we take the $h\in C_0^\infty(\bR^2)$ as in Lemma \ref{lem:SV,Prop2.1}, and $\phi\in C_0^\infty((-c,c);\bR)$, $\zeta\in C_0^\infty(B\times B; \bR)$ such that $H(0)=1$.
 
We identify $T^*_{x_0}M$ with $\bR^4$ in our coordinate chart $\psi$. Then, we take $V\subset(\bR^4\times \bR^4)\backslash \{0\}$ to be an open neighbourhood  of $(k_x,-k_x)$ such that for all $(k,k^\prime)\in V$, $(k^0,k^{\prime 0})=(\langle k,\partial_{t_X}\rangle,\langle k^\prime,\partial_{t_X}\rangle)\in V_\epsilon$, with $V_\epsilon$ as in Lemma~\ref{lem:SV,Prop2.1} for some $\epsilon>0$.

In addition, let us note that functions of the form as in Proposition~\ref{prop:V,Prop2.1} satisfy the condition of Lemma~\ref{lem:SV,Prop2.1}: 
 let $g_i\in C_0^\infty(\psi(\mathcal{U}_{x_0});\bR)$ with support in a sufficiently small neighbourhood $\psi(K)$ of $0$. Let us also assume $\widehat{ g_1\otimes g_2}(0,0)=1$. Then, taking any $p\geq 1$ and $\lambda\leq 1$, we set $g_i^{\lambda}(x)=g_i(\lambda^{-p}(\psi(x)))$ for $x\in \mathcal{U}_{x_0}$ and $g_i^{\lambda}(x)=0$ outside of $\mathcal{U}_{x_0}$. For $\lambda >1$, set $g_i^\lambda(x)=g_i^1(x)$. Then $\supp(g_i^\lambda)\subset K$, so that  time translations by $t\in (-c,c)$ and spatial translations by $\vec{y}\in B$ as defined above are well-defined for all $\lambda$. Moreover, we can use that by \eqref{eq:estiW},
\begin{align}
\vert w_X(g_i^{\lambda},g_i^{\lambda})\vert\leq C\norm{g_i^\lambda}_{C^m}^2\, ,
\end{align}
and since the functions $g_i^\lambda$ are supported away from the horizons, this norm can be taken  using partial derivatives in the $\psi$-coordinate system as the linear independent vector fields. Taking into account that $\partial_x f(\lambda^{-p}x)=\lambda^{-p}\partial_yf(y)\vert_{y=\lambda^{-p}x}$, we get for $\lambda<1$
\begin{align}
\vert w_X(g_i^{\lambda},g_i^{\lambda})\vert\leq C\norm{g_i(\lambda^{-p}x)}_{C^m}^2\leq C^\prime \lambda^{-2mp}\norm{g_i(x)}_{C^m}^2\leq C^{\prime\prime}\norm{g_i}_{C^m}^2(1+\lambda^{-1})^{2mp}\, .
\end{align}
Hence, the $g_i^\lambda$ of the form as in Proposition~\ref{prop:V,Prop2.1} satisfy the condition of Lemma~\ref{lem:SV,Prop2.1} for $c=C^{\prime\prime}\norm{g_i}_{C^m}^2$ and $s=2mp$.

Then the claim \eqref{eq:ClaimO} follows from Lemma~\ref{lem:SV,Prop2.1} and the form of the wavefront set as in Proposition~ \ref{prop:V,Prop2.1} by using the estimate
\begin{align*}
&\sup\limits_{(k,k^\prime)\in V} \left\vert \int e^{i\lambda^{-1}(k,k^\prime)\cdot(x,x^\prime)} H(x,x^\prime) w_X\left(\phi^X_t R_{\vec{x}}g_1^\lambda,\phi^X_{t^\prime}R_{\vec{x}^\prime}g_2^\lambda\right) \td^4x \td^4 x^\prime \right\vert\\
=&\sup\limits_{(k,k^\prime)\in V} \left\vert \int e^{i\lambda^{-1}(\vec k \vec x +\vec{k}^\prime\vec{x}^\prime)}\zeta(\vec x, \vec{x}^\prime)\times\right.\\
&\times\left.\left[\int e^{i\lambda^{-1}(k^0t+k^{0\prime}t^\prime)}\phi(t,t^\prime)\hat h(t,t^\prime)w_X\left(R_{\vec{x}}g_1^\lambda,\phi^X_{t^\prime-t}R_{\vec{x}^\prime}g_2^\lambda\right)\td t\td t^\prime\right] \td^3\vec{x}\td^3\vec{x}^\prime\right\vert\\
\leq&\sup\limits_{(k,k^\prime)\in V}\int \left\vert \zeta(\vec x,\vec{x}^\prime)\right\vert\times\\
&\times\left\vert \int e^{i\lambda^{-1}(k^0t+k^{0\prime}t^\prime)}\phi(t,t^\prime)\hat h(t,t^\prime)w_X\left(R_{\vec{x}}g_1^\lambda,\phi^X_{t^\prime-t}R_{\vec{x}^\prime}g_2^{\lambda}\right)\td t\td t^\prime\right\vert \td^3\vec{x}\td^3\vec{x}^\prime\\
\leq& \sup\limits_{(k,k^\prime)\in V} \int \left\vert \zeta(\vec x,\vec{x}^\prime)\right\vert C_N \lambda^{N}\td^3\vec{x}\td^3\vec{x}^\prime\\
\leq& \tilde{C}_N\lambda^{N}\quad\quad \forall 0<\lambda<\lambda_N\leq 1\, .
\end{align*}

This completes the proof of Proposition~\ref{prop:HadO}.
\end{proof}

%%%%%%%%%%%%%%%%%%%%%%%%%%%%%%%%%%%%%%%%%%%%%%%%%%%%%%%%

We have thus established the Hadamard property in a subregion of $M$, which must be intersected by all null geodesics which do not end at either $\mH$ or $\mHc$. Applying Lemma~\ref{lem:Diag}, the remaining case we need to consider is 

\begin{prop}
\label{prop:HadM/O}
Let $(x_0,k_0)\in T^*M\backslash o$ such that $B_M(x_0,k_0)\cap\mathcal{O}=\emptyset$. Assume that $\lambda$, $a$ are chosen such that Lemma~\ref{lem:region} and the results of \cite{Hintz:2015} are valid. If $(x_0,k_0;x_0,k_0)$ is in $\WF^\prime(w)$, then $(x_0,k_0)\in \mN^+$.
\end{prop}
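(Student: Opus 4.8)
The plan is to exploit the fact that, by Lemma~\ref{lem:region}, the hypothesis $B_M(x_0,k_0)\cap\mathcal{O}=\emptyset$ forces the bicharacteristic to approach one of the two horizons in the past; say it approaches $\mH$, the case of $\mHc$ being identical under $U\leftrightarrow V$, $+\leftrightarrow c$. Since $\WF'(E)=\mC^+\cup\mC^-$ and, by the propagation of singularities theorem, $\WF'(w)$ is a union of products of bicharacteristic strips, the assumption $(x_0,k_0;x_0,k_0)\in\WF'(w)$ propagates along $B(x_0,k_0)$ until the geodesic meets $\mH$ at some covector $(p,\kappa_p)$ with $p\in\mH$. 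By Lemma~\ref{lem:Hgeo}, a null covector over $\mH$ that does not lie in $N^*(\mH)$ has nonvanishing $\partial_{U_+}$-component $\xi$, so $(p,\kappa_p)$ meets $\mH$ transversally; this transversality is precisely what is needed to restrict distributions to $\mH$ and to justify the wavefront-set composition calculus used below.

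First I would write $w_+$ as the pullback of the horizon kernel through the restriction map. Setting $\mathcal{T}_+:f\mapsto E(f)\vert_{\mH}$, with Schwartz kernel given by restriction of $E(\cdot,\cdot)$ to $\mH$ in the first slot, Definition~\ref{def:2ptfct} reads $w_+=\mathcal{T}_+^{*}A_+\,\mathcal{T}_+$, where $A_+$ has kernel $-\tfrac{r_+^2+a^2}{\chi\pi}(L_+-L_+'-i0)^{-2}$ times the angular $\delta$-distribution. Since $(s-i0)^{-2}$ is a boundary value from a half-plane, its wavefront set is one-sided; translated to the kernel of $A_+$ this forces $\WF'(A_+)$ onto the diagonal $L_+=L_+'$, $\Omega_+=\Omega_+'$ with only a \emph{positive} $\partial_{U_+}$-frequency $\xi>0$, which is the microlocal counterpart of the $L^2(\bR_+\times\bS^2)$ structure established in Proposition~\ref{prop:FT version}. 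Feeding $\WF'(E)=\mC^{\pm}$ into the composition $w_+=\mathcal{T}_+^{*}A_+\,\mathcal{T}_+$—legitimate precisely because of the transversality guaranteed by Lemma~\ref{lem:Hgeo}—then shows that every element of $\WF'(w_+)$ is obtained by parallel transporting, along null geodesics, covectors over $\mH$ of strictly positive $\partial_{U_+}$-frequency.

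The orientation statement then follows cheaply. By Lemma~\ref{lem:Hgeo}, $\xi>0$ is equivalent to $(p,\kappa_p)$ being future-pointing on $\mH$, and parallel transport along a null geodesic preserves the time orientation of a covector. Restricting to the diagonal, $(x_0,k_0;x_0,k_0)\in\WF'(w)$ can therefore only occur if $k_0$ is the parallel transport of a future-pointing covector on $\mH$ (and the same conclusion follows from $w_c$ via the analogous statement on $\mHc$), i.e.\ $(x_0,k_0)\in\mN^+$, which is the claim.

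The hard part will be making the composition $w_+=\mathcal{T}_+^{*}A_+\,\mathcal{T}_+$ rigorous, because $\mH$ is a characteristic hypersurface and $E(f)\vert_{\mH}$ is not compactly supported on it but only decays, cf.\ \eqref{eq:estU} and \eqref{eq:estV}. Following \cite{Hollands:2000,Gerard:2014,Hollands:2019}, I would introduce cutoff functions—depending on both $x_0$ and $k_0$—that split $E(f)\vert_{\mH}$ and $E(h)\vert_{\mH}$ into a piece microlocally concentrated near the relevant point and frequency on $\mH$, whose wavefront set is computed as above, and a remainder. Using the characterization of the wavefront set in Proposition~\ref{prop:V,Prop2.1} together with the decay estimates \eqref{eq:estU}, \eqref{eq:estV} and the spectral gap $\alpha$ to control the non-compact $L_X$-integration, one then shows that $(x_0,k_0;x_0,-k_0)$ is a direction of rapid decrease for the contribution of the remainder. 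Excluding spurious contributions from covectors conormal to $\mH$ and from the bifurcation sphere again rests on Lemma~\ref{lem:Hgeo}; this bookkeeping, rather than any single estimate, is the main obstacle.
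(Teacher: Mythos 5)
Your proposal is correct and follows essentially the same route as the paper: the same decomposition of $w_+$ into a compactly supported horizon piece handled by the wavefront-set composition calculus (using $\WF^\prime(E)=\mC^+\cup\mC^-$, the one-sided $\WF^\prime(A_+)$ with $\xi>0$, and Lemma~\ref{lem:Hgeo} for transversality and future-pointing), plus a remainder killed by $(x_0,k_0)$-dependent cutoffs and the decay estimates \eqref{eq:estU}, \eqref{eq:estV} from \cite{Hintz:2015}. The "bookkeeping" you defer is exactly what the paper's auxiliary lemma (with the Cauchy surfaces $\Sigma_\pm$, $\Sigma_{k_0}$, the cutoffs $\tilde h$, $h^\prime$, $\eta$, $\chi$, and Lemma~\ref{lem:rapDec}) carries out.
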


\begin{proof}
We will work in the +-Kruskal coordinates, and assume that $B_M(x_0,k_0)$ intersects $\mH$. The case where it intersects $\mHc$ works analogously. We will denote by $\psi_+:M_+\to\bR^2\times \bS^2$ the coordinate diffeomorphism of the Kruskal coordinates, and we will write $(U,\Omega, \xi,\sigma)$ for points in $T^*(\bR\times\bS^2)$, where $V_+=0$, i.e. $\Omega=(\theta,\varphi_+)$ and $\sigma \in T^*_{\Omega}(\bS^2)$. We will also identify $T_x^*M$ with $\bR^4$ in these coordinates, so that covectors at different points can be compared.

Let $K\subset M$ be a compact neighbourhood of $x_0$, and  $V$ a small conic neighbourhood of $k_0$ identified with an element of $\bR^4$ under $\psi_+$. We may choose them such that there is a compact set $\mathcal{U}\subset \mH$ such that all  $B_M(x,k)$ with $x\in K$, and $k\in V$ intersect $\mH$ in the interior of $\mathcal{U}$.

We can then find a function $h\in C_0^\infty(\bR\times\bS^2)$ such that $h=1$ on $\mathcal{U}$. Let us also define a function $\zeta\in C_0^\infty(\tilde M)$ such that $\zeta\vert_{\mH}=1$ on $\supp(h)$. Then, following the ideas in \cite{Hollands:2000,Gerard:2014}, we may consider the splitting:
\begin{align}
\label{eq:splitting}
w=&(h\cdot A_+\cdot h)\left(tr_{\mH}\circ (\zeta\cdot E),tr_{\mH}\circ(\zeta\cdot E)\right)\\\nonumber
&+  A_+\left((1-h)\cdot tr_{\mH}\circ E, h \cdot tr_{\mH}\circ E\right)\\\nonumber
&+  A_+\left(h\cdot tr_{\mH}\circ E, (1-h)\cdot tr_{\mH}\circ E\right)\\\nonumber
&+A_+\left((1-h)\cdot tr_{\mH}\circ E,((1-h)\cdot tr_{\mH}\circ E\right)\\\nonumber
&+ w_c\, .
\end{align}
Here, we have denoted the restriction to $\mH$ by $tr_{\mH}$.

As mentioned above, we will start by analysing the first piece on the right hand side of \eqref{eq:splitting}, and show that its contribution to the wavefront set satisfies \eqref{eq:altWFcon}. 

To do so, we notice that $\zeta\cdot E $, $tr_{\mH}$ and $h \cdot A_+\cdot h$ are properly supported, i.e they satisfy \cite[Eq. (8.2.13)]{Hoermander}. Thus, we may apply \cite[Thm. 8.2.14]{Hoermander} to determine the wavefront set from the wavefront sets of $A_+$, $tr_{\mH}$ and $E$. We find by direct computation (see also \cite{Hollands:2000, Hollands:2019})
\begin{align}
\WF^\prime(A_+)= \left\{ (U,\Omega, \xi,\sigma ; U^\prime,\Omega,\xi,\sigma)\in T^*(\bR\times\bS^2\times\bR\times\bS^2) \backslash o: \right. \\ \nonumber
\left. \xi>0 \text{ if } U=U^\prime\, , \xi=0 \text{ else } \right\} \, ,
\end{align}
and by an application of \cite[Thm. 8.2.4]{Hoermander}
\begin{align}
\WF^\prime (tr_{\mH})=\left\{ (U,\Omega, \xi,\sigma; x,k) \in T^*(\bR\times\bS\times\tilde M): \psi_+(x)=(U,0,\Omega),\right.\\\nonumber \left.{}^t\td \psi_+(x)(\xi,\eta,\sigma)=k\text{ for some }\eta\in\bR\right\}\, .
\end{align}

Taking into account Lemma~\ref{lem:Hgeo}, which is an analogue of \cite[Lemma 5.1]{Gerard:2014}\footnote{This result, together with \cite[Thm. 8.2.4]{Hoermander} also allows one to make sense of the map $tr_{\mH}\circ E:C_0^\infty(M)\to C^\infty(\bR\times\bS^2)$ without the cutoff function $\zeta$.}, we find by \cite[Thm. 8.2.14]{Hoermander}
\begin{align}
&\WF^\prime((h\cdot A_+\cdot h)\left(tr_{\mH}\circ (\zeta\cdot E),tr_{\mH}\circ(\zeta\cdot E)\right))\\\nonumber
\subset&\left\{\vphantom{T^*_{\mH}(\tilde M)}(x_1,k_1;x_2,k_2)\in T^*(M\times M)\backslash o: \exists(y,l)\in T^*_{\mH}(\tilde M):\right.\\\nonumber
&\left. (x_i,k_i)\sim (y,l)\, , \, \, i=1,2\,;\,{}^t\td(\psi_+^{-1})(\psi_+(y))l=(\xi,\eta,\sigma)\text{ with }\xi>0\right\}\\\nonumber
&\subset \mathcal{N}^+\times\mathcal{N}^+\, . 
\end{align}
This shows the result for the first piece.

Finally, we want to show that for the remaining terms on the right hand side of \eqref{eq:splitting},  $(x_0,k_0;x_0,-k_0)$ is a direction of rapid decrease. Together with the analysis above, this will complete the proof of Proposition~\ref{prop:HadM/O}.

Recall the notation $f_k(x)=(2\pi)^{-2}f(x) e^{i k x}$, and that, after some choice of coordinate system,
\begin{align}
\label{eq:reformFT}
\widehat{h\cdot E\cdot f}(k,l)=\widehat{h\cdot E(f_l)}(k)\, .
\end{align}
Then we can show that
\begin{lem}
Let $(x_0,k_0)\in \mN$, with $B_M(x_0,k_0)$ intersecting $\mH$.  Identify $k_0$ with an element of $\bR^4$ under $\psi_+$. Let $K$ be a sufficiently small compact neighbourhood of $x_0$ covered by the +-Kruskal coordinate chart, and let $V\subset \bR^4\backslash \{0\}$ be a sufficiently small conic neighbourhood of $k_0$, such that $B_M(x,k)$ intersects $\mH$ in some compact set $\mathcal{U}$ for all $x\in K$ and all $k\in V$. Let $h\in C_0^\infty(\mH)$ be such that $h=1$ on $\mathcal{U}$. Then, there is a function $f\in C_0^\infty(M)$, with $f(x_0)=1$,  an open conic neighbourhood $V_{k_0}\subset \bR^4\backslash\{0\}$ of $k_0$, and $\forall $ $N\in \bN$, there are $C_N$, $\tilde C_N>0$ such that 
\begin{align}
\label{eq:LemEst}
\vert(1-h)E(f_k)\vert_{\mH}\vert&\leq \vert U_+\vert^{-\alpha/\kappa_+}\frac{ C_N}{1+\vert k\vert^N}\quad\forall k\in V_{k_0} \\
\vert E(f_k)\vert_{\mHc}\vert&\leq \vert V_c\vert^{-\alpha/\kappa_c} \frac{\tilde C_N}{1+\vert k\vert^N}\quad\forall k\in V_{k_0}\, ,
\end{align}
\end{lem}

\begin{proof}
Let us define the set 
\begin{align*}
B_M(K,V)=\{x^\prime\in \tilde M: x^\prime\in B_M(x,k)\text{ for some }x\in K,\, k\in V \}\, .
\end{align*}

Let $\Sigma_{k_0}$ be a Cauchy surface of $\tilde M$ such that $\Sigma_{k_0}$ coincides with $\mH$ in $\mH\cap \{U_1<U_+<U_2\}$. $U_1<0$, $U_2>U_f$ (see proof of \ref{prop:wellDef} for the definition) are chosen such that $\supp(h)\subset \Sigma_{k_0}\cap\mH$. 

Let $\Sigma_{\pm}$ be two other Cauchy surfaces such that $\Sigma_{k_0}\subset I^+(\Sigma_-)\cap I^-(\Sigma_+)$ and $K\subset J^+(\Sigma_+)\backslash \Sigma_+$. 

Let $\tilde h$, $ h^\prime\in C_0^\infty(\tilde M)$ be  real positive functions such that $\tilde h\vert_{\mH}=h$, $\supp(\tilde h)\cap \mHc=\emptyset$, $\tilde h+ h^\prime=1$ in a  neighbourhood of $J^-(K)\cap J^+(\Sigma_-)\cap J^-(\Sigma_+)$ and that there is an open neighbourhood $\mathcal{V}\subset \tilde M$ of $ B_M(K,V)$ which does not intersect $\supp( h^\prime)$.

Let $\eta\in C^\infty(\tilde M)$ be defined by $\eta=1$ in a neighbourhood of $B_M(K,V)$, such that $\supp(\eta) \subset \mathcal{V}$.

Finally, let $\chi \in C^\infty(\tilde M)$ be a cutoff-function which is equal to one in $J^+(\Sigma_+)$ and vanishes in $J^-(\Sigma_-)$.

\begin{figure}
\centering
\begin{subfigure}{0.6\textwidth}
\includegraphics[width=\textwidth]{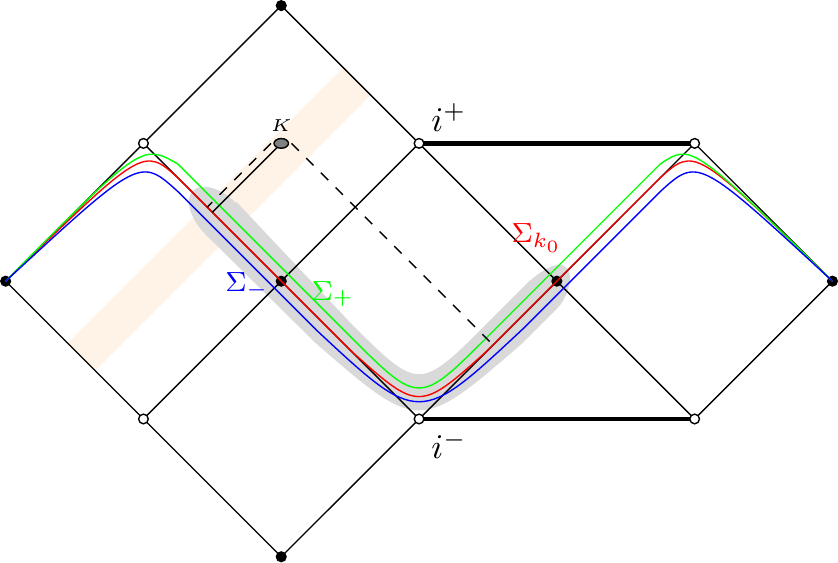}
\end{subfigure}
\begin{subfigure}{0.3\textwidth}
\includegraphics[width=\textwidth]{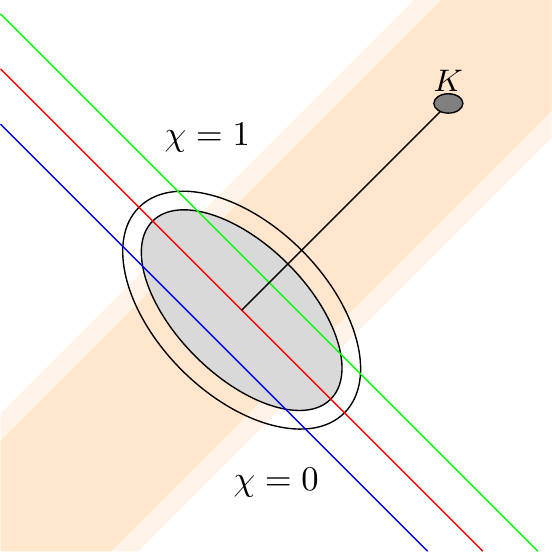}
\end{subfigure}
\caption{Left: The three Cauchy surfaces are, from top to bottom, $\Sigma_+$, $\Sigma_{k_0}$ and $\Sigma_-$. The small, dark gray region is $K$. The line joining $K$ and $\mH$ indicates the bicharacteristic $B_M(x_0,k_0)$. The dashed lines mark $J^-(K)$. The light stripe around $B_M(x_0,k_0)$ shows the neighbourhood $\mathcal{V}$ of $B_M(K,V)$, on which $h^\prime=0$. The light gray region  around $J^-(K)\cap J^-(\Sigma_+)\cap J^+(\Sigma_-)$ indicates $\supp(\tilde h+h^\prime)$.
 Right: The two ellipses indicate $\tilde h=1$ (inner, shaded ellipse), and $\supp(\tilde h)$. The shaded stripe indicates $\eta=1$ (darker shade) and $\supp(\eta)$(lighter shade). The function $\chi$ is equal to one above the topmost diagonal line, corresponding to $\Sigma_+$ and vanishes below the bottommost one, corresponding to $\Sigma_-$.}
\label{Fig:Construction}
\end{figure}

An illustration is shown in Fig. \ref{Fig:Construction}.

Then, we note that for any function $g\in C_0^\infty(M)$ with support in $K$, $\tilde{g}\equiv\mathcal{K}(\chi E(g))\in C_0^\infty(\tilde M)$ has support contained in $J^-(K)\cap J^+(\Sigma_-)\cap J^-(\Sigma_+)$ and $E(\tilde{g})=E(g)$. In addition, we have $\tilde{g}=\tilde h \tilde{g}+ h^\prime \tilde{g}$ by construction.

Applying the linearity of $E$, as well as the properties of the fundamental solutions, we thus find for any such function
\begin{align}
\label{eq:rewriteEfk}
E(g)=&E(\tilde{g})=E(\tilde h \tilde{g})+E(h^\prime \tilde{g})\\\nonumber
=&E^+(\tilde h\mathcal{K}(\chi E(g)))+E^-(\tilde h \mathcal{K}((1-\chi)E(g)))+E(h^\prime \tilde{g})\\\nonumber
=&E^+(\mathcal{K}(\tilde h\chi E(g)))+E^-(\mathcal{K}(\tilde h (1-\chi)E(g)))-E^+([\mathcal{K},\tilde h]\chi E(g))\\\nonumber 
&-E^-([\mathcal{K},\tilde h](1-\chi)E(g))+E(h^\prime \tilde{g})\\\nonumber
=&\tilde h E(g)  -E^+([\mathcal{K},\tilde h]\chi E(g))-E^-([\mathcal{K},\tilde h](1-\chi)E(g))+E(h^\prime \tilde{g})\, .
\end{align}
Since $\tilde h\vert_{\mH}=h$ and $\supp(\tilde h)\cap \mHc=\emptyset$, $(1-h)E(g)\vert_{\mH}$ and $E(g)\vert_{\mHc}$ are determined by the last three terms in the last line above restricted to $\mH$ or $\mHc$ respectively.

By a careful consideration of the supports of the different functions, we find that the second and third term above satisfy
\begin{align*}
\supp([\mathcal{K},\tilde h]\chi E(g))\cap\mathcal{V} \subset J^+(\Sigma_+)\\
\supp([\mathcal{K},\tilde h](1-\chi)E(g))\cap \mathcal{V}\subset J^-(\Sigma_-)\, .
\end{align*}

This allows us to further split them as
\begin{align*}
E^+([\mathcal{K},\tilde h]\chi E(g))&=E^+(\eta[\mathcal{K},\tilde h]\chi E(g))+E^+((1-\eta)[\mathcal{K},\tilde h]\chi E(g))\, .\\
E^-([\mathcal{K},\tilde h](1-\chi) E(g))&=E^-(\eta[\mathcal{K},\tilde h](1-\chi) E(g))+E^-((1-\eta)[\mathcal{K},\tilde h](1-\chi) E(g))\, .
\end{align*}
$\eta[\mathcal{K},\tilde h]\chi E(g)$ is then supported in $J^+(\Sigma_+)\cap J^+(\mH\cup\mHc)$, while $\eta[\mathcal{K},\tilde h](1-\chi)E(g)$ is supported in $J^-(\Sigma_-)\cap J^-(\mH\cup\mHc)$. Hence, the corresponding pieces will not give any contribution on $\mH\cup \mHc$.
 
After setting up this construction, the next step is to find the compactly supported function $f$. We will do so by considering the remaining terms that we have identified above.

Let us start with the last term, $E(h^\prime \tilde{g})$.
We note that
\begin{align*}
h^\prime \tilde{g}=h^\prime \Box_g \chi E(g)+2 h^\prime \nabla_a\chi\nabla^aE(g)\, .
\end{align*}
Applying the property \eqref{eq:WF(E)} of the commutator function, the support properties of $h^\prime$, and the fact that differentiation and multiplication by smooth functions do not increase the wavefront set, we find that
\begin{subequations}
\begin{align}
\label{eq:WFS1}
(y,l;x_0, k_0)\notin \WF( (h^\prime \Box_g\chi) \cdot E)\\
(y,l;x_0, k_0)\notin \WF((h^\prime \nabla_a \chi) \cdot \nabla^a E)
\end{align}
\end{subequations} 
$\forall (y,l)\in T^*M$, using the identification of $(x_0,k_0)$ with an element of $\bR^4\times \bR^4$ under $\psi_+$. Let us also fix some coordinate system for $y$ and $l$ which covers $\supp(h^\prime)$. Then, by Lemma \ref{lem:rapDec}, there is a function $f_1\in C_0^\infty(M)$ with $f_1 (x_0)=1$ and $\supp( f_1)\subset K$ and an open conic neighbourhood $\tilde{V}_{ k_0}\subset\bR^4\backslash \{0\}$ of $k_0$, and for any $N,N^\prime\in \bN$ there is a constant $\tilde{C}_{NN^\prime}>0$ such that 
\begin{subequations}
\label{eq:est1}
\begin{align}
\vert\widehat{(h^{\prime}\Box_g\chi \otimes f_1) \cdot E}\vert(l,k)\leq \frac{\tilde{C}_{NN^\prime}}{(1+\vert l\vert^{N^\prime})(1+\vert k\vert^{N})}\quad \forall (l,k)\in \bR^4\times \tilde{V}_{ k_0}\, ,\\
\vert\widehat{(h^{\prime} \nabla_a \chi \otimes f_1) \cdot \nabla^a E}\vert(l,k)\leq \frac{\tilde{C}_{NN^\prime}}{(1+\vert l\vert^{N^\prime})(1+\vert k\vert^{N})}\quad\forall (l,k)\in\bR^4\times \tilde{V}_{k_0}\,. 
\end{align}
\end{subequations}

We now turn to the remaining pieces of the second and third term in the last line of \eqref{eq:rewriteEfk}.
The support of $(1-\eta)[\mathcal{K},\tilde h]\chi$ is compact and disjoined from $ B_M(K,V)$. Thus, this term can be handled in the same way as $ h^\prime \tilde{g}$ by using Lemma \ref{lem:rapDec}. We find some open conic neighbourhood $V^\prime_{k_0}$ of $ k_0$ in the +-Kruskal coordinates and some function $f_2 \in C_0^\infty(M)$ supported in $K$ with $f_2(x_0)=1$ such that an estimate of the form \eqref{eq:est1} with some $C^\prime_{NN^\prime}>0$ holds for $((1-\eta)[\mathcal{K},\tilde h]\chi \otimes  f_2) \cdot E$ for all  covectors $(l,k)\in \bR^4\times {V}^\prime_{k_0}$ for any $N,\, N^\prime \in \bN$.

The term $E^-((1-\eta)[\mathcal{K},\tilde h](1-\chi)E(g))$ can be treated by an application of Lemma \ref{lem:rapDec} in the same way to get $f_3\in C_0^\infty(M)$, $V^{\prime\prime}_{k_0}\subset \bR^4\backslash \{0\}$ and constants $C^{\prime\prime}_{NN^\prime}$ such that an estimate of the form \eqref{eq:est1} holds for $((1-\eta)[\mathcal{K},\tilde h](1-\chi) \otimes  f_3) \cdot E$ for all $N,\, N^\prime\in \bN$ for all $(l,k)\in \bR^4\times {V}^{\prime\prime}_{k_0}$.

By an application of \cite[Lemma 8.1.1]{Hoermander}, the above estimates continue to hold if we replace $f_1$, $f_2$ and $f_3$ by
\begin{align}
f\equiv f_1 \cdot f_2 \cdot f_3\in C_0^\infty(K)\, .
\end{align}
All three estimates hold for $k\in V_{k_0}$, where we define $V_{k_0}$ to be the intersection of $\tilde{V}_{k_0}$,  $V^{\prime}_{k_0}$ and  $V^{\prime\prime}_{k_0}$.

In the following, we return to the estimate  \eqref{eq:est1} with $f_1$ replaced by $f$. 
By taking $N^\prime$ large enough and applying the Fourier inversion formula and \eqref{eq:reformFT}, one can conclude from  \eqref{eq:est1} that for any $N\in\bN$, there is a positive constant $C_N$ so that
\begin{align*}
\norm{ h^\prime \tilde{f_k}}_{C^m}\lesssim \frac{C_N}{1+\vert k\vert^N}\quad \forall k\in V_{ k_0} 
\end{align*}
and therefore with the estimates \eqref{eq:estU} and \eqref{eq:estV} from \cite{Hintz:2015}
\begin{subequations}
\begin{align}
\vert E( h^\prime \tilde{f_k})\vert_{\mH}\vert \lesssim C \vert U_+\vert^{-\alpha/\kappa_+}\frac{C_N}{1+\vert k\vert^N}\quad \forall k\in V_{k_0}\\
\vert E( h^\prime \tilde{f_k})\vert_{\mHc}\vert \lesssim C \vert V_c\vert^{-\alpha/\kappa_c}\frac{C_N}{1+\vert k\vert^N} \forall k\in V_{ k_0}
\end{align}
\end{subequations}
for any $N\in \bN$ for some positive constants $C$,$C_N$ .

Similar estimates can be obtained for the other two terms in the same way. One finds
\begin{subequations}
\begin{align}
\vert E^+((1-\eta)[\mathcal{K},\tilde h]\chi E(f_k))\vert_{\mH}\vert\lesssim C \vert U_+\vert^{-\alpha/\kappa_+}\frac{C_N}{1+\vert k\vert^N}\quad \forall k\in {V}_{ k_0}\\
\vert E^+((1-\eta)[\mathcal{K},\tilde h]\chi E(f_k))\vert_{\mHc}\vert\lesssim C \vert V_c\vert^{-\alpha/\kappa_c}\frac{C_N}{1+\vert k\vert^N}\quad \forall k\in {V}_{ k_0}
\end{align}
\end{subequations}
and
\begin{subequations}
\begin{align}
\vert E^-((1-\eta)[\mathcal{K},\tilde h](1-\chi) E(f_k))\vert_{\mH}\vert\lesssim C \vert U_+\vert^{-\alpha/\kappa_+}\frac{C_N}{1+\vert k\vert^N}\quad \forall k\in {V}_{ k_0}\\
\vert E^-((1-\eta)[\mathcal{K},\tilde h](1-\chi) E(f_k))\vert_{\mHc}\vert\lesssim C \vert V_c\vert^{-\alpha/\kappa_c}\frac{C_N}{1+\vert k\vert^N}\quad \forall k\in {V}_{ k_0}
\end{align}
\end{subequations}
for some $C$, $C_N$ for any $N$. 

Adding up the different pieces then finishes the proof of the lemma.
\end{proof}

Let us return to \eqref{eq:splitting}, and consider for example the second term. Let us multiply the term by $f\otimes f$, where $f$ is the function from the above Lemma. Working in +-Kruskal coordinates, the Fourier transform of this product, evaluated at $(k^\prime,k)$, can be written as
\begin{align*}
A_+\left((1-h)\cdot E(f_{k^\prime})\vert_{\mH},h\cdot E(f_k)\vert_{\mH}\right)
\end{align*}

From the above lemma, we know that $\vert(1-h)\cdot E(f_{k^\prime})\vert_{\mH}\vert$ is rapidly decreasing for $k^\prime$ in a neighbourhood of $k_0$.
It only remains to note, using the estimates \eqref{eq:estU} and \eqref{eq:estV}, that $\vert h E(f_k)\vert_{\mH}\vert\vert\lesssim \vert U_+\vert^{-\alpha/\kappa_+} C (1-\vert k\vert^M)$ for some fixed $M$, i.e. they grow at most polynomially in $k$. The polynomial growth is suppressed by the rapid decay of the other part in the conic neighbourhood
\begin{align*}
\{(l,k)\in \bR^8\backslash \{0\}: 1/2 \vert l\vert<\vert k\vert<2\vert l\vert, l\in V_{k_0}\}
\end{align*}
of $(k_0,-k_0)$ \cite{Dappiaggi:2009}. Combining this with the estimates from the proof of \ref{prop:wellDef}, we find that $(x_0,k_0;x_0,-k_0)$ is indeed a direction of rapid decrease for this term. The argument for the other terms  works along the same lines. This shows $(x_0,k_0;x_0,-k_0)$ is a direction of rapid decrease for the remaining pieces of $w_+$ in \eqref{eq:splitting} and for $w_c$.

Together with the analysis of the first piece in \eqref{eq:splitting}, this shows that $(x_0,k_0)$ is in $\mN^+$ if $(x_0,k_0;x_0,k_0)$ is in $\WF^\prime(w)$.
\end{proof}

Thus, also taking into account our previous results from Proposition~\ref{prop:HadO} and Lemma~\ref{lem:Diag}, the state determined by $w$ has the Hadamard property. 

\section{Summary}
\label{sec:sum}
In this paper, we have constructed the Unruh state for a free real scalar field on a Kerr-de Sitter spacetime.

For technical reasons, we had to restrict ourselves to either slow rotation, i.e. small $a$, and moderate cosmological constant $\Lambda$, or to small $\Lambda$ and at most moderate $a$ to show the well-definedness of our two-point function in Proposition~\ref{prop:wellDef}. The condition of having either $a$ or $\Lambda$ small could be dropped once mode stability results become available for the whole parameter range of sub-extremal Kerr-de Sitter black holes. Those results are believed to hold, but are difficult to show rigorously. The condition that both $a$ and $\Lambda$ should be at most moderately large however is directly connected to our proof of the Hadamard property of the Unruh state. In particular, it is necessary for the validity of Lemma~\ref{lem:region}, which guarantees that all null geodesics not ending at one of the horizons in the past must cross a region in which the vector fields $\partial_{t_+}$ and $\partial_{t_c}$ are both time-like. Lifting this restriction would thus require a new strategy for the proof.

We have defined the two-point function for our state using the Kay-Wald two-point function \cite{Kay:1988} on the horizons $\mH$ and $\mHc$. Making use of the decay results from \cite{Hintz:2015}, it was shown that the two-point function is well-defined, and can indeed be considered as the two-point function of a quasi-free Hadamard state on the CCR- algebra of the free scalar field on the Kerr-de Sitter spacetime.

This is not a contradiction to the no-go theorem of Kay and Wald \cite{Kay:1988}, since we expect that the Hadamard property of the state will break down at $\mH$ and $\mHc$, see also \cite[Rem. 8.4]{Gerard:2020}.

We have also seen in Lemma~\ref{lem:KMSlike} that when restricted to real testfunctions with support in the exterior region $\rI$, the Unruh state is "KMS-like" \cite{Dappiaggi:2009}. Roughly speaking, this means that asymptotically near $\mH$, the state is thermal with inverse temperature $2\pi\kappa_+^{-1}$ with respect to the isometries generated by $\partial_{t_+}$, while asymptotically near $\mHc$, it is  thermal with inverse temperature $2\pi\kappa_c^{-1}$ with respect to the isometries generated by $\partial_{t_c}$. Or, stated differently, in the asymptotic past, "in"-movers and "out"-movers are thermally populated at different temperatures. This behaviour is exactly what one would expect from the generalization of the Schwarzschild Unruh vacuum to Kerr-de Sitter.

Moreover, the form of the two-point function derived in Proposition~\ref{prop:K_X} indicates that the quantum field in this state is expanded in positive-frequency modes with respect to the coordinate $U_+$ outgoing from the past event horizon and modes with positive frequency with respect to $V_c$ incoming from the past cosmological horizon. Therefore, the Unruh state constructed in this paper agrees with the one used for the numerical investigation of the evaporation of rotating black holes in \cite{Gregory:2021}.

Considering also the physical motivation for the Unruh state on Schwarzschild, the Unruh state on Kerr-de Sitter is a physically well-motivated state. Its rigorous construction presented here is thus an important step for the study of quantum effects on rotating black hole spacetimes.

\begin{acknowledgments}
{\bf Acknowledgements:} I would like to thank S. Hollands for suggesting this topic. I would also like to thank him and J. Zahn for fruitful discussions. This work has been funded by the Deutsche Forschungsgemeinschaft (DFG) under the Grant No. 406116891 within the Research Training Group RTG 2522/1.
\end{acknowledgments}

\appendix

\section{A technical Lemma}\label{sec:A1}

In this appendix, we prove a technical Lemma that is used in the proof of the Hadamard property of our state. In particular, consider a statement on the wavefront set such as \eqref{eq:WFS1},
\begin{align*}
(x,l;y_0,k_0)\notin \WF(h\cdot D)\quad \forall (x,l)\in X\times \bR^n\,  
\end{align*}
for some $h\in C_0^\infty(X)$, $D\in\mathcal{D}^\prime(X\times Y)$ and $X,Y\subset \bR^n$.
 Then, according to Def. \ref{def.:WFS}, for any $(x,l)$, there exists a function $\Phi_{(x,l)}\in C_0^\infty(X\times Y)$ with $\Phi_{(x,l)}(x,y_0)=1$ and an open conic neighbourhood $V_{(x,l)}\subset (\bR^n\times\bR^n)\backslash\{0\}$ of $(l,k_0)$ such that for any $N\in \bN$ there is a positive constant $C^{(x,l)}_N>0$ with 
\begin{align*}
\vert\widehat{\Phi_{(x,l)} h\cdot D}\vert(l^\prime,k^\prime)\leq \frac{C^{(x,l)}_N}{(1+\vert(l^\prime,k^\prime)\vert)^N}\quad \forall (l^\prime,k^\prime)\in V_{(x,l)}\, .
\end{align*} 
The Lemma below shows that under an additional assumption, we can combine the estimates for each individual covector $(x,l)$ to one estimate holding in a neighbourhood of all $l$ and all $x\in \supp(h)$. In addition, for this estimate we can choose the  compactly supported function $\Phi$ to be of the form $\chi(x)f(y)$, with $\chi(x)=1$ on the support of $h$ and $f\in C_0^\infty(Y)$ can be chosen such that its support is contained in any arbitrary but fixed compact neighbourhood of $y_0$.
\begin{lem}
\label{lem:rapDec}
Let $X,\,Y$ $\subset \bR^n$.
Let $(y_0,k_0)\in Y\times(\bR^n\backslash \{0\})$, and let $K$ be any compact neighbourhood of $y_0$. Let $D\in \mathcal{D}^\prime(X\times Y)$ such that $(x,k;y,0)\notin \WF(D)$ for all $x\in X$, $y\in Y$, $k\in \bR^n\backslash \{0\}$. Let $h\in C_0^\infty(X)$ such that 
\begin{align*}
(x,l;y_0,k_0)\notin \WF(h\cdot D)\quad \forall (x,l)\in X\times \bR^n\,  .
\end{align*}
Then we can find a function $f\in C_0^\infty(Y)$ with $f(y_0)=1$ and support in $K$, and an open conic neighbourhood $V_{k_0}\subset \bR^n\backslash\{0\}$ of $k_0$ so that for any $N,N^\prime\in \bN$ there are positive constants $C_{NN^\prime}$ satisfying, 
\begin{align*}
\vert\widehat{ (h\otimes f)\cdot D}\vert(l,k)\leq \frac{C_{NN^\prime}}{(1+\vert l\vert^N)(1+\vert k\vert^{N^\prime})}\quad \forall l\in \bR^4\, , \;  k\in V_{k_0}\, .
\end{align*}
\end{lem}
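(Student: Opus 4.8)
The plan is to prove the estimate by a compactness argument that upgrades the pointwise (in $(x,l)$) wavefront exclusions in the hypotheses into a single uniform bound, using a product-form localizer $h\otimes f$. The two hypotheses play complementary roles: the first controls the finite covector directions $(l,k_0)$, while the second controls the limiting directions in which the $Y$-covector degenerates to $0$, and it is precisely the latter that will furnish the decay in $l$ uniformly for all $l\in\bR^n$. First I would record the two geometric ingredients. The relevant base points form the compact set $\Gamma=\supp(h)\times\{y_0\}$, and the relevant covector directions form the compact set
\[
\Xi=\overline{\{(l,k_0)/|(l,k_0)| : l\in\bR^n\}}=\{(l,k_0)/|(l,k_0)| : l\in\bR^n\}\cup\{(\omega,0):\omega\in\bS^{n-1}\},
\]
where the closure in the cosphere adds exactly the directions with vanishing $Y$-component, obtained as $|l|\to\infty$.

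The key step is to verify that no pair $(\gamma,\xi)$ with $\gamma\in\Gamma$, $\xi\in\Xi$ lies in $\WF(hD)$. For the directions whose $Y$-component is a multiple of $k_0$, this follows from the first hypothesis $(x,l;y_0,k_0)\notin\WF(hD)$ together with the conic invariance of the wavefront set, since $(x,l;y_0,k_0)$ lies on the same ray as its normalization. For the limiting directions $(\omega,0)$ I would use $\WF(hD)\subset\WF(D)$ and the second hypothesis $(x,k;y,0)\notin\WF(D)$ with $k=\omega$, $y=y_0$. Since $\WF(hD)$ is closed and conic while $\Gamma\times\Xi$ is compact in the cosphere bundle and disjoint from it, there is an open product neighbourhood $U_x\times U_y$ of $\Gamma$ (with $U_x\supset\supp(h)$ and $U_y\ni y_0$, by the tube lemma applied to the compact set $\supp(h)$) and an open conic neighbourhood $W\supset\Xi$ such that $((U_x\times U_y)\times W)\cap\WF(hD)=\emptyset$. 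Choosing $f\in C_0^\infty(Y)$ with $f(y_0)=1$ and $\supp(f)\subset K\cap U_y$, the distribution $(h\otimes f)D=f\cdot(hD)$ is compactly supported and satisfies $\WF((h\otimes f)D)\cap((U_x\times U_y)\times W)=\emptyset$; hence, by the standard relation between the wavefront set and the decay of the Fourier transform of a compactly supported distribution \cite{Hoermander}, $\widehat{(h\otimes f)D}$ is rapidly decreasing in $W$.

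Finally I would translate this into the stated product estimate. A short compactness argument shows that for $V_{k_0}$ a sufficiently small conic neighbourhood of $k_0$ one has $\{(l,k):l\in\bR^n,\,k\in V_{k_0}\}\subset W$: a sequence of normalized directions with $Y$-component in shrinking conic neighbourhoods of $k_0$ subconverges either to a direction $(\alpha,\beta)$ with $\beta$ proportional to $k_0$, or (when $|l|/|k|\to\infty$) to some $(\omega,0)$, both of which lie in $\Xi\subset W$, so the directions cannot escape the open set $W$. On $W$ we have $|\widehat{(h\otimes f)D}(l,k)|\leq C_M(1+|(l,k)|)^{-M}$ for every $M$, and since $(1+|(l,k)|)^2\gtrsim(1+|l|)(1+|k|)$ one recovers, by taking $M=2(N+N')$ and using $(1+|l|)^N\geq 1+|l|^N$, the bound $C_{NN'}(1+|l|^N)^{-1}(1+|k|^{N'})^{-1}$. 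The main obstacle is the non-compactness of the $l$-direction: one cannot apply compactness directly to the unbounded family $\{(x,l;y_0,k_0)\}$, and the essential device is to pass to the closure $\Xi$ and invoke the second hypothesis to cover exactly the limiting directions with zero $Y$-covector.
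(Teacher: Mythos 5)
Your proposal is correct, and it rests on the same key insight as the paper's proof: the family of bad directions $\{(l,k_0):l\in\bR^n\}$ is non-compact, so one must pass to its closure in the cosphere, which adds exactly the degenerate directions $(\omega,0)$, and these are excluded from $\WF(h\cdot D)$ via the hypothesis on $\WF(D)$ together with $\WF(h\cdot D)\subset\WF(D)$. Where you genuinely diverge is in how the pointwise wavefront-set exclusions are upgraded to a uniform estimate with a product localizer $h\otimes f$. The paper builds this by hand from H\"ormander's Lemma 8.1.1 alone: for fixed $x$ it multiplies the finitely many localizers $\Phi_{(x,\lambda_i)}$ obtained from a finite subcover of the compactified direction set, runs a second compactness argument in $x$, constructs a partition-of-unity-type function $\chi$ with $\chi\sum_i\Phi_i=1$ near $\supp(h)\times\{y_0\}$, and decomposes $(h\otimes f)\cdot D=\sum_i f\chi\Phi_i h\cdot D$ so that each summand decays in a known cone. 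You instead (a) produce a product neighbourhood $(U_x\times U_y)\times W$ of $\Gamma\times\Xi$ disjoint from $\WF(h\cdot D)$ by closedness of the wavefront set, compactness of $\Gamma\times\Xi$, and the tube lemma, and then (b) invoke the standard fact that a compactly supported distribution whose wavefront set contains no direction in a given closed cone has rapidly decreasing Fourier transform on that cone. This is legitimate: that standard fact is precisely what the paper's two-level construction re-proves in situ, so your route is shorter and conceptually cleaner at the price of a stronger black box, while the paper's version is self-contained and exhibits $f$ and the constants explicitly. Two small points to tidy: you consistently swap the labels ``first''/``second'' hypothesis relative to the statement (harmless, since you always quote the hypothesis you mean), and ``rapidly decreasing in $W$'' should read ``uniformly rapidly decreasing on closed subcones of $W$'', so $V_{k_0}$ must be chosen so that the closure of the spherical section of the cone $\bR^n\times V_{k_0}$ lies in $W$ --- your subconvergence argument delivers exactly this once applied to the closed neighbourhoods $\overline{V_{k_0}}$.
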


\begin{proof}
One key ingredient to this proof is \cite[Lemma 8.1.1]{Hoermander}:  Let $v \in \mathcal{E}^\prime(Z)$, $Z\subset \bR^m$, and $\phi\in C_0^\infty(Z)$. Then if $(x,k)\in Z\times\bR^m\backslash \{0\}$ is a direction of rapid decrease for $v$, it is a direction of rapid decrease for $\phi\cdot v$.

By the definition of the wavefront set and our assumptions, for any $(x,l)\in \supp(h)\times \bR^n$, there exists a function $\Phi_{(x,l)}\in C_0^\infty(X\times Y)$ with $\Phi_{(x,l)}(x,y_0)=1$ and an open conic neighbourhood $V_{(x,l)}\subset (\bR^n\times\bR^n)\backslash\{0\}$ of $(l,k_0)$ such that for any $N\in \bN$ there is a positive constant $C^{(x,l)}_N>0$ with 
\begin{align*}
\vert\widehat{\Phi_{(x,l)} h\cdot D}\vert(l^\prime,k^\prime)\leq \frac{C^{(x,l)}_N}{(1+\vert(l^\prime,k^\prime)\vert)^N}\quad \forall (l^\prime,k^\prime)\in V_{(x,l)}\, .
\end{align*} 

We can also assume that $\Phi_{(x,l)}\geq 0$. Otherwise, we could by \cite[Lemma 8.1.1]{Hoermander} multiply with another $C_0^\infty$-function  $\chi$ with $\chi(x,y_0)=1$, such that $\chi \Phi_{(x,l)}\geq 0$.

Instead of labelling $V_{(x,l)}$ and $\Phi_{(x,l)}$ by $l$, we can equally well label them by  $\lambda=l/\vert(l,k_0)\vert$. The new label $\lambda$ lies in the open ball of unit radius around the origin in $\bR^n$. So far, this is only a relabelling, which is better suited for the following argument.

Since the sets $V_{(x,\lambda)}$ are conic, we will as a simplification only consider their projection to $\bS^{2n-1}=\{(l^\prime,k^\prime):\vert(l^\prime,k^\prime)\vert=1\}$. The projection of $V_{(x,\lambda)}\ $ to $\bS^{2n-1}$ is an open neighbourhood of $(\lambda,\sqrt{1-\vert\lambda\vert^2}/\vert k_0\vert k_0)$.

By assumption, we know that $(x,l;y,0)\notin \WF(h\cdot D)$. Hence, we find open conic neighbourhoods and compactly supported functions as above for $\vert\lambda\vert=1$. Thus, for all $x$, we now have functions $\Phi_{(x,\lambda)}$ and conic sets $V_{(x,\lambda)}$ for all $\lambda$ in the closed unit ball around the origin in $\bR^n$. The projections of the sets $V_{(x,\lambda)}$ to $\bS^{2n-1}$ then form an open cover of the compact set
\begin{align*}
 \{(\lambda^\prime, \sqrt{1-\vert\lambda^\prime\vert^2}/\vert k_0\vert k_0)\in \bR^n\times\bR^n:\vert\lambda^\prime\vert\leq 1\}\,.
\end{align*} 

As a result, for any $x$, the open cover of this set by $\{\bS^{2n-1} \cap V_{(x,\lambda)}\}_{\vert\lambda\vert\leq 1}$, has a finite open subcover $\{\bS^{2n-1}\cap V_{(x,\lambda_i)}\}_{i=1,\dots ,  K}$ with corresponding functions $\Phi_{(x,\lambda_i)}$.

We then define $\Phi_{x}=\prod\limits_{i=1}^K\Phi_{(x,\lambda_i)}\in C_0^\infty(X\times Y)$. By \cite[Lemma 8.1.1]{Hoermander} with $\phi=\prod\limits_{i\neq j}\Phi_{(x,\lambda_i)}$, $v=\Phi_{(x,\lambda_j)}h\cdot D$, one can show that $\forall N\in\bN$, there are constants $C^{x}_N$ with
\begin{align*}
\vert\widehat{\Phi_{x} (h\otimes f)\cdot D}\vert(l^\prime,k^\prime)\leq \frac{C^{x}_N}{(1+\vert(l^\prime,k^\prime)\vert)^N}\,\quad \forall (l^\prime,k^\prime)\in V_{(x,\lambda_j)}\, . 
\end{align*}
Varying $j$ from $1$ to $K$, this holds for all $(l^\prime,k^\prime)\in V_{x}\equiv \bigcup_i V_{(x,\lambda_i)}$, and hence for all  $(l^\prime,k^\prime)\in\bR^n\times V^{x}_{k_0}$, with $V^{x}_{k_0}$ the open conic neighbourhood of $k_0$ given by
\begin{align*}
 V^{x}_{k_0}=\{k\in \bR^n : (l,k)\in \bigcup_i V_{(x,\lambda_i)}\forall l\in \bR^n\}\,.
\end{align*}

Next, let us define 
\begin{align*}
\mathcal{U}_{x}^\epsilon=\left\{(x^\prime,y^\prime)\in X\times Y: \Phi_{x}(x^\prime,y^\prime)>\epsilon\right\}
\end{align*}
for some small $\epsilon>0$.
$\{\mathcal{U}_{x}^\epsilon\}_{x\in \supp(h)}$ forms an open cover of $\supp(h)\times \{y_0\}$. Hence, we can find a finite open subcover
$\{\mathcal{U}_{x_i}^\epsilon\}_{i=1,\dots , M}$ of $\supp(h)\times \{y_0\}$ and corresponding functions $\Phi_i=\Phi_{x_i}$ which then satisfy
\begin{align*}
\sum\limits_{i=1}^M\Phi_i(x^\prime,y^\prime)\geq \epsilon\quad \forall (x^\prime,y^\prime)\in \supp(h)\times \pi_Y\left(\bigcap\limits_{i=1}^M \mathcal{U}^\epsilon_{x_i}\right)\, ,
\end{align*}
where $\pi_Y:X\times Y\to Y$ is the projection to $Y$.
Let $\chi\in C_0^\infty (X\times Y)$, such that
\begin{align*}
\chi= \begin{cases} \frac{1}{\sum\limits_i \Phi_i}\, : & \sum\limits_{i=1}^M \Phi_i \geq \frac{\epsilon}{2} \\
0\, : &\sum\limits_{i=1}^M \Phi_i\leq \frac{\epsilon}{4}
\end{cases}\, .
\end{align*}
Let $f\in C_0^\infty(Y)$ be supported in $\pi_Y\left(\bigcap\limits_{i=1}^M \mathcal{U}^\epsilon_{x_i}\right)\cap K$ and let $f(y_0)=1$.
Then $\chi(x,y)f(y)\in C_0^\infty(X\times Y)$.

By \cite[Lemma 8.1.1]{Hoermander}, for any $i$ and for any $N\in \bN$, there are positive constants $C^{i}_N$, such that 
\begin{align*}
\vert\widehat{f\chi\Phi_{i} h\cdot D}\vert(l,k)\leq \frac{C^{i}_N}{(1+\vert(l,k)\vert)^N}\quad\forall (l,k)\in V_{x_i}\, .
\end{align*}
Hence
\begin{align*}
\vert\widehat{(h\otimes f)\cdot D}\vert(l,k)&=\vert\widehat{\sum\limits_{i=1}^Mf\chi\Phi_i h\cdot D}\vert(l,k)\\
& \leq \sum\limits_{i=1}^M \vert\widehat{f\chi\Phi_i h\cdot D}\vert(l,k)\\
& \leq \sum_{i=1}^M\frac{ C^{i}_N}{(1+\vert(l,k)\vert)^N}\\
&\leq \frac{\tilde{C}_N}{(1+\vert(l,k)\vert)^N}
\end{align*}
for all $(l,k)\in \bigcap_{i=1}^MV_{x_i}\supset \bR^n\times V_{k_0}$, with $V_{k_0}=\bigcap_{i=1}^MV^{x_i}_{k_0}$. 

It remains to note that the euclidean norm of $(l,k)$ in $\bR^{2n}$ is equivalent to $\vert l\vert+\vert k\vert$, and that by an application of the binomial formula we get for any $a,b>0$ and $N>M\geq 0$, 
\begin{align*}
(1+a+b)^N\geq 1+a^M+b^{N-M}+a^Mb^{N-M}=(1+a^M)(1+b^{N-M})\, .
\end{align*}
\end{proof}

\section{Null geodesics on Kerr-de Sitter}\label{sec:A2}

In this appendix, we collect some results on the null geodesics on Kerr-de Sitter. Most of these results can be found in \cite{Hackmann:2010, Salazar:2017, Borthwick:2018}. They extend the ones obtained in \cite{ONeill:1995} and \cite{Gerard:2020} for Kerr spacetimes to Kerr-de Sitter spacetimes, and are used in section \ref{sec:GemSet}. We will describe the behaviour of inextendible future null geodesics $\gamma$ on $\tilde M$ and focus mostly on their radial motion.

Before we start, we mention that all horizons and the axis $\{\sin\theta=0\}$ are totally geodesic submanifolds of $\tilde M$ by \cite[Thm. 1.7.12]{ONeill:1995}. Therefore, a geodesic that does not lie entirely in one of the horizons or the axis but approaches one of these submanifolds must cross it transversally if it can be extended through that submanifold. We will begin with these geodesics, and discuss the ones contained in a horizon or the axis in the end. Note that any geodesic crossing the axis must have $L=0$. In this case, the analysis in \cite[Sec. 6]{Salazar:2017} shows that the geodesic may be extended through the axis.

Let us start with geodesics $\gamma$ intersecting $\rIII$. Since $\Delta_r<0$ on $\rIII$, we find $R(r)>0$ on $\rIII$, and $R(r)\to+\infty$ as $r\to\infty$ unless $K=E=0$. If $K=E=0$, the equation for $\theta(\tau)$ demands that also $L=0$. In this case, the geodesic is completely contained in one of the horizons, as can be seen by following the analysis in \cite[Sec. 4.2]{ONeill:1995} using the results of \cite{Borthwick:2018} on the principal null directions in Kerr-de Sitter. If we exclude this case, then $\gamma\vert_{\rIII}$ will approach $r\to\infty$ in the future, taking an infinite amount of proper time to do so, compare \cite[Sec. 4]{Salazar:2017}. To the past, the geodesics approach $r=r_c$. Let us define \cite{ONeill:1995}
\begin{align}
\bP(r)=(r^2+a^2)E-aL\, ,\\
\bD(\theta)=L-Ea\sin^2\theta\, ,
\end{align}
and go to $KdS*$- or $*KdS$-coordinates. Then
\begin{subequations}
\label{eq:u(t),v(t)}
\begin{align}
\rho^2\frac{\td v}{\td\tau}&=\frac{a\chi^2\bD}{\Delta_\theta}+\frac{\chi^2(r^2+a^2)}{\Delta_r}\left[\bP \pm\frac{\sqrt{R(r)}}{\chi}\right]\,,\\
\rho^2\frac{\td u}{\td\tau}&=\frac{a\chi^2\bD}{\Delta_\theta}+\frac{\chi^2(r^2+a^2)}{\Delta_r}\left[\bP \mp\frac{\sqrt{R(r)}}{\chi}\right]\,,
\end{align}
\end{subequations}
where the upper (lower) sign is for $\td r/\td \tau>(<)0$,  \cite[Eq. (65)-(70)]{Salazar:2017}\footnote{The same singularity structure holds for the $\tau$-derivative of the azimuthal coordinates of the $KdS*$- and $*KdS$-coordinate systems.}. In $\rIII$, any future-directed geodesic has $\td r/\td\tau>0$. Hence, it depends on the sign of $\bP(r_c)$ whether the $KdS*$- or $*KdS$-coordinates remain finite as $r$ approaches $r_c$ in finite proper time \cite{Salazar:2017}: the geodesic will cross $\mHc^L$ into $\rI$ if $\bP(r_c)>0$. If $\bP(r_c)<0$, the geodesic will cross $\mHc^R$ into $\rI^\prime$, and if $\bP(r_c)=0$, which turns $r_c$ into a simple root of $R(r)$, the geodesic will cross the bifurcation sphere $\mathcal{B}_c$ in finite proper time into $\rIII^\prime$. To observe the last case, one can change to Kruskal coordinates and follow the proof of \cite[Prop. 4.4.4]{ONeill:1995}, see also \cite{Salazar:2017}.  The discussion for region $\rIII^\prime$ in $\tilde M$ is the same, but with inverted time orientation.

Next, we consider  $\gamma$ intersecting $\rII$. Here, we have $R(r)>0$ as well, unless $K=E=L=0$. In the latter case the geodesic must be contained in a horizon. $\gamma\vert_{\rII}$ approaches $r=r_-$ to the future and $r=r_+$ to the past. It will reach the horizons or bifurcation spheres in finite proper time. To the past, the geodesic will cross $\mH^R$ into $\rI$ if $\bP(r_+)>0$ and $\mH^L$ into $\rI^\prime$ if $\bP(r_+)<0$. If $\bP(r_+)=0$, $r_+$ will becomes a simple root of $R(r)$ and the geodesic will cross through $\mathcal{B}_+$ into $\rII^\prime$.

Now, let us discuss geodesics intersecting region $\rI$. Here, $R(r)$ can have two roots, of which one might be located at $r_+$ or $r_c$, or a double root. If $R(r)$ has two roots in $\rI$ then it must be negative between them. All other cases cn be excluded by the structure of $R(r)$. On $\rI$, the vector field $V=(r^2+a^2)\partial_t+a^2\partial_\varphi$ is a future-pointing timelike vector field, and hence $\bP(r)=-g(\gamma^\prime,V)>0$ for the tangent vector $\gamma^\prime$ of $\gamma\vert_{\rI}$, compare \cite{ONeill:1995, Borthwick:2018}. This, together with \eqref{eq:u(t),v(t)}, leads to the following possibilities of radial motion for $\gamma\vert_{\rI}$:
\begin{itemize}
\item $r_+\to r_c$ or $r_c\to r_+$: The geodesic crosses $\rI$ in finite proper time from $\mHc^-$ to $\mH^R$ or from $\mH^-$ to $\mHc^L$.
\item $r_+\to r_+$ or $r_c\to r_c$: The geodesic enters $\rI$ from $ \mH^-$ or $\mHc^-$, is reflected at a simple root of $R(r)$, and exits through $\mH^R$ or $\mHc^L$. This takes finite proper time.
\item $r_+\to r_0$ or $r_c\to r_0$: The geodesic enters $\rI$ from $\mH^-$ or $\mHc^-$ at finite proper time, and then asymptotically approaches the double root $r_0$ of $R(r)$, taking infinite proper time to do so.
\item $r_0\to r_+$ or $r_0\to r_c$: The geodesic exits $\rI$ through $\mH^R$ or $\mH^L_c$, and approaches the double root $r_0$ of $R(r)$ towards the past asymptotically, taking infinite proper time to do so.
\item $r_0\to r_0$: The geodesic remains at $r=r_0$ for all $\tau\in \bR$.
\end{itemize}

Finally, let us discuss geodesics contained in one of the horizons or the rotation axis. First, the geodesics contained in one of the horizons are complete and cross through the corresponding bifurcation sphere. This can be seen by introducing Kruskal-type coordinates,  \cite[Sec. 4.4.2]{Borthwick:2018}. The geodesic contained in the axis satisfy $K=L=0$. In this case $R(r)>0$, and depending on the sign of $\td r/\td\tau$, they follow either lines of constant $v$ or constant $u$ \cite[Sec. 6]{Salazar:2017}.

After this analysis, let us also mention that in the parameter regime where Lemma~\ref{lem:region} holds, for any double root $r_0$ of $R(r)$ in $\rI$ with $E\neq0$, one can check that 
\begin{align*}
\rho^2\frac{\td t}{\td\tau}(r_0,\theta)=\frac{2\chi^2 E}{\Delta_\theta\Delta_r^\prime(r_0)}\left[r_0^2(\chi r_0+3)+a^2\cos^2\theta(\chi r_0-1)\right]\, .
\end{align*}
This is non-vanishing. One can then follow the analysis in the proof of \cite[Lemma C.4 1.i)]{Gerard:2020} to show that any inextendible null geodesic in $\rI$ satisfies $\sup_\gamma(t)=\infty$ and $\inf_\gamma (t)=-\infty$.

\bibliography{bibi}
\bibliographystyle{ieeetr}
\end{document}